\providecommand{\dodraft}{true}
\providecommand{\doarxiv}{false}
  \newcommand{\jeff}[2]{{\color{red}{#1 $\to$ {\bf Jeff}}: #2}}
  \newcommand{\wenruo}[2]{{\color{OliveGreen}{#1 $\to$ {\bf Wenruo}}: #2}}
  \newcommand{\everyone}[2]{{\color{Bittersweet}{#1 $\to$ {\bf Everyone}}: #2}}
  \newcommand{\jeff}[2]{}
  \newcommand{\wenruo}[2]{}
  \newcommand{\everyone}[2]{}
\DeclareMathOperator*{\argmax}{argmax}
\newtheorem{theorem}{Theorem}[section]
\newtheorem{corollary}{Corollary}[theorem]
\newtheorem{definition}[theorem]{Definition}
\newtheorem{lemma}[theorem]{Lemma}
\newtheorem{proposition}[theorem]{Proposition}
\newcommand{\cnstr}{\ensuremath{\Upsilon}}
\newcommand{\curv}{\ensuremath{\kappa}}
\newcommand{\curvf}{\ensuremath{\kappa_{f}}}
\newcommand{\curvg}{\ensuremath{\kappa^g}}
  \newcommand{\arxiv}[1]{#1}
  \newcommand{\notarxiv}[1]{}
  \newcommand{\arxiv}[1]{}
  \newcommand{\notarxiv}[1]{#1}
\newcommand{\arxivalt}[2]{\ifthenelse{\boolean{isarxiv}}{#1}{#2}}
\newcommand{\arxivaltr}[2]{\ifthenelse{\boolean{isarxiv}}{#2}{#1}}
\newcommand{\set}[1]{\left\{#1\right\}}
\title{Greed is Still Good: Maximizing Monotone Submodular+Supermodular Functions}
\author{
  Wenruo Bai\\
  Department of Electrical Engineering\\
  University of Washington\\
  Seattle, WA 98195 \\
  \texttt{wrbai@uw.edu} \\
  \and
  Jeffrey A. Bilmes\\
  Department of Electrical Engineering\\
  University of Washington\\
  Seattle, WA 98195 \\
  \texttt{bilmes@uw.edu} \\
}
\newlength{\spacebeforeproofref}
\begin{document}

\maketitle\maketitle
\thispagestyle{empty}

\begin{abstract}
  We analyze the performance of the greedy algorithm, and also a
  discrete semi-gradient based algorithm, for maximizing the sum of a
  suBmodular and suPermodular (BP) function (both of which are
  non-negative monotone non-decreasing) under two types of
  constraints, either a cardinality constraint or $p\geq 1$ matroid
  independence constraints.  These problems occur naturally in several
  real-world applications in data science, machine learning, and
  artificial intelligence.  The problems are ordinarily inapproximable
  to any factor (as we show).  Using the curvature $\curv_f$ of the
  submodular term, and introducing $\curv^g$ for the supermodular term
  (a natural dual curvature for supermodular functions), however, both
  of which are computable in linear time, we show that BP maximization
  can be efficiently approximated by both the greedy and the
  semi-gradient based algorithm.  The algorithms yield multiplicative
  guarantees of
  $\frac{1}{\curv_f}\left[1-e^{-(1-\curv^g)\curv_f}\right]$ and
  $\frac{1-\curv^g}{(1-\curv^g)\curv_f + p}$ for the two types of
  constraints respectively. For pure monotone supermodular constrained
  maximization, these yield $1-\curvg$ and $(1-\curvg)/p$ for the two
  types of constraints respectively.  We also analyze the hardness of
  BP maximization and show that our guarantees match hardness by a
  constant factor and by $O(\ln(p))$ respectively. Computational
  experiments are also provided supporting our analysis.
\end{abstract}

\newpage
\clearpage
\setcounter{page}{1}
\section{Introduction}

The \emph{Greedy
  algorithm}~\cite{bednorz2008advances,cormen2009introduction} is a
technique in combinatorial optimization that makes a
locally optimal choice at each stage in the hope of finding a good
global solution. It is one of the simplest, most widely applied, and
most successful algorithms in
practice~\cite{kempner2008quasi,zhang2000greedy,karp2000gpsr,ruiz2007simple,wolsey1982analysis}. Due
to its simplicity, and low time and memory complexities, it is used
empirically even when no guarantees are known to exist although, being
inherently myopic, the greedy algorithm's final solution can be
arbitrarily far from the optimum solution~\cite{BANGJENSEN2004121}.

On the other hand, there are results going back many years showing where the
greedy algorithm is, or almost is, optimal, including Huffman
coding~\cite{huffman1952method}, linear
programming~\cite{dunstan1973greedy,dietrich2003greedy}, minimum
spanning trees~\cite{kruskal1956shortest,prim1957shortest}, partially
ordered sets~\cite{faigle1979greedy,dietrich2003greedy}, 
matroids~\cite{edmonds1971matroids,dress1990valuated},
greedoids~\cite{korte2012greedoids},
and so on,
perhaps culminating in the association between the greedy algorithm
and submodular functions~\cite{edmondspolyhedra, nemhauser1978,
  conforti1984submodular,fujishige2005submodular}.

Submodular functions have recently shown utility for a number of
machine learning and data science applications such as information
gathering \cite{krause2006near}, document summarization
\cite{lin2011class}, image segmentation \cite{kohli2009p3}, and string
alignment~\cite{lin2011word}, since such functions are natural for
modeling concepts such as diversity, information, and dispersion.
Defined over an underlying ground set $V$, a set function
$f:2^V\rightarrow \mathbb{R}$ is said to be submodular when for all
subsets $X,Y\subseteq V$, $f(X)+f(Y)\geq f(X\cup Y)+f(X\cap Y)$.
Defining $f(\set{v}|X)=f(\set{v}\cup X)-f(X)$ as the gain of adding
the item $v$ in the context of $X\subset V$, an equivalent
characterization of submodularity is via \emph{diminishing returns}:
$f(\set{v}|X) \geq f(\set{v}|Y)$, for any $X\subseteq Y\subset V$ and
$v\in V\setminus Y$.
A set function $f$ is monotonically non-deceasing if
$f(\set{v}|S)\geq 0$ for all $v\in V\setminus S$ and it is normalized
if $f(\emptyset)=0$.  In addition to being useful utility models,
submodular functions also have amiable optimization properties ---
many submodular optimization problems (both
maximization~\cite{wolsey1982analysis} and
minimization~\cite{cunningham1985submodular}) admit polynomial time
approximation or exact algorithms. Most relevant presently, the greedy
algorithm has a good constant-factor approximation guarantee, e.g.,
the classic \emph{$1-1/e$} and \emph{$1/(p+1)$} guarantees for
submodular maximization under a cardinality constraint or $p$ matroid
constraints~\cite{nemhauser1978analysis,fisher1978analysis}.

Certain subset selection problems in data science are not purely
submodular, however. For example, when choosing a subset of training
data in a machine learning system
\cite{wei2015-submodular-data-active}, there might be not only
redundancies but also complementarities amongst certain subsets of
elements, where the full collective utility of these elements are seen only when
utilized together. Submodular functions can only diminish, rather than
enhance, the utility of a data item in the presence other data
items. Supermodular set functions can model such phenomena, and are
widely utilized in economics and social sciences, where the notion of
complementary \cite{topkis2011supermodularity} is naturally needed,
but are studied and utilized less frequently in machine learning.
A set function $g(X)$ is said to be supermodular if $-g(X)$ is
submodular.

In this paper, we advance the state of the art in understanding when
the greedy (and the semigradient) algorithm offers a guarantee, in
particular for approximating the constrained maximization of an
objective that may be decomposed into the sum of a submodular and a
supermodular function (applications are given in
Section~\ref{sec:applications}). That is, we consider the following
problem
\begin{align}
  \text{Problem 1}.\quad\max_{X\in
\mathcal{C}} h(X) := f(X) + g(X), \label{eqn:submodular_supermodular_maximization}
\end{align}
where $\mathcal{C}\subseteq 2^V$ is a family of feasible sets, $f$ and
$g$ are normalized ($f(\emptyset)=0$), monotonic non-decreasing
($f(\set{s}|S)\geq 0$ for any $s\in V$ and $S\subseteq V$) submodular
and supermodular functions respectively\footnote{Throughout, $f$ \&
  $g$ are assumed monotonic non-decreasing submodular/submodular
  functions respectively.}  and hence are non-negative.  We call this
problem \emph{suBmodular-suPermodular} (BP) maximization, and $f+g$ a
\emph{BP function}, and we say $h$ admits a BP decomposition if $\exists f,g$
such that $h=f+g$ where $f$ and $g$ are defined as above.
In the paper, the set $\mathcal{C}$ may correspond either to a 
cardinality constraint
(i.e., $\mathcal{C} = \set{A\subseteq V \mid |A|\leq k}$ for some $k\geq
0$), %
or alternatively, a more general case where $\mathcal{C}$ is defined as
the intersection of $p$ matroids.  Hence, we may have
$\mathcal{C}=\{X\subseteq V \mid X\in \mathcal{I}_1\cap \mathcal{I}_2 \cap
\dots \cap \mathcal{I}_p \}$, where $\mathcal{I}_i$ is the set of
independent sets for the $i$th matroid
$\mathcal{M}_i=(V,\mathcal{I}_i)$.  A \emph{matroid} generalizes the
concept of independence in vector spaces, and is a pair
$(V,\mathcal{I})$ where $V$ is the ground set and $\mathcal{I}$ is a
family of subsets of $V$ that are \emph{independent} with the
following three properties: (1) $\emptyset\in \mathcal{I} $; (2)
$Y\in \mathcal{I}$ implies $X\in \mathcal{I}$ for all
$X\subseteq Y\subseteq V$; and (3) if $X, Y\in \mathcal{I}$ and
$|X|>|Y|$, then there exists $v\in X\setminus Y$ such that
$Y\cup \set{v} \in \mathcal{I}$. Matroids are often used as
combinatorial constraints, where a feasible set of an optimization
problem must be independent in all $p$ matroids.

The performance of the greedy algorithm for some special cases of BP
maximization has been studied before. For example, when $g(X)$ is modular,
the problem reduces to submodular maximization where, if $f$ and $g$
are also monotone, the greedy algorithm is guaranteed to obtain an
$1-1/e$ approximate solution under a cardinality
constraint~\cite{nemhauser1978analysis} and $\nicefrac{1}{p+1}$ for
$p$ matroids~\cite{fisher1978analysis,conforti1984submodular}. The
greedy algorithm often does much better than this in
practice. Correspondingly, the bounds can be significantly improved if
we also make further assumptions on the submodular function.  One such
assumption is the (total) \emph{curvature}, defined as
$\curv_f=1-\min_{v\in V}\frac{f(v|V\setminus \{v\})}{f(v)}$ --- the
greedy algorithm has a $\frac{1}{\curv_f}(1-e^{-\curv_f})$ and a
$\frac{1}{\curv_f +p}$ guarantee~\cite{conforti1984submodular} for a
cardinality and for $p$ matroid constraints,
respectively. Curvature is also attractive since it is linear time computable
with only oracle function access.
\citet{liu2017improved} shows that $\curv_f$ can be
replaced by a similar quantity, i.e.,
$b = 1-\min_{v\in A\in
  \mathcal{I}}\frac{f(\set{v}|A\setminus\set{v})}{f(\set{v})}$ for a
single matroid $\mathcal M = (V,\mathcal I)$, a quantity defined only
on the independent sets of the matroid, thereby improving the
bounds further. In the present paper, however, we utilize the traditional
definition of curvature. The current best guarantee is $1-\curv_f/e$
for a cardinality constraint using modifications of the continuous
greedy algorithm~\cite{sviridenko2015optimal} and
$\frac{1}{\epsilon+p}$ for multiple matroid constraints based on a
local search algorithm~\cite{lee2010submodular}. In another relevant
result, \citet{sarpatwar2017interleaved} gives a bound of
$\nicefrac{(1-e^{-(p+1)})}{(p+1)}$ for submodular maximization with a
single knapsack and the intersection of $p$ matroid constraints.

When $g(X)$ is not modular, the problem is much harder and is NP-hard
to approximate to any factor (Lemma~\ref{lemma:bpNotslovable}).  In
our paper, we show that bounds are obtainable if we make analogous
further assumptions on the supermodular function $g$.  That is, we
introduce a natural curvature notion to monotone non-decreasing
nonnegative supermodular functions, defining the {\bf supermodular
  curvature} as
$\curvg = \curv_{g(V)-g(V\setminus X)} = 1-\min_{v\in
  V}\frac{g(v)}{g(v|V\setminus \{v\})}$.  We note that $\curvg$ is
distinct from the steepness~\cite{il2001approximation,sviridenko2015optimal} of a
nonincreasing supermodular function (see Section~\ref{sec:curvature}).
The function $g(V)-g(V\setminus X)$ is a normalized monotonic
non-decreasing submodular function, known as the submodular function
dual to the supermodular function $g$
\cite{fujishige2005submodular}. \emph{Supermodular curvature} is a
natural dual to \emph{submodular curvature} and, like submodular
curvature, is computationally feasible to compute, requiring only
linear time in the oracle model, unlike other measures of non-submodularity
(Section~\ref{sec:relatedProblem}). Hence, given a BP decomposition of
$h=f+g$, it is possible, as we show below, to derive practical and
useful quality assurances based on the curvature of each component of
the decomposition.

We examine two algorithms, $\textsc{GreedMax}$ (Alg.~\ref{alg:via_greedy}) and $\textsc{SemiGrad}$ (Alg.~\ref{alg:ds}) and show that,
despite the two algorithms being different, 
both of them have a worst case guarantee of
$\frac{1}{\curv_f}\left[1-e^{-(1-\curv^g)\curv_f}\right]$ for a cardinality
constraint (Theorem~\ref{theorem:bound_card2}) and
$\frac{1-\curv^g}{(1-\curv^g)\curv_f + p}$ for $p$ matroid constraints
(Theorem~\ref{theo:greedyPmatroid}). If $\curv^g = 0$ (i.e., $g$ is modular),
the bounds reduce to $\frac{1}{\curv_f}(1-e^{-\curv_f})$ and $\frac{1}{\curv_f
	+p}$, which recover the aforementioned bounds. If $\curv^g = 1$ (i.e., $g$ is
fully curved) the bounds are $0$ since, in general, the problem is NP-hard to
approximate (Lemma~\ref{lemma:bpNotslovable}). For pure monotone supermodular
function maximization, the bounds yield $1-\curvg$ and $(1-\curvg)/p$ respectively.
We also show that no polynomial algorithm can
do better than $1-\curv^g+\epsilon$ or $(1-\curv^g)O(\frac{\ln p}{p})$ for
cardinality or multiple matroid constraints respectively unless P=NP. Therefore,
no polynomial algorithm can beat \textsc{GreedMax} by a factor of
$\frac{1+\epsilon}{1-e^{-1}}$ or $O(\ln(p))$ for the two constraints unless
P=NP.

\begin{table*}
  \label{table:results}
  \centering
  \begin{tabular}{|c|c|c|}
    \hline
    & bound & hardness  \\ \hline 
    cardinality constraint  &  $\frac{1}{\curv_f}\left[1-e^{-(1-\curv^g)\curv_f}\right]$      &  $1-\curv^g+\epsilon$ \\ \hline
    $p$ matroid constraints &  $\frac{1-\curv^g}{(1-\curv^g)\curv_f + p}$  & $(1-\curv^g)O(\frac{\ln p}{p})$ \\ \hline
  \end{tabular}
  \caption{Lower bounds for
    $\textsc{GreedMax}$ (Alg.~\ref{alg:via_greedy})/\textsc{SemiGrad} (Alg.~\ref{alg:ds})
    and BP maximization hardness.\looseness-1}
  \vspace{-1\baselineskip}
\end{table*}

\subsection{Applications}
\label{sec:applications}
Problem~\ref{eqn:submodular_supermodular_maximization} naturally
applies to a number of machine learning and data science applications.

\paragraph{Summarization with Complementarity} 

Submodular functions are an expressive set of models for summarization
tasks where they capture how data elements are mutually redundant. In
some cases, however, certain subsets might be usefully chosen
together, i.e., when their elements have a complementary
relationship. For example, when choosing a subset of training data
samples for supervised machine learning
system~\cite{wei2015-submodular-data-active}, nearby points on
opposite sides of a decision boundary would be more useful to
characterize this boundary if chosen together. Also, for the problem
of document summarization~\cite{lin2011class,lin2012learning}, where a subset of
sentences is chosen to represent a document, there are some cases
where a single sentence makes sense only in the context of other
sentences, an instance of complementarity.  In such cases, it is
reasonable to allow these relationships to be expressed via a monotone
supermodular function. One such complementarity family takes $g$ to be
a weighted sum of monotone convex functions composed with non-negative
modular functions, as in $g(A) = \sum_{i} w_i \psi_i(m_i(A))$.  A
still more expressive family includes the ``deep supermodular
functions'' \cite{bilmes-dsf-arxiv-2017} which consist of multiple
nested layers of such transformations.  A natural formulation of the
summarization with complementary problem is to maximize an objective
that is the weighted sum of a monotone submodular utility function and
one of the above complementarity functions.  Hence, such a formulation
is an instance of
Problem~\ref{eqn:submodular_supermodular_maximization}.  In either
case, the supermodular curvature is easy to compute, and for many
instances is less than unity leading to a quality assurance based on
the results of this paper.

\paragraph{Generalized Bipartite Matching}

Submodularity has been used to generalize bipartite matching.  For
example, a generalized bipartite matching~\cite{lin2011word} procedure
starts with a non-negative weighted bipartite graph $(V,U,E)$, where
$V$ is a set of left vertices, $U$ is a set of right vertices,
$E \subseteq V \times U$ is a set of edges, and
$h: 2^E \to \mathbb R_+$ is a score function on the edges.  Note that
a matching constraint is an intersection of two partition matroid
constraints, so a matching can be generalized to the intersection of
multiple matroid constraints. Word alignment between two sentences of
different languages~\cite{melamed2000models} can be viewed as a
matching problem, where each word pair is associated with a score
reflecting the desirability of aligning that pair, and an alignment is
formed as the highest scored matching under some
constraints. \citet{lin2011word} use a submodular objective functions
that can represent complex interactions among alignment
decisions. Also in \cite{bai-sgm-acm-bcb-2016}, similar bipartite
matching generalizations are used for the task of peptide
identification in tandem mass spectrometry.  By utilizing a BP
function in Problem~\ref{eqn:submodular_supermodular_maximization},
our approach can extend this to allow also for complementarity to be
represented amongst sets of matched vertices.

\subsection{Approach, and Related Studies}
\label{sec:relatedProblem}

An arbitrary set function can always be expressed as a difference of
submodular (DS) functions~\cite{narasimhanbilmes,rkiyeruai2012}.
Although finding such a decomposition itself can be hard
\cite{rkiyeruai2012}, the decomposition allows for additional
optimization strategies based on discrete semi-gradients
(Equation~\eqref{eq:semigrad}) that do not offer guarantees, even in
the unconstrained case~\cite{rkiyeruai2012}.  Our problem is a special
case of constrained DS optimization since a negative submodular
function is supermodular. Our problem also asks for a BP decomposition
of $h$ which is not always possible even for monotone functions
(Lemma~\ref{lemma:BPnotalways}). Constrainedly optimizing an arbitrary
monotonic non-deceasing set function is impossible in polynomial time
and not even approximable to any positive factor
(Lemma~\ref{lemma:bpNotslovable}).  In general, there are two ways to
approach such a problem: one is to offer polynomial time heuristics
without any theoretical guarantee (and hence possibly performing
arbitrarily poorly in worst case); another is to analyze (using
possibly exponential time itself, e.g., see below starting with the
submodularity ratio) the set function in order to provide theoretical
guarantees.  In our framework, as we will see, the BP decomposition
not only allows for additional optimization strategies as does a DS
decomposition, but also, given additional information about the
curvature of the two components (computable easily in linear time),
allows us to show how the set function can be approximately maximized
in polynomial time with guarantees.  With a curvature analysis, not
only the greedy algorithm but also a semi-gradient optimization
strategy (Alg.~\ref{alg:ds}) attains a guarantee even in the
constrained setting.  We also argued, in
Section~\ref{sec:applications}, that BP functions, even considering
their loss of expressivity relative to DS functions, are still quite
natural in applications.

\paragraph{Submodularity ratio and curvature}
\citet{2017arXiv170302100B} introduced a form of bound based on both
the submodularity ratio and introduced a generalized curvature.  The
submodularity ratio \cite{das2011submodular} of a non-negative set
function $h$ is defined as the largest scalar $\gamma$ s.t.\
$\sum_{\omega\in \Omega\setminus S}h(\Omega|S)\geq \gamma
h(\omega|S),\forall \Omega,S\subseteq V$ and is equal to one if and
only if $h$ is submodular. It is often defined as
$\gamma_{U,k}(h) = \min_{L\subseteq U, S:|S|\leq k, S\cap
  L=\emptyset}\frac{\sum_{x\in S}h(x|L)}{h(S|L)}$ for $U \subseteq V$
and $1\leq k \leq |V|$, and then $\gamma = \gamma_{V,|V|}(h)$. The
generalized curvature \cite{2017arXiv170302100B} of a non-negative set
function $h$ is defined as the smallest scalar $\alpha$ s.t.\
$h(i|S\setminus\set{i}\cup \Omega)\geq (1-\alpha)h(i|S\setminus
\set{i}), \forall \Omega,S\subseteq V, i \in S\setminus \Omega$.
\cite{2017arXiv170302100B} offers a lower bound of
$\frac{1}{\alpha}(1-e^{-\alpha \gamma})$ for the greedy algorithm.
Computing this bound is not computationally feasible in general
because both the submodularity ratio and the generalized curvature are
information theoretically hard to compute under the oracle model, as
we show in Section~\ref{sec:hardn-gener-curv}. This is unlike
curvatures $\curvf,\curvg$ which are both computable in linear time
given only oracle access to both $f$ and $g$.  We make further
comparisons between the pair $\curvf,\curvg$ with the submodularity
ratio in Section~\ref{sec:comp-subm-ratio}.

\paragraph{Approximately submodular functions}
A function $h$ is said to be $\epsilon$-approximately submodular if
there exists a submodular function $f$ such that
$(1-\epsilon)f(S) \leq h(S) \leq (1+\epsilon)f(S)$ for all subsets
$S$.
\citet{horel2016maximization} show that
the greedy algorithm achieves a $(1-1/e-O(\delta))$ approximation
ratio when $\epsilon = \frac{\delta}{k}$. Furthermore, this bound is
tight: given a $\nicefrac{1}{k^{1-\beta}}$-approximately submodular
function, the greedy algorithm no longer provides a constant factor
approximation guarantee.

\paragraph{Elemental Curvature and Total Primal Curvature}
\citet{wang2016approximation} analyze the approximation
ratio of the greedy algorithm on maximizing non-submodular functions under
cardinality constraints. Their bound is
$1-\left(1-\left(\sum_{i=1}^{k-1}\alpha^i\right)^{-1}\right)^k$ based on
the \emph{elemental curvature} with
$\alpha=\max_{S\subseteq X, i,j \in X}
\frac{f(i|S\cup\set{j})}{f(i|S)}$, and $\alpha^i$ the $i^\text{th}$ power
of $\alpha$.  
\citet{smith2017breaking} generalize this definition to
\emph{total primal curvature},
$\Gamma(x|B,A) = \frac{f(x|A\cup B)}{f(x|A)}$ and define an estimator $\hat{\Gamma}(i,S)$ satisfying
$\forall |T|\leq k,S\subset T,i=|T\setminus S|, x\notin T\cup
S:\Gamma(x|T,S)\leq \hat{\Gamma}(i,S)+\epsilon_i$. They claim
a bound of
$\left[1+\left(\frac{f(S^+)}{f(S)}-1\right)\sum_{t=0}^{k-1}(\hat{\Gamma}(t,S)+\epsilon_t)\right]^{-1}f(S^*)\leq
f(S)$ where $S$ is the greedy solution, and $S^+$ is the greedy solution
for an identical problem for $k+1$ cardinality constraints. They also claim
that finding a deterministic strict estimator $\hat{\Gamma}$ is not
feasible and therefore, they provide an algorithm for finding a
probabilistic estimator based on Monte-Carlo simulation.

\paragraph{Supermodular Degree}
Feige and et al.~\cite{feige2013welfare} introduce a parameter, the supermodular degree, for solving the welfare maximization problem. Feldman and et al.~\cite{feldman2014constrained,DBLP:conf_approx_FeldmanI14} use this concept to analyze monotone set function maximization under a $p$-extendable system constraint with guarantees.
A supermodular degree of one element $u\in V$ by a set function $h$ is defined as the cardinality of the set $\mathcal{D}^+_h(u)=\set{v\in V|\exists_{S\subseteq V} h(u|S+v)> h(u|S)}$, containing all elements whose existence in a set might increase the marginal contribution of $u$. The supermodular degree of $h$ is $\mathcal{D}^+_h=\max_{u\in V}|D^+_h(u)|$. A set system $(V, \mathcal{I})$ is called $p$-extendable~\cite{feldman2014constrained,DBLP:conf_approx_FeldmanI14} if for every two subsets $T\subseteq S\in \mathcal{I}$ and element $u\notin T$ for which $T\cup{u}\in \mathcal{I}$, there exists a subset $Y\subseteq S\setminus T$ of cardinality at most $p$ for which $S\setminus Y+u\in \mathcal{I}$, which is a generalization of the intersection of $p$ matroids. They offer a greedy algorithm for maximizing a monotonic non-decreasing set function $h$ subject to a $p$-extendable system with an guarantee of $\frac{1}{p(\mathcal{D}^+_h+1)+1}$ and time complexity polynomial in $n$ and $2^{\mathcal{D}^+_h}$~\cite{feldman2014constrained,DBLP:conf_approx_FeldmanI14}, where $n=|V|$. But again, $\mathcal{D}^+_h$ can not be calculated in polynomial time in general unlike our curvatures. Moreover, if we consider a simple supermodular function $g(X)=|X|^{1+\alpha}$ where $\alpha$ is a small positive number. Then $\mathcal{D}^+_h=n-1$ since all elements have supermodular interactions. Therefore, the time complexity of their algorithm is polynomial in $2^{n-1}$ and their bound is $\frac{1}{pn+1}$,  while our algorithm requires at most $n^2$ quires with a performance guarantee of $\frac{1-\log(n) \curv^g}{p}$ where $\curv^g = 1-\frac{1}{n^{1+\alpha}-(n-1)^{1+\alpha}}$. When $\alpha$ is small, our bound is around $n$ times betters than theirs; e.g., $n=10$, $p=5$, $\alpha=0.05$, ours is around $\frac{1}{7.61}$ while theirs is $\frac{1}{51}$.

\paragraph{Proportional Submodularity}
\citet{DBLP:journals/corr/BorodinLY14} define the notion of
proportionally submodular functions defined as those set functions $h$
satisfying
$|X|h(Y)+|Y|h(X) \geq |X \cap Y|h(X \cup Y) + |X \cup Y|h(X \cap Y)$
for all $X,Y \subseteq V$. The class of proportionally submodular
functions includes both submodular functions and also some
supermodular functions, although there are instances of BP functions,
e.g., $h(X) = |X|^4$, that are not proportionally submodular
(\cite{DBLP:journals/corr/BorodinLY14} proposition 3.12).

\paragraph{Discussion}
The above results are both useful and complementary with our analyses
below for BP-decomposable functions, thus broadening our understanding
of settings where the greedy and semi-gradient algorithms offer a
guarantee. We say our analysis is complementary in a sense the
following example demonstrates. Should a given function $h$ have a BP
decomposition $h=f+g$, then it is easy, given oracle access to both
$f$ and $g$, to compute curvatures and establish bounds. On the other
hand, if we do not know $h$'s BP decomposition, or if $h$ does not
admit a BP decomposition (Lemma~\ref{lemma:BPnotalways}), then we
would need to resort, for example, to the submodularity ratio and
generalized curvature bounds of \citet{2017arXiv170302100B}.

\section{Approximation Algorithms for BP Maximization}

\begin{minipage}[tb]{.52\linewidth}
	\begin{algorithm}[H]
		\caption{\textsc{GreedMax} for BP maximization}
		\begin{algorithmic}[1]
			\STATE \textbf{Input:} $f$, $g$ and constraint set $\mathcal{C}$.
			\STATE \textbf{Output: } An approximation solution $\hat{X}$.
			\STATE Initialize: $X_0 \leftarrow \emptyset$, $i\leftarrow 0$ and $R\leftarrow V$
			\WHILE{$\exists v\in R$ s.t.\ $X_i\cup v\in \mathcal{C}$}
			\STATE $v \in \argmax_{v\in R,X_i\cup v\in \mathcal{C}} f(v|X_i)+g(v|X_i)$.
			\STATE $X_{i+1} \leftarrow X_i \cup v$.
			\STATE $R \leftarrow R \setminus v$.
			\STATE $i\leftarrow i+1$.
			\ENDWHILE
			\STATE Return $\hat{X} \leftarrow X_i$.
		\end{algorithmic}
		\label{alg:via_greedy}
	\end{algorithm} 
\end{minipage}
\begin{minipage}[tb]{.48\linewidth}
	\begin{algorithm}[H]
		\caption{\textsc{SemiGrad} for BP maximization}
		
		\begin{algorithmic}[1]
			\STATE \textbf{Input:} $f$, $g$, constraint set $\mathcal{C}$ and an initial set $X_0$
			\STATE \textbf{Output: } An approximation solution $\hat{X}$.
			\STATE \textbf{Initialize:} $i\leftarrow 0$.
			\REPEAT
			\STATE pick a semigradient $g_i$ at $X_i$ of $g$
			\STATE $X_{i+1} \in\argmax_{X\in \mathcal{C}} f(X)+g_i(X)\backslash\backslash$ $\frac{1}{\curv_f}(1-e^{-\curv_f})-$Approximately solved by Algorithm~\ref{alg:via_greedy}
			\STATE $i\leftarrow i+1$
			\UNTIL{we have converged ($X_i=X_{i-1}$)}
			
			\STATE Return $\hat{X}\leftarrow X_i$
		\end{algorithmic}
		\label{alg:ds}
		
	\end{algorithm}
\end{minipage}
\paragraph{\textsc{GreedMax} (Alg.~\ref{alg:via_greedy})} The simplest
and most well known algorithm for approximate constrained non-monotone
submodular maximization is the greedy algorithm
\cite{nemhauser1978analysis}. We show that this also works boundedly
well for BP maximization when the functions are not both fully curved ($\curvf \leq 1, \curvg<1$).  At
each step, a feasible element with highest gain with respect to the
current set is chosen and added to the set. Finally, if no more
elements are feasible, the algorithm returns the greedy set.

\paragraph{\textsc{SemiGrad} (Alg.~\ref{alg:ds})}

Akin to convex functions, supermodular functions have tight modular
lower bounds. These bounds are related to the subdifferential
$\partial_g(Y)$ of the supermodular set function $g$ at a set
$Y \subseteq V$, which is defined
\cite{fujishige2005submodular}\footnote{\cite{fujishige2005submodular}
	defines the subdifferential of a submodular set function. The
	subdifferential definition for a supermodular set function takes the
	same form, although instances of supermodular subdifferentials
	(e.g., Eq.~\eqref{eq:supsubdifone}-\eqref{eq:supsubdiftwo}) take a
	form different than instances of submodular subdifferentials.} as:
\begin{align}
\partial_g(Y) &= \{y \in \mathbb{R}^n:g(X) - y(X) \geq g(Y) - y(Y)\;\text{for all } X \subseteq V\} %
\label{eq:semigrad}
\end{align}

It is possible, moreover, to provide specific
semigradients~\cite{rkiyersubmodBregman2012, rkiyersemiframework2013}
that define the following two modular lower bounds:
\begin{align}
	m_{g,X, 1}(Y) \triangleq g(X) - \!\!\!\! \sum_{j \in X \backslash Y } g(j| X \backslash j) + \!\!\!\! \sum_{j \in Y \backslash X} g(j| \emptyset)\scalebox{1.3}, 
	\label{eq:supsubdifone}
	\\
	m_{g,X, 2}(Y) \triangleq g(X) - \!\!\! \sum_{j \in X \backslash Y } g(j| V \backslash j) + \!\!\!\! \sum_{j \in Y \backslash X} g(j| X).
	\label{eq:supsubdiftwo}
\end{align}
\normalsize Then $m_{g,X, 1}(Y),m_{g,X, 2}(Y) \leq g(Y), \forall Y \subseteq V$ and $m_{g,X, 1}(X) = m_{g,X, 2}(X) = g(X)$. Removing constants yields normalized non-negative (since $g$ is monotone) modular functions for $g_i$ in Alg.~\ref{alg:ds}.

Having formally defined the modular lower bound of $g$, we are ready
to discuss how to apply this machinery to BP
maximization. \textsc{SemiGrad} consists of two stages. In the first
stage, it is initialized by an arbitrary set (e.g., $\emptyset$, $V$, or the solution of
\textsc{GreedMax}). In the second stage, \textsc{SemiGrad} replaces $g$
by its modular lower bound, and solves the
resulting problem using \textsc{GreedyMax}. The algorithm repeatedly
updates the set and calculates an updated modular lower bound until
convergence.\looseness-1

Since \textsc{SemiGrad} does no worse than the arbitrary initial set,
we may start with the solution of \textsc{GreedMax} and show that
\textsc{SemiGrad} is always no worse than
\textsc{GreedMax}. Interestingly, we obtain the same bounds
for \textsc{SemiGrad} even if we start with the empty set
(Theorems~\ref{theo:SubCard} and~\ref{theo:SubPmatroid}) despite 
that they may behave quite differently empirically
and yield different solutions 
(Section~\ref{sec:experiments}).

\section{Analysis of Approximation Algorithms for BP Maximization}
\label{sec:analys-appr-algor}

We next analyze the performance of two algorithms \textsc{GreedMax}
(Alg.~\ref{alg:via_greedy}) and \text{SemiGrad}(Alg.~\ref{alg:ds})
under a cardinality constraint and under $p$ matroid constraints.
First, we claim that BP maximization is hard and can not be
approximately solved to any factor in polynomial time in general.
\begin{restatable}{lem}{lembpNotslovable}
  \label{lemma:bpNotslovable}\cite{usul33967}
	There exists an instance of a BP maximization problem that can not be approximately solved to any positive factor in polynomial time.
\end{restatable}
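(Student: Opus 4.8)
I would prove this by an information-theoretic (value-oracle) lower bound: exhibit a large family of BP instances, all with the same ground set and cardinality constraint, such that any algorithm making polynomially many oracle queries fails, on most members of the family, to locate the unique optimal set.

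Concretely, fix $V$ with $|V|=n$ even, let $k=n/2$ and $\mathcal{C}=\{X\subseteq V : |X|\le k\}$. Take $f\equiv 0$ (trivially monotone, normalized, submodular), and for a ``planted'' set $A^*\subseteq V$ with $|A^*|=k$ define
\[
  g_{A^*}(X)\;=\;
  \begin{cases}
    |X|-k, & |X|>k,\\[3pt]
    \tfrac12\,\mathbbm{1}[X=A^*], & |X|\le k,
  \end{cases}
\]
and set $h=f+g_{A^*}=g_{A^*}$, so that $\max_{X\in\mathcal{C}}h(X)=g_{A^*}(A^*)=\tfrac12$, attained only at $X=A^*$. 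The first and main task is to check that $g_{A^*}$ is a genuine normalized, monotone, supermodular function. Normalization is immediate, and listing the marginal gains $g_{A^*}(v\mid X)$ shows they take only the values $0$, $\tfrac12$ (precisely when $X=A^*\setminus\{v\}$, or when $X=A^*$ with $|X|=k$, the gain toward size $k{+}1$ being $1-\tfrac12$), and $1$; in particular all gains are $\ge 0$, giving monotonicity. Supermodularity, i.e.\ $g_{A^*}(v\mid X)\le g_{A^*}(v\mid Y)$ whenever $X\subseteq Y$ and $v\notin Y$, then follows from a short case analysis on $|X|$: a gain of $1$ is never followed by a strictly smaller gain on a superset, and a gain of $\tfrac12$ occurs only along the chain $\cdots\subset A^*\setminus\{v\}\subset A^*\subset A^*\cup\{w\}\subset\cdots$, whose increment sequence is $0,\dots,0,\tfrac12,\tfrac12,1,1,\dots$ and is nondecreasing precisely because $\tfrac12\le 1-\tfrac12$ (this is exactly why the constant $\tfrac12$ is chosen). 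I expect this verification to be the only real work in the proof.

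With the family in hand, the hardness is the familiar needle-in-a-haystack argument. On every set $X$ with $|X|>k$ the oracle returns $|X|-k$, and on every $X$ with $|X|\le k$, $X\ne A^*$, it returns $0$; both are independent of $A^*$. Hence the only query that reveals anything about $A^*$ is a query to $A^*$ itself. Drawing $A^*$ uniformly from the $\binom{n}{n/2}=2^{\Omega(n)}$ candidate $k$-subsets, any (adaptive) algorithm making $\mathrm{poly}(n)$ queries has its query sequence fixed until it first queries $A^*$, so it hits $A^*$ — and can then output it — with probability only $2^{-\Omega(n)}$; otherwise it returns a feasible set of value $0$. Its expected output value is thus $2^{-\Omega(n)}\cdot\tfrac12$, which is below $\alpha\cdot\tfrac12$ for any fixed $\alpha>0$ once $n$ is large, and by Yao's principle the same holds for randomized algorithms. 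Therefore, for every polynomial-time algorithm and every $\alpha>0$ there is an instance of BP maximization on which it fails to achieve an $\alpha$-approximation, which is the claim. (If one prefers a $\mathrm{P}\ne\mathrm{NP}$-style statement with an explicitly evaluable objective, one can instead reduce from an NP-hard ``all-or-nothing'' problem such as $3$-dimensional matching; the oracle version above is unconditional and, I think, cleanest.)
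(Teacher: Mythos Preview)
Your proposal is correct and is essentially identical to the paper's proof: the same planted-set construction $g(X)=\max(|X|-k,0)$ with the single modification $g(A^*)=\tfrac12$ at a hidden $k$-subset, the same case analysis on $|X|$ to verify supermodularity (gains in $\{0,\tfrac12,1\}$ stratified by $|X|\le k{-}2$, $k{-}1$, $k$, $\ge k{+}1$), and the same permutation-symmetry/needle-in-a-haystack argument showing polynomially many queries locate $A^*$ with probability $2^{-\Omega(n)}$. Your write-up is slightly more explicit about why $\tfrac12$ is the right constant and invokes Yao's principle by name, but there is no substantive difference in approach.
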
%
\vspace{\spacebeforeproofref}
\begin{proof}
  For completeness, Appendix~\ref{sec:bpNotslovable} offers a detailed proof
  based on \cite{usul33967}.
\end{proof}

It is also important to realize that not all monotone functions are
BP-decomposable, as the following demonstrates.
\begin{restatable}{lem}{lembpnotalways} There exists a monotonic non-decreasing set function $h$ that is not BP decomposable.
\label{lemma:BPnotalways}
\end{restatable}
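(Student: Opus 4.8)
The plan is to settle this by exhibiting an explicit counterexample rather than by any general structural characterization. The cheapest candidate is the threshold function $h(X) = \mathbbm{1}[\,|X| \ge 2\,]$ on a ground set $V$ with $|V| = 3$, say $V = \{1,2,3\}$. It is immediate that $h$ is normalized ($h(\emptyset) = 0$), non-negative, and monotonically non-decreasing (any superset has value $\ge$ that of a subset), so the entire content of the proof is to show that $h$ admits no BP decomposition.

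The key observation is how monotonicity interacts with (sub/super)modularity at the level of marginal gains. Suppose, for contradiction, that $h = f + g$ with $f$ a normalized monotone non-decreasing submodular function and $g$ a normalized monotone non-decreasing supermodular function; recall both are then non-negative and, more importantly, have only non-negative marginal gains. Submodularity forces $f(v \mid \cdot)$ to be non-increasing along any increasing chain of contexts, while supermodularity forces $g(v \mid \cdot)$ to be non-decreasing along the same chain. I would then fix the element $v = 1$ and the chain of contexts $\emptyset \subset \{2\} \subset \{2,3\}$, along which the marginal gains of $h$ follow the pattern $h(1 \mid \emptyset) = 0$, $h(1 \mid \{2\}) = 1$, $h(1 \mid \{2,3\}) = 0$.

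The contradiction is then obtained by propagating zeros. Since $h(1 \mid \emptyset) = f(1 \mid \emptyset) + g(1 \mid \emptyset) = 0$ with both summands non-negative, $f(1 \mid \emptyset) = g(1 \mid \emptyset) = 0$; submodularity of $f$ then pins $0 \le f(1 \mid \{2\}) \le f(1 \mid \emptyset) = 0$, forcing $g(1 \mid \{2\}) = h(1 \mid \{2\}) - f(1 \mid \{2\}) = 1$. On the other hand $h(1 \mid \{2,3\}) = 0$ with non-negative summands forces $g(1 \mid \{2,3\}) = 0$. But supermodularity of $g$ requires $g(1 \mid \{2\}) \le g(1 \mid \{2,3\})$, i.e.\ $1 \le 0$, a contradiction. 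Hence $h$ has no BP decomposition.

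I do not anticipate a genuine obstacle: the combinatorial argument is only a few lines once the example is chosen, and the only real work is picking an $h$ whose marginals for some element exhibit a ``zero--positive--zero'' pattern along a chain. The one conceptual point worth flagging — the reason the statement is not vacuous, given that every set function is a difference of submodular functions — is that the argument uses non-negativity of the marginals of \emph{both} $f$ and $g$ in an essential way, which is precisely the extra structure that a DS decomposition need not possess. It is also worth remarking in passing that any monotone $h$ exhibiting this ``zero--positive--zero along a chain'' pattern of marginal gains fails to be BP-decomposable for the same reason, so the phenomenon is generic rather than an artifact of this particular $h$.
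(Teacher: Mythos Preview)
Your proof is correct and follows essentially the same approach as the paper: exhibit a monotone function whose marginal gains for a fixed element follow a $0$--$1$--$0$ pattern along a chain, then use non-negativity of marginals together with the monotonicity (in context) of $f$'s and $g$'s gains to derive a contradiction. The paper uses $h(X)=\min(\max(|X|,1),3)-1$ on $|V|\ge 4$ and packages the contradiction as the single inequality $h(v|\emptyset)+h(v|B)\ge h(v|A)$ that any BP function must satisfy, whereas you use the slightly smaller example $h(X)=\mathbbm{1}[|X|\ge 2]$ on $|V|=3$ and propagate zeros explicitly --- but the mechanism is identical.
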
%
\vspace{\spacebeforeproofref}
\begin{proof}
See Appendix~\ref{sec:BPnotalways}.
\end{proof}

\subsection{Supermodular Curvature}
\label{sec:curvature}
Although BP maximization is therefore not possible in general, we show
next that we can get worst-case lower bounds using curvature whenever
the functions in question indeed have limited curvature.\looseness-1

The (total) curvature of a submodular function $f$ is defined as
$\curv_f=1-\min_{v\in V}\frac{f(v|V\setminus
  \{v\})}{f(v)}$~\cite{conforti1984submodular}. Note that
$0\leq\curv_f\leq 1$ since $0\leq f(v|V\setminus \{v\})\leq f(v)$ and
if $\curv_f=0$ then $f$ is modular. We observed that for any
monotonically non-decreasing supermodular function $g(X)$,
the dual submodular function~\cite{fujishige2005submodular}
$g(V)-g(V\setminus X)$ is always monotonically non-decreasing
and submodular. Hence, the definition of submodular curvature can be
naturally extended to supermodular functions $g$:
\begin{definition}
  \emph The \emph{supermodular curvature} of a non-negative monotone
  nondecreasing supermodular function is defined as
  $\curv^g=\curv_{g(V)-g(V\setminus X)}=1-\min_{v\in
    V}\frac{g(v)}{g(v|V\setminus\{v\})}$.
\end{definition}
For clarity of notation, we use a superscript for supermodular
curvature and a subscript for submodular curvature, which also
indicates the duality between the two. In fact, for
supermodular curvature, we can recover the submodular curvature.
\begin{corollary}
	$\curv_f=\curv^{f(V)-f(V\setminus X)}$.
\end{corollary}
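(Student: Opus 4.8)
The plan is to unfold both curvature definitions and verify they coincide by a direct computation, after first checking that the right-hand side $\curv^{f(V)-f(V\setminus X)}$ is even well-defined. Write $g(X) := f(V)-f(V\setminus X)$. First I would confirm that $g$ lies in the class on which supermodular curvature is defined. Complementation $X\mapsto V\setminus X$ turns the submodular $f$ into the submodular map $X\mapsto f(V\setminus X)$ (the standard check: $f(V\setminus A)+f(V\setminus B)\ge f(V\setminus(A\cap B))+f(V\setminus(A\cup B))$ follows from submodularity of $f$ applied to $V\setminus A,V\setminus B$), so $g$ is supermodular; it is normalized since $g(\emptyset)=f(V)-f(V)=0$; and it is monotone non-decreasing since enlarging $X$ shrinks $V\setminus X$, which by monotonicity of $f$ can only decrease $f(V\setminus X)$ and hence increase $g(X)$. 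Monotonicity plus normalization gives non-negativity, so $\curv^g$ is well-defined.

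Next I would evaluate the numerator and denominator in $\curv^g=1-\min_{v\in V}\frac{g(v)}{g(v\mid V\setminus\{v\})}$ in terms of $f$. For the numerator, $g(\{v\})=f(V)-f(V\setminus\{v\})=f(v\mid V\setminus\{v\})$. For the denominator, $g(v\mid V\setminus\{v\})=g(V)-g(V\setminus\{v\})$ where $g(V)=f(V)-f(\emptyset)=f(V)$ and $g(V\setminus\{v\})=f(V)-f(\{v\})$, so $g(v\mid V\setminus\{v\})=f(\{v\})=f(v)$. Substituting, $\frac{g(v)}{g(v\mid V\setminus\{v\})}=\frac{f(v\mid V\setminus\{v\})}{f(v)}$ for every $v\in V$, and taking $1$ minus the minimum over $v$ of both sides yields $\curv^g=\curv_f$, which is the claim.

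The computation itself is routine; the only point needing care is the well-definedness check, and in particular that the ratios in both curvature definitions have positive denominators. This is the usual standing assumption that $f(v)>0$ for all $v\in V$ (elements with $f(v)=0$ contribute nothing and may be discarded), and it is exactly what is also needed for $g(v\mid V\setminus\{v\})=f(v)$ to be nonzero, so no separate hypothesis is required. Once that is in place there is no real obstacle: the corollary is an immediate consequence of the duality between $f$ and $g(V)-g(V\setminus X)$ together with the two one-line evaluations above.
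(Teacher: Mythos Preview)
Your proof is correct and is exactly the direct unfolding of the two definitions that the paper implicitly relies on: the paper states the corollary without proof, treating it as immediate from the duality $g(X)=f(V)-f(V\setminus X)$, and your two one-line evaluations $g(\{v\})=f(v\mid V\setminus\{v\})$ and $g(v\mid V\setminus\{v\})=f(v)$ are precisely what makes that immediacy explicit.
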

The dual form also implies similar properties, e.g., we have that
$0\leq\curv^g\leq 1$ and if $\curv^g=0$ then $g$ is modular. In both
cases, a form of curvature indicates the degree of submodularity or
supermodularity. If $\curv_f=1$ (or $\curv^g=1$), we say that $f$ (or
$g$) is fully curved. Intuitively, a submodular function is very
(or fully) curved if there is a context $B$ and element $v$ at which the gain is close to (or equal to)
zero ($f(v|B) \approx 0$), whereas a supermodular function is very (or fully)
curved if there is an element $v$ whose valuation is close to (or equal to) zero
($g(v) \approx 0$).
We can calculate both submodular and
supermodular curvature easily in linear time. Hence, given a BP
decomposition of $h=f+g$, we can easily calculate both
curvatures, and the corresponding bounds, with only oracle access to
$f$ and $g$.
\begin{proposition}
	Calculating $\curv_f$ or $\curv^g$ requires at most $2|V|+1$ oracle
	queries of $f$ or $g$.
\end{proposition}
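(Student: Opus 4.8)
The plan is to prove the claim by directly accounting for exactly which oracle evaluations the definitions of $\curv_f$ and $\curv^g$ require; since each is a single pass over $V$ together with a constant number of extra evaluations, the count will come out to $2|V|+1$.

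First I would rewrite the marginal gain appearing in the curvature formula using the telescoping identity $f(v\mid V\setminus\{v\}) = f(V) - f(V\setminus\{v\})$, so that
\begin{align}
  \curv_f \;=\; 1 - \min_{v\in V}\frac{f(V) - f(V\setminus\{v\})}{f(\{v\})}.
\end{align}
Reading off the quantities that occur: one needs the single value $f(V)$; the $|V|$ values $\{\,f(V\setminus\{v\}) : v\in V\,\}$; and the $|V|$ values $\{\,f(\{v\}) : v\in V\,\}$. Normalization supplies $f(\emptyset)=0$ without a query, and forming the $|V|$ ratios, taking their minimum, and subtracting from $1$ needs no further oracle access. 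Hence $1 + |V| + |V| = 2|V|+1$ queries suffice. The case of $\curv^g$ is symmetric: expanding $g(v\mid V\setminus\{v\}) = g(V) - g(V\setminus\{v\})$ in $\curv^g = 1 - \min_{v\in V} g(\{v\})/g(v\mid V\setminus\{v\})$ leaves the same three groups of evaluations of $g$ — one for $g(V)$, $|V|$ for the leave-one-out values, and $|V|$ for the singletons — giving $2|V|+1$. (Alternatively one can invoke the earlier Corollary $\curv_f = \curv^{f(V)-f(V\setminus X)}$ and note that a single oracle query to the dual function costs two queries to the original function plus the one precomputed value of the original at $V$; the bookkeeping then yields the same count.)

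There is no substantive obstacle here — the statement is a counting argument — but I would spend one sentence on the degenerate case in which a denominator vanishes, i.e.\ $f(\{v\})=0$ for some $v$ (or $g(v\mid V\setminus\{v\})=0$). Monotonicity and submodularity force $0 \le f(v\mid V\setminus\{v\}) \le f(\{v\})$, so $f(\{v\})=0$ makes the corresponding numerator $0$ as well; under the standard convention (treat $0/0$ as $1$, or equivalently exclude such $v$ from the minimization) no extra evaluations are incurred, so the bound $2|V|+1$ is unaffected.
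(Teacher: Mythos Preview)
Your proof is correct and is exactly the natural counting argument one would expect; the paper in fact states this proposition without proof, treating it as immediate from the definitions, so your explicit accounting (one query for $f(V)$, $|V|$ for the leave-one-out values, and $|V|$ for the singletons) fills in precisely what the authors left implicit.
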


The steepness~\cite{il2001approximation,sviridenko2015optimal} of a
monotone nonincreasing supermodular function $g'$ is defined as
$s=1-\min_{v\in V}\frac{g'(v|V\setminus\set{v})}{g'(v|\emptyset)}$.
Here, the numerator and denominator are both negative and $g$ need not
be normalized. Steepness has a similar mathematical form to the
submodular curvature of a nondecreasing submodular function $f$, i.e.,
$\kappa_f=1-\min_{v\in
  V}\frac{f(v|V\setminus\set{v})}{f(v|\emptyset)}$, but is distinct
from the supermodular curvature. Steepness may be used to offer a
bound for the minimization of such nonincreasing supermodular
functions~\cite{sviridenko2015optimal}, whereas we in the present work
are interested in maximizing nondecreasing BP (and hence also
supermodular) functions.

\subsection{Theoretical Guarantees for \textsc{GreedMax}}
\label{sec:theor-guar-textscgr}

Before analyzing specific constraints, we first analyze each step of
\textsc{GreedMax} base on submodular and supermodular curvature.

The following holds for any chain of sets, not just those
produced by the greedy algorithm.
\begin{restatable}{lem}{lemgeneralGreed}
	\label{lemma:generalGreed}
	For any chain of solutions
	$\emptyset = S_0 \subset S_1 \subset \ldots \subset S_k$, where $|S_i|=i$, the following
	holds for all $i=0\ldots k-1$,
	\begin{align}
		h(X^*)\leq \curv_f \sum_{j:s_j\in S_i\setminus X^*} a_j+\sum_{j:s_j\in S_i\cap X^*}a_j
		+h(X^*\setminus S_i|S_i)
	\end{align}
	where  $\{ s_i\} =S_i\setminus S_{i-1}$, $a_i=h(s_i|S_{i-1})$ and $X^*$ is the optimal set.
\end{restatable}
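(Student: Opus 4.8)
The plan is to prove this without ever invoking the greedy selection rule, using only monotonicity, submodularity of $f$, supermodularity of $g$, and the definition of $\curv_f$. The backbone is the exact identity obtained by first growing $S_i$ up to $S_i\cup X^*$ and then, alternatively, growing $X^*$ up to the same set:
\[
h(X^*)=h(S_i\cup X^*)-h(S_i\setminus X^*\,|\,X^*)=h(S_i)+h(X^*\setminus S_i\,|\,S_i)-h(S_i\setminus X^*\,|\,X^*),
\]
where the last equality expands $h(S_i\cup X^*)$ by telescoping the marginals along the chain starting from $S_i$. Writing $h(S_i)=\sum_{j=1}^{i}a_j=\sum_{j:\,s_j\in S_i\cap X^*}a_j+\sum_{j:\,s_j\in S_i\setminus X^*}a_j$, the lemma becomes equivalent to the single inequality
\[
(1-\curv_f)\sum_{j:\,s_j\in S_i\setminus X^*}a_j\;\le\;h(S_i\setminus X^*\,|\,X^*),
\]
because then $\sum_{j:\,s_j\in S_i\setminus X^*}a_j-h(S_i\setminus X^*\,|\,X^*)\le\curv_f\sum_{j:\,s_j\in S_i\setminus X^*}a_j$, and substituting this upper bound into the identity yields exactly the claimed inequality.

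To establish the key inequality I would split $h=f+g$ and bound the two halves separately. Enumerate $S_i\setminus X^*=\{d_1,\dots,d_r\}$ \emph{in the order in which the chain adds them}, so that for $d_\ell=s_{j_\ell}$ the prefix $S_{j_\ell-1}$ consists only of elements of $X^*$ together with $d_1,\dots,d_{\ell-1}$; hence $S_{j_\ell-1}\subseteq X^*\cup\{d_1,\dots,d_{\ell-1}\}\subseteq V\setminus d_\ell$. Expanding both $f(S_i\setminus X^*\,|\,X^*)$ and $g(S_i\setminus X^*\,|\,X^*)$ as telescoping sums of singleton marginals along this enumeration, the supermodular half follows because marginal gains of $g$ are nondecreasing in the conditioning set: $g(d_\ell\,|\,X^*\cup\{d_1,\dots,d_{\ell-1}\})\ge g(s_{j_\ell}\,|\,S_{j_\ell-1})$, so $g(S_i\setminus X^*\,|\,X^*)\ge\sum_{j:\,s_j\in S_i\setminus X^*}g(s_j\,|\,S_{j-1})\ge(1-\curv_f)\sum_{j:\,s_j\in S_i\setminus X^*}g(s_j\,|\,S_{j-1})$. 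For the submodular half I would chain three elementary facts: $f(d_\ell\,|\,X^*\cup\{d_1,\dots,d_{\ell-1}\})\ge f(d_\ell\,|\,V\setminus d_\ell)\ge(1-\curv_f)\,f(d_\ell)\ge(1-\curv_f)\,f(s_{j_\ell}\,|\,S_{j_\ell-1})$, where the first and last steps are submodularity (diminishing returns) and the middle step is precisely the definition of $\curv_f$; summing gives $f(S_i\setminus X^*\,|\,X^*)\ge(1-\curv_f)\sum_{j:\,s_j\in S_i\setminus X^*}f(s_j\,|\,S_{j-1})$. Adding the two bounds and recalling $a_j=f(s_j\,|\,S_{j-1})+g(s_j\,|\,S_{j-1})$ yields the key inequality.

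The one place that requires care — and the step I would double-check most — is the containment $S_{j_\ell-1}\subseteq X^*\cup\{d_1,\dots,d_{\ell-1}\}$, since it is exactly what lets a single enumeration of $S_i\setminus X^*$ serve both the supermodular comparison (where the conditioning set must be large) and the submodular comparison (where it must avoid $d_\ell$ so that curvature applies); everything else is routine bookkeeping with telescoping, monotonicity, the diminishing-/increasing-returns inequalities, and the definition of curvature. It is worth noting that optimality of $X^*$ is never used, so the statement in fact holds verbatim for an arbitrary target set $X^*$.
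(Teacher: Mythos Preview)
Your proof is correct and is essentially the same as the paper's: both write $h(X^*\cup S_i)$ in two ways to reduce the claim to the inequality $(1-\curv_f)\sum_{j:s_j\in S_i\setminus X^*}a_j\le h(S_i\setminus X^*\,|\,X^*)$, and both prove this by telescoping over $S_i\setminus X^*$ in chain order and bounding each marginal via $f(v|Y)\ge f(v|V\setminus v)\ge(1-\curv_f)f(v)\ge(1-\curv_f)f(v|X)$ together with $g(v|Y)\ge g(v|X)$ for $X\subseteq Y$. The only cosmetic difference is that the paper packages this last step as a standalone lemma (its Lemma~\ref{lemma:curvInequ}\ref{lemma:curvInequ1}, which gives $h(v|Y)\ge(1-\curv_f)h(v|X)$ directly) and writes the conditioning set as $X^*\cup S_{j-1}$, which coincides with your $X^*\cup\{d_1,\dots,d_{\ell-1}\}$; your observation that $X^*$ need not be optimal is also implicit in the paper's argument.
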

 \vspace{\spacebeforeproofref}
\begin{proof}
	See Appendix~\ref{sec:generalGreed}.
\end{proof}

\subsubsection{Cardinality constraints}
\label{sec:card-constr}

In this section, we provide a lower bound for \textsc{Greedy}
maximization of a BP function under a cardinality constraint, inspired by the proof in~\cite{conforti1984submodular} where they focus only on submodular functions.

\begin{restatable}{lem}{lemboundcard}
	\label{lemma:bound_card1}
	\textsc{GreedMax} is guaranteed to obtain a solution $\hat{X}$ such that 
	\begin{align}
		h(\hat{X})\geq \frac{1}{\curv_f}\left[1-\left(1-\frac{(1-\curv^g)\curv_f}{k}\right)^k\right]h(X^*)
		\label{eqn:bound_card1}
	\end{align}
	where $X^*\in \argmax_{|X|\leq k} h(X)$, $h(X)=f(X)+g(X)$, $\curv_f$ is the curvature of submodular $f$ and $\curv^g$ is the curvature of supermodular $g$. 
\end{restatable}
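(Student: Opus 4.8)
The plan is to run \textsc{GreedMax} and track how a curvature‑weighted ``gap'' to the optimum contracts at each iteration, terminating in the stated geometric expression; the backbone is Lemma~\ref{lemma:generalGreed} applied to the greedy chain, and the argument specializes (when $\curv^g=0$) to the submodular analysis of \cite{conforti1984submodular}. Write $\emptyset=S_0\subset S_1\subset\cdots\subset S_k=\hat X$ for the chain produced by \textsc{GreedMax}, let $a_{i+1}=h(s_{i+1}\mid S_i)=f(s_{i+1}\mid S_i)+g(s_{i+1}\mid S_i)$ be the greedy gains, and put $P_i=\sum_{j:\,s_j\in S_i\cap X^*}a_j$ and $Q_i=\sum_{j:\,s_j\in S_i\setminus X^*}a_j$, so that $h(S_i)=P_i+Q_i$ and Lemma~\ref{lemma:generalGreed} reads $h(X^*)\le \curv_f Q_i+P_i+h(X^*\setminus S_i\mid S_i)$. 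We may assume $|X^*|=k$ (enlarge $X^*$ using monotonicity) and $|V|\ge k$ (else \textsc{GreedMax} returns $V$, which is optimal), and $\curv^g<1$ (else the claimed bound is $0$).

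The first ingredient is the residual bound: for every $i<k$,
\[
 h(X^*\setminus S_i\mid S_i)\ \le\ \frac{k}{1-\curv^g}\,a_{i+1}.
\]
Each $v\in X^*\setminus S_i$ is feasible for the cardinality constraint while $i<k$, so greedy maximality gives $a_{i+1}\ge f(v\mid S_i)+g(v\mid S_i)$. Submodularity of $f$ gives $f(X^*\setminus S_i\mid S_i)\le\sum_{v\in X^*\setminus S_i}f(v\mid S_i)$. For $g$, expand $g(X^*\setminus S_i\mid S_i)$ as a telescoping sum of marginals, bound each marginal by $g(v\mid V\setminus v)$ (supermodular marginals only grow with the conditioning set), and then use the definition of $\curv^g$ together with monotonicity of supermodular marginals: $g(v\mid V\setminus v)\le\frac{1}{1-\curv^g}g(v)\le\frac{1}{1-\curv^g}g(v\mid S_i)$. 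Adding the two parts and using $\frac{1}{1-\curv^g}\ge 1$ yields $h(X^*\setminus S_i\mid S_i)\le\frac{1}{1-\curv^g}\sum_{v\in X^*\setminus S_i}\bigl(f(v\mid S_i)+g(v\mid S_i)\bigr)\le\frac{k}{1-\curv^g}a_{i+1}$.

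Combining with Lemma~\ref{lemma:generalGreed}, for $i<k$ we get $h(X^*)\le P_i+\curv_f Q_i+\frac{k}{1-\curv^g}a_{i+1}$. Define the potential $\phi_i:=h(X^*)-P_i-\curv_f Q_i$; the last display says $\phi_i\le\frac{k}{1-\curv^g}a_{i+1}$, while one greedy step adds $a_{i+1}$ to $P$ (if $s_{i+1}\in X^*$) or to $Q$ (otherwise), so $\phi_i-\phi_{i+1}\ge\curv_f a_{i+1}\ge\frac{(1-\curv^g)\curv_f}{k}\phi_i$. Hence $\phi_{i+1}\le\bigl(1-\tfrac{(1-\curv^g)\curv_f}{k}\bigr)\phi_i$, and since $\phi_0=h(X^*)$, unrolling over the $k$ steps gives
\[
 P_k+\curv_f Q_k\ =\ h(X^*)-\phi_k\ \ge\ \Bigl[1-\bigl(1-\tfrac{(1-\curv^g)\curv_f}{k}\bigr)^k\Bigr]h(X^*).
\]

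It remains to pass from the curvature‑weighted quantity $P_k+\curv_f Q_k$ to $\curv_f\,h(\hat X)=\curv_f(P_k+Q_k)$, after which dividing by $\curv_f$ gives the statement (with the $\curv_f=0$ case read as the limit $1-\curv^g$, matching the pure‑supermodular guarantee). When greedy never selects an element of $X^*$, $P_k=0$ and $P_k+\curv_f Q_k=\curv_f h(\hat X)$, so we are done immediately; the general case $P_k>0$ needs an additional argument that exploits the fact that a greedy step landing in $X^*$ contracts $\phi$ by the \emph{larger} factor $1-\tfrac{1-\curv^g}{k}$ and accrues the full, undiscounted gain, so that the extra progress built up on those steps offsets the $(1-\curv_f)P_k$ discrepancy. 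I expect this last extraction to be the main obstacle: the naive estimate $P_k+\curv_f Q_k\le h(\hat X)$ only recovers the weaker $\bigl(1-e^{-(1-\curv^g)}\bigr)$‑type bound, so the $\tfrac{1}{\curv_f}$ improvement genuinely rests on accounting exactly for the curvature discount on greedy picks outside $X^*$. The only other delicate point is the residual bound above, where the supermodular term pushes against the desired inequality and it is precisely $\curv^g<1$ that lets the possibly large marginal $g(v\mid V\setminus v)$ be replaced by a constant multiple of $g(v)\le g(v\mid S_i)$.
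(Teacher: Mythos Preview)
Your proposal has a genuine gap that you correctly identify but do not close: the passage from $P_k+\curv_f Q_k\ge\bigl[1-(1-\tfrac{(1-\curv^g)\curv_f}{k})^k\bigr]h(X^*)$ to $\curv_f\,h(\hat X)=\curv_f(P_k+Q_k)$ satisfying the same lower bound. Since $\curv_f(P_k+Q_k)=(P_k+\curv_f Q_k)-(1-\curv_f)P_k$, your inequality goes the wrong way whenever $P_k>0$ and $\curv_f<1$, and the informal appeal to ``extra progress'' on greedy picks inside $X^*$ is not a proof. Without this step you only recover $h(\hat X)\ge P_k+\curv_f Q_k\ge[\,\cdots\,]h(X^*)$, i.e., the bound \emph{without} the $1/\curv_f$ factor.

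The paper closes exactly this gap, and it does so by a route quite different from your potential argument. First, it keeps the sharper residual bound $h(X^*\setminus S_i\mid S_i)\le \tfrac{k-|X^*\cap S_i|}{1-\curv^g}\,a_{i+1}$ (your $\tfrac{k}{1-\curv^g}$ throws away precisely the information needed later). Then it treats the resulting family of $k$ inequalities as a linear program in $(a_1,\dots,a_k)$ parametrized by the index set $B=\{j:s_j\in X^*\}$, and proves the nested Lemma~\ref{lemma:carinality_inequ}: the minimum of $\sum_j a_j$ subject to these constraints is smallest when $B=\emptyset$. This is shown in three stages --- (i) any optimal LP solution can be modified so $a_r\le a_{r+1}$ for $r\in B$; (ii) indices in $B$ can be shifted to the right without decreasing the LP value; (iii) the rightmost block of $B$ can then be removed one element at a time. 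Only after reducing to $B=\emptyset$ does the paper run the recursion you wrote, which at that point gives $\curv_f\,T_k\ge[\,\cdots\,]h(X^*)$ directly because every step is a $Q$-step. Your potential trick is essentially this last computation; what is missing is the entire LP reduction that justifies ignoring $P_k$, and the tighter residual coefficient $k-|X^*\cap S_i|$ is indispensable for that reduction to go through.
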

\vspace{\spacebeforeproofref}
\begin{proof}
	See Appendix~\ref{sec:bound_card1}.
\end{proof}
\begin{restatable}{thm}{theoremboundcard}
	\label{theorem:bound_card2}
	{\bf Theoretical guarantee in the cardinality constrained case.}
	\textsc{GreedMax} is guaranteed to obtain a solution $\hat{X}$ such that 
	\begin{align}
          h(\hat{X})\geq \frac{1}{\curv_f}\left[1-e^{-(1-\curv^g)\curv_f}\right]h(X^*)
          \label{eqn:bound_card2}
	\end{align}
	where $X^*\in \argmax_{|X|\leq k} h(X)$, $h(X)=f(X)+g(X)$, $\curv_f$ is the curvature of submodular $f$ and $\curv^g$ is the curvature of supermodular $g$. 
\end{restatable}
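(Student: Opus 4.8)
The plan is to deduce Theorem~\ref{theorem:bound_card2} immediately from the finite-horizon estimate in Lemma~\ref{lemma:bound_card1}, converting its $k$-th power term into an exponential via the elementary inequality $(1-x/k)^k \le e^{-x}$. All of the combinatorial content is already carried by Lemmas~\ref{lemma:generalGreed} and~\ref{lemma:bound_card1}, so what remains is a short analytic step plus some attention to degenerate curvature values.

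Concretely, I would first invoke Lemma~\ref{lemma:bound_card1} on the greedy solution $\hat X$ to obtain
\[
  h(\hat X)\ \ge\ \frac{1}{\curv_f}\left[1-\Bigl(1-\frac{(1-\curv^g)\curv_f}{k}\Bigr)^{k}\right]h(X^*).
\]
Then, writing $x:=(1-\curv^g)\curv_f$ and using $0\le\curv_f\le1$ and $0\le\curv^g\le1$ (established in Section~\ref{sec:curvature}), we have $0\le x\le1\le k$, so $1-x/k\in[0,1]$ and the standard bound $(1-x/k)^{k}\le e^{-x}$ applies (it follows from $\ln(1-x/k)\le -x/k$ on $[0,1]$). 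Hence $1-(1-x/k)^{k}\ge 1-e^{-x}$, and since $\curv_f>0$ division by $\curv_f$ preserves the inequality, which gives exactly \eqref{eqn:bound_card2}.

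The one point needing care is the boundary case $\curv_f=0$ (and, relatedly, $\curv^g=1$). When $\curv_f=0$ the factor $1/\curv_f$ and the bracketed term are individually undefined, but the map $t\mapsto \tfrac1t\bigl(1-e^{-(1-\curv^g)t}\bigr)$ extends continuously to $t=0$ with value $1-\curv^g$ (from $1-e^{-s}=s+O(s^2)$), as does the finite-$k$ expression in Lemma~\ref{lemma:bound_card1}; the inequality therefore passes to the limit and recovers the pure-supermodular guarantee $h(\hat X)\ge(1-\curv^g)h(X^*)$. When $\curv^g=1$ the right-hand side of \eqref{eqn:bound_card2} is $0$, which is consistent with — indeed forced by — the inapproximability result of Lemma~\ref{lemma:bpNotslovable}.

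I do not anticipate a real obstacle: the theorem is essentially a corollary of Lemma~\ref{lemma:bound_card1}, and the only thing to verify is that $(1-x/k)^k\le e^{-x}$ is invoked solely for $x\in[0,k]$, where it holds. An alternative route would re-derive the greedy recursion underlying Lemma~\ref{lemma:bound_card1} and apply $1-t\le e^{-t}$ at each greedy step rather than once at the end; this yields the same bound but duplicates work already in the appendix, so I would prefer the citation-based argument above.
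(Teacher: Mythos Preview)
Your proposal is correct and follows exactly the paper's approach: the paper's proof of Theorem~\ref{theorem:bound_card2} is a one-line deduction from Lemma~\ref{lemma:bound_card1} via the inequality $(1-a/k)^k\le e^{-a}$. Your treatment is in fact slightly more careful than the paper's, since you verify that $x=(1-\curv^g)\curv_f\in[0,k]$ before applying the inequality and you address the $\curv_f=0$ limit explicitly, whereas the paper simply asserts the inequality ``for all $a\ge 0$'' and handles the degenerate curvatures only in the discussion following the theorem.
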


\begin{proof}
	This follows Lemma~\ref{lemma:bound_card1} and uses the inequality $(1-\frac{a}{k})^k\leq e^{-a}$ for all $a\geq 0$ and $k\geq 1$.
\end{proof}

Theorem~\ref{theorem:bound_card2} gives a lower bound of \textsc{GreedMax} in terms of the submodular curvature $\curv_f$ and the supermodular curvature $\curv^g$. We notice that this bound immediately generalizes known results and provides one new one.
\begin{enumerate}[leftmargin=*,labelindent=0em,partopsep=-8pt,topsep=-8pt,itemsep=-2pt]
	\item $\curv_f = 0$, $\curv^g = 0$, $h(\hat{X})=h(X^*)$. In this case,
	the BP problem reduces to modular maximization under a cardinality
	constraint, which is solved exactly by the greedy algorithm.
	
	\item $\curv_f>0$, $\curv^g = 0$, $h(\hat{X})\geq \frac{1}{\curv_f}\left[1-e^{-\curv_f}\right]h(X^*)$. In this case, BP problem reduces to submodular maximization under a cardinality constraint, and  with the same $\frac{1}{\curv_f}\left[1-e^{-\curv_f}\right]$ guarantee for the greedy algorithm~\cite{conforti1984submodular}.
	
        \item If we take $\curvf \to 0$, we get
          $1-\curvg$, which is a new curvature-based bound for monotone supermodular
          maximization subject to a cardinality constraint.
          \label{item:one_minus_curvg_bound_card}

        \item $\curv^g = 1$, $h(\hat{X})\geq 0$ which means, in the general fully
	curved case for $g$, this offers no theoretical guarantee for
	constrained BP or supermodular maximization, consistent with \cite{usul33967} and
        Lemma~\ref{lemma:bpNotslovable}.
\end{enumerate}

\subsubsection{Weaker bound in the cardinality constrained case}

The bound in Equation~\eqref{eqn:bound_card2} is one of the major
contributions of this paper.  Another bound can be achieved using a
surrogate objective $h'(X)=f(X)+\sum_{v\in X} g(v)$, similar
to an approach used in \cite{curvaturemin}. We have that
$h'(X) \leq h(X)$ thanks to the supermodularity of $g$, and we can
apply \textsc{GreedMax} directly to $h'$, the solution of which has a
guarantee w.r.t.\ the original objective $h$. The proof of this bound
is quite a bit simpler, so we first offer it here immediately. On the other hand,
we also show that the bound obtained by this method is worse than
Equation~\eqref{eqn:bound_card2} for all $0<\curvf,\curvg<1$,
sometimes appreciably.

\begin{restatable}{lem}{simplebound}
	\label{lemma:simplebound}
	{\bf Weak bound in cardinality constrained case.} 
	\textsc{GreedMax} maximizing $h'(X)=f(X)+\sum_{v\in X} g(v)$ is guaranteed to obtain a solution $\hat{X}$ such that 
	\begin{align}
	h(\hat{X})\geq \frac{1-\curv^g}{\curv_f}\left[1-e^{-\curv_f}\right]h(X^*)
	\end{align}
	where $X^*\in \argmax_{|X|\leq k} h(X)$, $h(X)=f(X)+g(X)$, $\curv_f$ is the curvature of submodular $f$ and $\curv^g$ is the curvature of supermodular $g$. 
\end{restatable}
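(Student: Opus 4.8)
The plan is to run \textsc{GreedMax} on the surrogate $h'(X) = f(X) + \sum_{v \in X} g(v) = f(X) + m(X)$, where $m$ denotes the modular function with $m(v) = g(v)$, and then to transfer the resulting guarantee back to the original objective $h$. Three elementary facts drive the argument: (i) $h'$ is a normalized, monotone non-decreasing, non-negative submodular function --- being the sum of the submodular $f$ and the modular (hence submodular) $m$ --- whose submodular part still has curvature $\curv_f$ and whose modular ``supermodular part'' $m$ has supermodular curvature $0$; (ii) $h'(X) \leq h(X)$ for every $X \subseteq V$; and (iii) $h'(X^*) \geq (1 - \curv^g)\, h(X^*)$, where $X^* \in \argmax_{|X| \leq k} h(X)$.

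Fact (ii) is immediate: since $g$ is supermodular and normalized it is superadditive, so $g(X) \geq \sum_{v \in X} g(v)$, whence $h(X) = f(X) + g(X) \geq f(X) + \sum_{v \in X} g(v) = h'(X)$. For fact (iii), enumerate $X^* = \{v_1, \dots, v_m\}$ in an arbitrary order and telescope: $g(X^*) = \sum_{i=1}^m g\!\left(v_i \mid \{v_1, \dots, v_{i-1}\}\right)$. Supermodularity (increasing marginal returns) gives $g\!\left(v_i \mid \{v_1, \dots, v_{i-1}\}\right) \leq g\!\left(v_i \mid V \setminus \{v_i\}\right)$ for each $i$, and the definition $\curv^g = 1 - \min_{v} g(v)/g(v \mid V \setminus \{v\})$ gives $g\!\left(v_i \mid V \setminus \{v_i\}\right) \leq g(v_i)/(1 - \curv^g)$ (the inequality of (iii) being trivial when $\curv^g = 1$). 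Summing, $\sum_{v \in X^*} g(v) \geq (1 - \curv^g)\, g(X^*)$, and hence, using $f(X^*) \geq 0$ and $1 - \curv^g \leq 1$, $h'(X^*) = f(X^*) + \sum_{v \in X^*} g(v) \geq (1 - \curv^g)\bigl(f(X^*) + g(X^*)\bigr) = (1 - \curv^g)\, h(X^*)$.

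To conclude, apply Theorem~\ref{theorem:bound_card2} (equivalently, the classical submodular curvature bound of \cite{conforti1984submodular}) to the BP function $h' = f + m$, whose supermodular curvature is $0$: \textsc{GreedMax} run on $h'$ returns $\hat X$ with $h'(\hat X) \geq \frac{1}{\curv_f}\bigl[1 - e^{-\curv_f}\bigr]\, h'(Y^*)$, where $Y^* \in \argmax_{|X| \leq k} h'(X)$; since $X^*$ is feasible, $h'(Y^*) \geq h'(X^*)$. Chaining (ii), this guarantee, $h'(Y^*) \geq h'(X^*)$, and (iii) gives $h(\hat X) \geq h'(\hat X) \geq \frac{1}{\curv_f}\bigl[1 - e^{-\curv_f}\bigr]\, h'(X^*) \geq \frac{1 - \curv^g}{\curv_f}\bigl[1 - e^{-\curv_f}\bigr]\, h(X^*)$, as required. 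The only step that is not pure bookkeeping is the telescoped curvature inequality $\sum_{v \in X^*} g(v) \geq (1 - \curv^g)\, g(X^*)$ in fact (iii); the mild boundary cases should be noted in passing --- when $\curv_f = 0$ one reads $\frac{1}{\curv_f}[1 - e^{-\curv_f}]$ as its limit $1$ and greedy on the modular $h'$ is exact, and when $\curv^g = 1$ the bound degenerates to $0$, consistent with Lemma~\ref{lemma:bpNotslovable}.
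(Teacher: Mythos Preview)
Your proof is correct and follows essentially the same three-step skeleton as the paper: $h'\le h$, greedy on the submodular surrogate $h'$ achieves the $\frac{1}{\curv_f}\bigl[1-e^{-\curv_f}\bigr]$ ratio, and $h'(X^*)\ge(1-\curvg)\,h(X^*)$. The only substantive difference is in how you justify the middle step. You treat $h'=f+m$ as a BP function with modular supermodular part (hence supermodular curvature~$0$) and invoke Theorem~\ref{theorem:bound_card2} directly, which immediately gives the factor $\frac{1}{\curv_f}\bigl[1-e^{-\curv_f}\bigr]$. The paper instead views $h'$ as a single submodular function, computes its (total) curvature $\curv_{h'}=1-\min_v\frac{f(v\mid V\setminus\{v\})+g(v)}{f(v)+g(v)}\le\curv_f$, applies the classical Conforti--Cornu\'ejols bound $\frac{1}{\curv_{h'}}\bigl[1-e^{-\curv_{h'}}\bigr]$, and then uses the monotonic decrease of $x\mapsto\frac{1}{x}(1-e^{-x})$ to replace $\curv_{h'}$ by $\curv_f$. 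Your route is a bit cleaner and avoids that monotonicity step, at the cost of appealing to the paper's own (stronger) Theorem~\ref{theorem:bound_card2} rather than only the external result of~\cite{conforti1984submodular}; either way the argument is the same in spirit.
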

\begin{proof}
  According to lemma~\ref{lemma:curvInequ}~\ref{lemma:curvInequ4},
  $(1-\curvg)h(X)\leq h'(X)$ for all $X\subseteq V$. Also we have
  $h'(X)\leq h(X)$. And $h'$ is a monotone submodular function with
  $\curv_{h'}=1-\min_{v\in
    V}\frac{h'(v|V\setminus\set{v})}{h'(v)}=1-\min_{v\in
    V}\frac{f(v|V\setminus\set{v})+g(v)}{f(v)+g(v)}\leq 1-\min_{v\in
    V}\frac{f(v|V\setminus\set{v})}{f(v)}=\curvf$ since
  $0\leq f(v|V\setminus\set{v})\leq f(v)$.

  Using the traditional curvature bound for
  submodular maximization~\cite{conforti1984submodular},
  the greedy algorithm to maximize $h'$ 
  provides a solution $\hat{X}$ s.t.\
  $h'(\hat{X})\geq
  \frac{1}{\kappa_{h'}}\left[1-e^{-\kappa_{h'}}\right] h'(X^*)$ where
  $X^*\in \argmax_{|X|\leq k} h(X)$.
  Thus, we have
	\begin{align}
	h(\hat{X})&\geq h'(\hat{X})\geq \frac{1}{\curv_{h'}}\left[1-e^{-\curv_{h'}}\right] h'(X^*)  \geq \frac{1}{\curv_{f}}\left[1-e^{-\curv_{f}}\right] h'(X^*) \\
	&\geq\frac{1-\curv^g}{\curv_{f}}\left[1-e^{-\curv_{f}}\right] h(X^*)
	\end{align}
\end{proof}

Next, we show that this bound is almost everywhere worse than
Equation~\eqref{eqn:bound_card2}.

\begin{restatable}{lem}{simpleboundisworse}
	\label{lemma:simpleboundisworse}
$\frac{1}{\curv_f}\left[1-e^{-(1-\curv^g)\curv_f}\right]\geq \frac{1-\curv^g}{\curv_f}\left[1-e^{-\curv_f}\right]$ for all $0\leq \curvf,\curvg\leq 1$ where equality holds if and only if $\curv_f=0$ or $\curvg=0$ or $\curvg=1$. For simplicity, dividing by $0$ is defined using limits, e.g., $\frac{1}{\curv_f}\left[1-e^{-(1-\curv^g)\curv_f}\right]=\lim\limits_{\curvf\rightarrow 0^+}\frac{1}{\curv_f}\left[1-e^{-(1-\curv^g)\curv_f}\right]=1-\curvg$ when $\curv_f=0$.
\end{restatable}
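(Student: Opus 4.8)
The plan is to clear the common denominator and reduce the two-variable claim to a one-variable concavity statement. Write $a=\curvf\ge 0$ and $t=1-\curvg\in[0,1]$. For $\curvf>0$, multiplying the claimed inequality through by $\curvf>0$ shows it is equivalent to
\[
1-e^{-ta}\;\ge\;t\,(1-e^{-a}),
\]
i.e.\ to $\phi(t)\ge 0$, where $\phi(t)\triangleq 1-t-e^{-ta}+t e^{-a}$ is viewed as a function of $t$ on $[0,1]$ with $a$ a fixed parameter.

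First I would check the two endpoints: $\phi(0)=1-1=0$ and $\phi(1)=(1-1)-e^{-a}+e^{-a}=0$, so $\phi$ vanishes at both ends of $[0,1]$. Next, differentiating twice in $t$ gives $\phi''(t)=-a^2 e^{-ta}\le 0$, with strict inequality whenever $a>0$; hence $\phi$ is concave on $[0,1]$, and strictly concave when $\curvf>0$. Since a concave function that is zero at both endpoints of an interval is nonnegative on the whole interval, we get $\phi(t)\ge 0$, which is exactly the desired inequality for every $\curvf>0$ and every $\curvg\in[0,1]$.

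For the equality characterization I would invoke strict concavity: when $\curvf>0$ we have $\phi(t)>0$ for all $t\in(0,1)$, so equality forces $t\in\{0,1\}$, i.e.\ $\curvg\in\{0,1\}$. The remaining case $\curvf=0$ is handled by the limit convention in the statement: using $1-e^{-x}=x+O(x^2)$, both sides tend to $1-\curvg$ as $\curvf\to 0^+$, so equality holds there too. Combining the cases gives precisely ``$\curvf=0$ or $\curvg=0$ or $\curvg=1$.'' I do not expect a real obstacle here; the only care required is the bookkeeping around the degenerate $\curvf=0$ case (reading the quotients as limits) and the observation that it is \emph{strict} concavity, not mere concavity, that pins down the equality cases.
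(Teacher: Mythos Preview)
Your proof is correct and follows essentially the same route as the paper's: both fix $\kappa_f>0$ and exploit concavity in the $\kappa^g$-direction (you in the substituted variable $t=1-\kappa^g$, the paper directly in $\kappa^g$), together with agreement at the two endpoints, to obtain the inequality and the equality characterization. Your version is slightly more explicit in that you compute $\phi''$ directly on the difference, whereas the paper phrases it as ``strictly concave versus linear,'' but the argument is the same.
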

\begin{proof}
	Let $\phi(\curvf,\curvg)=\frac{1}{\curv_f}\left[1-e^{-(1-\curv^g)\curv_f}\right]$ and $\psi(\curvf,\curvg)=\frac{1-\curv^g}{\curv_f}\left[1-e^{-\curv_f}\right]$. Specifically, $\phi(0,\curvg)=\lim\limits_{\curvf\rightarrow 0^+}\phi(\curvf,\curvg)=1-\curvg$ and $\psi(0,\curvg)=\lim\limits_{\curvf\rightarrow 0^+}\psi(\curvf,\curvg)=1-\curvg$. So if $\curvf=0$, $\phi(\curvf,\curvg)=\psi(\curvf,\curvg)$.
	
	When $0< \curvf\leq 1$, we notice that $\phi(\curvf,\curvg)=\psi(\curvf,\curvg)$ when $\curvg=0$ or $\curvg=1$. When $0<\curvg<1$, we have $\phi(\curvf,\curvg)>\psi(\curvf,\curvg)$ since $\phi(\curvf,\curvg)$ is a strictly concave function in $\curvg$ and $\psi(\curvf,\curvg)$ is linear in $\curvg$.
\end{proof}

A simple computation shows the maximum ratio of these two bounds is
$1/(1-e^{-1}) \approx 1.5820$ when $\curvf=1$ and $\curvg \to 1$.  As
another example, with $\curvf=1$ and $\curvg = \ln(e-1)\approx 0.541$, the
ratio is $\approx 1.2688$.

\subsubsection{Multiple matroid constraints}

Matroids are useful combinatorial objects for expressing constraints
in discrete problems, and which are made more useful when
taking 
the intersection of the independent sets of $p>1$ matroids defined
on the same ground set \cite{nemhauser1978analysis}. In this section,
we show that the greedy algorithm on a BP function subject to $p$
matroid independent constraints has a guarantee if $g$ is
not fully curved.\looseness-1

\begin{restatable}{thm}{theogreedyPmatroid}
	\label{theo:greedyPmatroid}
	{\bf Theoretical guarantee in the $p$ matroids case.} \textsc{GreedMax}  is guaranteed to obtain a solution $\hat{X}$ such that 
	\begin{align}
		h(\hat{X})\geq \frac{1-\curv^g}{(1-\curv^g)\curv_f + p}h(X^*)
	\end{align}
	where $X^*\in \argmax_{X\in \mathcal{M}_1\cap\ldots\cap\mathcal{M}_p} h(X)$, $h(X)=f(X)+g(X)$, $\curv_f$ is the curvature of submodular $f$ and $\curv^g$ is the curvature of supermodular $g$. 
\end{restatable}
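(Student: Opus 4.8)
The plan is to run the argument along the chain produced by \textsc{GreedMax}, combine the chain inequality of Lemma~\ref{lemma:generalGreed} with the classical matroid exchange structure, and control the one genuinely new ingredient --- a supermodular residual term --- via the dual submodular function and the supermodular curvature. Write the greedy output as $\hat X=\{s_1,\dots,s_t\}$ in the order chosen, set $\hat X_i=\{s_1,\dots,s_i\}$ and $a_i=h(s_i\mid\hat X_{i-1})$, so $h(\hat X)=\sum_{i=1}^t a_i$. Applying Lemma~\ref{lemma:generalGreed} to the greedy chain $\emptyset=\hat X_0\subset\cdots\subset\hat X_t$ at $i=t$ gives
\begin{align}
  h(X^*)\;\le\;\curv_f\!\!\sum_{j:\,s_j\notin X^*}\!\!a_j\;+\sum_{j:\,s_j\in X^*}\!\!a_j\;+\;h(X^*\setminus\hat X\mid\hat X),
\end{align}
so it remains to bound the residual $h(X^*\setminus\hat X\mid\hat X)$ and absorb the right-hand side into a multiple of $h(\hat X)$.

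For the charging I would use the standard matroid fact that, since $X^*$ and each $\hat X_i$ are independent in all $p$ matroids, there is an assignment $\sigma\colon X^*\to\{1,\dots,t\}$ with $|\sigma^{-1}(j)|\le p$ for every $j$ (and $\sigma(s_j)=j$ whenever $s_j\in X^*$) such that $\hat X_{\sigma(v)-1}\cup\{v\}$ is independent in all $p$ matroids; hence each $v\in X^*$ was a feasible candidate when \textsc{GreedMax} selected $s_{\sigma(v)}$, so $h(v\mid\hat X_{\sigma(v)-1})\le a_{\sigma(v)}$. I would then bound the residual by splitting $h=f+g$. For the submodular part, subadditivity of gains together with $\hat X_{\sigma(v)-1}\subseteq\hat X$ gives $f(X^*\setminus\hat X\mid\hat X)\le\sum_{v\in X^*\setminus\hat X}f(v\mid\hat X_{\sigma(v)-1})$. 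For the supermodular part, where subadditivity of gains fails, I would pass to the dual submodular function $\bar g(X)=g(V)-g(V\setminus X)$: the marginal $g(X^*\setminus\hat X\mid\hat X)$ equals a marginal of $\bar g$, so submodularity of $\bar g$ yields $g(X^*\setminus\hat X\mid\hat X)\le\sum_{v\in X^*\setminus\hat X}\bar g(v)=\sum_{v\in X^*\setminus\hat X}g(v\mid V\setminus v)$; the definition of $\curv^g$ gives $g(v\mid V\setminus v)\le g(v)/(1-\curv^g)$; and $g(v)=g(v\mid\emptyset)\le g(v\mid\hat X_{\sigma(v)-1})$ by supermodularity. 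Collecting these and using $1/(1-\curv^g)\ge 1$ yields
\begin{align}
  h(X^*\setminus\hat X\mid\hat X)\;\le\;\frac{1}{1-\curv^g}\sum_{v\in X^*\setminus\hat X}h(v\mid\hat X_{\sigma(v)-1})\;\le\;\frac{1}{1-\curv^g}\sum_{v\in X^*\setminus\hat X}a_{\sigma(v)}.
\end{align}

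To conclude, I would substitute this into the first display, merge the two sums over $X^*$ by using $\sum_{j:\,s_j\in X^*}a_j\le\frac{1}{1-\curv^g}\sum_{v\in X^*\cap\hat X}a_{\sigma(v)}$, so that the combined contribution of $X^*$ is $\frac{1}{1-\curv^g}\sum_{v\in X^*}a_{\sigma(v)}=\frac{1}{1-\curv^g}\sum_{j}|\sigma^{-1}(j)|\,a_j\le\frac{p}{1-\curv^g}h(\hat X)$, and bound the remaining term by $\curv_f\sum_{j:\,s_j\notin X^*}a_j\le\curv_f\,h(\hat X)$. This gives $h(X^*)\le\bigl(\curv_f+\tfrac{p}{1-\curv^g}\bigr)h(\hat X)=\tfrac{(1-\curv^g)\curv_f+p}{1-\curv^g}\,h(\hat X)$, which rearranges to the claimed guarantee; the degenerate cases $\curv^g=1$ (bound $0$, nothing to prove) and $g(v\mid V\setminus v)=0$ for some $v$ (which forces $g(v)=0$) are dispatched separately, reading divisions by $0$ as limits.

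The step I expect to be the main obstacle is the supermodular residual $g(X^*\setminus\hat X\mid\hat X)$: because supermodularity makes marginal gains \emph{increase} with their conditioning set, the routine ``a marginal of a set is at most the sum of singleton marginals'' move is unavailable, and the detour through the dual submodular $\bar g$ together with the definition of $\curv^g$ is exactly what rescues it --- and is where the factor $1/(1-\curv^g)$, hence the shape of the bound, enters. A secondary subtlety is preserving the $\curv_f$ savings through the $p$-matroid charging: this forces one to charge \emph{all} of $X^*$ (not just $X^*\setminus\hat X$) to greedy steps, so that the overlap sum $\sum_{j:\,s_j\in X^*}a_j$ can be folded into the $p/(1-\curv^g)$ term rather than inflating the leading coefficient to $1$.
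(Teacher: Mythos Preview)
Your proposal is correct and follows essentially the same route as the paper's proof: apply Lemma~\ref{lemma:generalGreed} at the end of the greedy chain, charge each element of $X^*$ to a greedy step via the $p$-matroid exchange structure (the paper constructs this charging explicitly as an ordered $k$-partition $X^*=X_0\cup\cdots\cup X_{k-1}$ with $|X_j|=p$ and $s_{j+1}\in X_j$ whenever $s_{j+1}\in X^*$, which is exactly your $\sigma$ with $|\sigma^{-1}(j)|\le p$ and $\sigma(s_j)=j$), and then combine to obtain $h(X^*)\le\bigl(\curv_f+\tfrac{p}{1-\curv^g}\bigr)h(\hat X)$. The only cosmetic difference is that the paper bounds the residual $h(X^*\setminus\hat X\mid\hat X)$ by telescoping and invoking Lemma~\ref{lemma:curvInequ}\ref{lemma:curvInequ2} directly on $h$, whereas you split $h=f+g$ and route the supermodular piece through the dual $\bar g$---but both arguments produce the identical chain $g(v\mid Y)\le g(v\mid V\setminus v)\le g(v)/(1-\curv^g)\le g(v\mid X)/(1-\curv^g)$, so the substance is the same.
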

\vspace{\spacebeforeproofref}
\begin{proof}
	See Appendix~\ref{sec:greedyPmatroid}.
\end{proof}
Theorem~\ref{theo:greedyPmatroid} gives a theoretical lower bound of
\textsc{GreedMax} in terms of submodular curvature $\curv_f$ and
supermodular curvature $\curv^g$ for the $p$ matroid constraints
case. Like in the cardinality case, this bound
also generalizes known results and yields a new one.
\begin{enumerate}[leftmargin=*,labelindent=0em,partopsep=-8pt,topsep=-8pt,itemsep=-2pt]
\item $\curv_f = 0$, $\curv^g = 0$,
  $h(\hat{X})\geq\frac{1}{p}h(X^*)$. In this case, the BP problem
  reduces to modular maximization under $p$ matroid
  constraints~\cite{conforti1984submodular}.

\item $\curv_f>0$, $\curv^g = 0$,
  $h(\hat{X})\geq \frac{1}{p+\curv_f}h(X^*)$ . In this case, the BP
  problem reduces to submodular maximization under $p$ matroid
  constraints~\cite{conforti1984submodular}.

\item If we take $\curvf \to 0$, we get
  $(1-\curvg)/p$, which is a new curvature-based bound for monotone supermodular
  maximization subject to a $p$ matroid constraints.
          
\item $\curv^g = 1$, $h(\hat{X})\geq 0$ which means that, in general,
there is no theoretical guarantee for constrained BP or supermodular
maximization. 
\end{enumerate}

\begin{figure}
	\begin{center}
		\begin{tabular}{ccc}
			\includegraphics[width=0.4\textwidth]{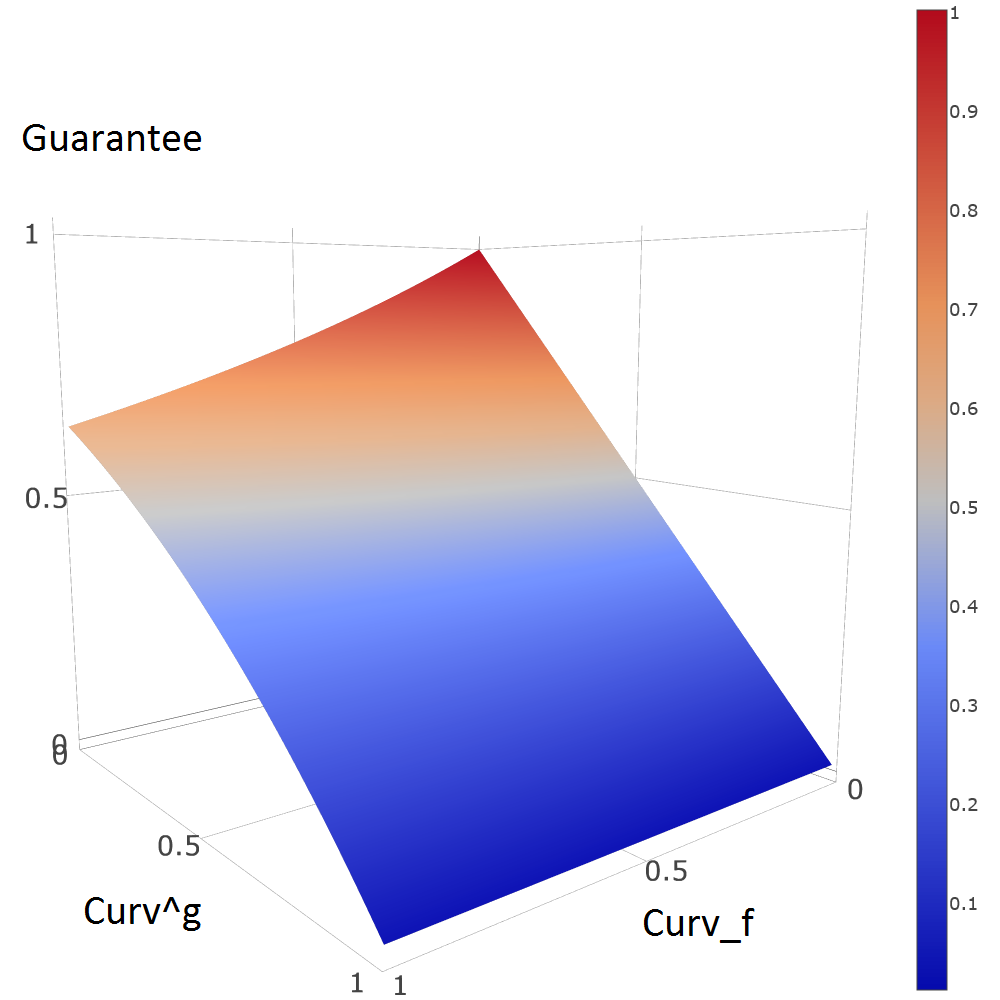} &
			\hspace{1em}                                                &
			\includegraphics[width=0.4\textwidth]{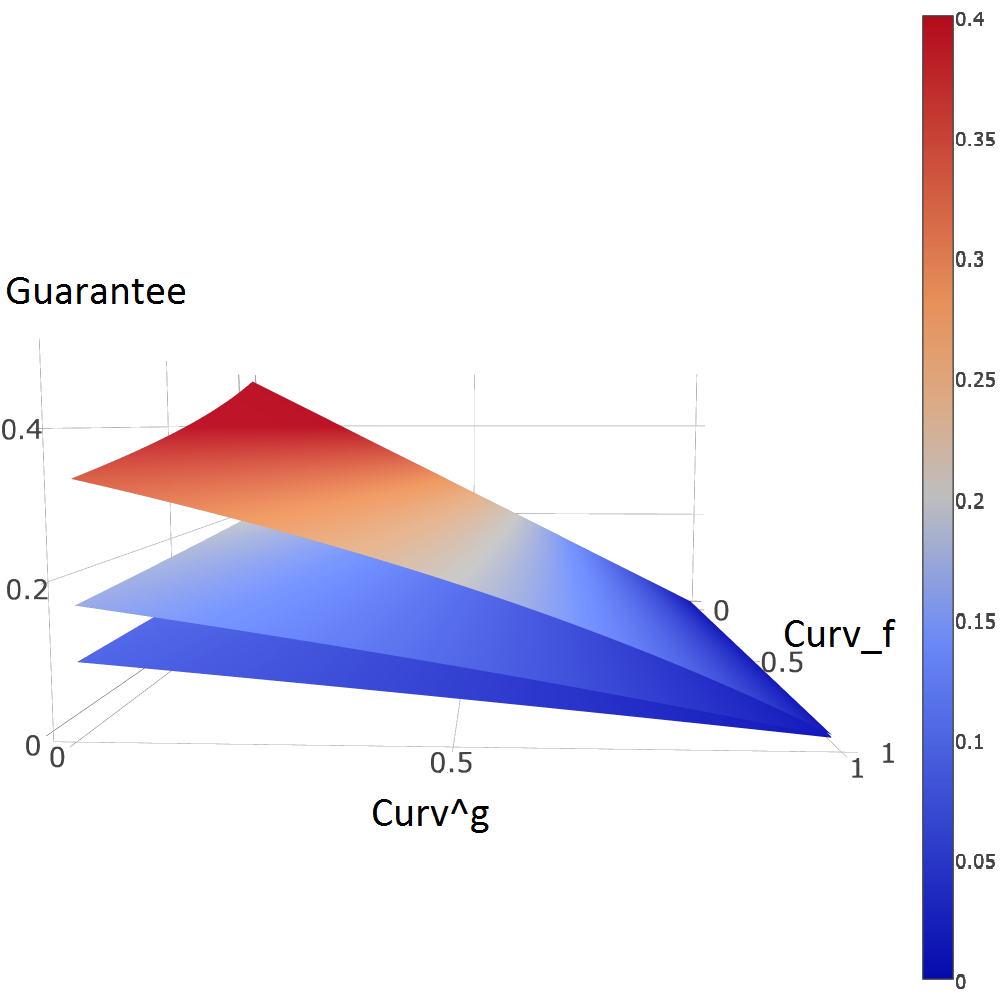}\\
			(a) Cardinality constraint & &(b) Multiple matroid constraints 
		\end{tabular}
	\end{center}
	\caption{Guarantees of \textsc{GreedMax} for two
		constraint types. The x and y axes are $\curv_f$ and
		$\curv^g$, respectively, and the z axis is the
		guarantee. In (b), from top to bottom, the surfaces
		represent $p = 2,5,10$.\looseness-1}
	\label{fig:theo_guarantee}
\end{figure}

\subsection{Theoretical guarantee of \textsc{SemiGrad}}

In this section, we show a perhaps interesting result that \textsc{SemiGrad}
achieves the same bounds as \textsc{GreedMax} even if we initialize
\textsc{SemiGrad} with $\emptyset$ and even though the two algorithms
can produce quite different solutions (as demonstrated in
Section~\ref{sec:experiments}).

\begin{restatable}{thm}{theoSubCard}
	\label{theo:SubCard}
	\textsc{SemiGrad} initialized with the empty set is guaranteed to obtain a solution $\hat{X}$ for the cardinality constrained case such that 
	\begin{align}
		h(\hat{X})\geq \frac{1}{\curv_f}\left[1-e^{-(1-\curv^g)\curv_f}\right]h(X^*)
	\end{align}
	where $X^*\in \argmax_{|X|\leq k} h(X)$, $h(X)=f(X)+g(X)$,
        \& $\curv_f$ (resp.\ $\curv^g$) is the curvature of $f$ (resp.\ $g$).\looseness-1
\end{restatable}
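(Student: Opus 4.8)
The plan is to reduce the analysis of \textsc{SemiGrad} started at $\emptyset$ to that of a single run of \textsc{GreedMax} on an explicit modular surrogate, and then argue that this surrogate costs nothing. First I would pin down what the first outer iteration does: at $X_0=\emptyset$ the semigradient of $g$ from \eqref{eq:supsubdifone}--\eqref{eq:supsubdiftwo} reduces, after dropping its (nonpositive) constant, to the modular function $g_0(Y)=\sum_{v\in Y}g(v)$ (both $m_{g,\emptyset,1}$ and $m_{g,\emptyset,2}$ collapse to this since $\emptyset\setminus Y=\emptyset$). Hence the first pass of \textsc{SemiGrad} is exactly \textsc{GreedMax} applied to the monotone submodular function $h^{\sharp}(X):=f(X)+\sum_{v\in X}g(v)$, producing a greedy chain $\emptyset=S_0\subset S_1\subset\cdots\subset S_k=X_1$.

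The heart of the argument is that this chain, although built using $h^{\sharp}$, obeys the \emph{same} per-step inequality that drives the proof of Lemma~\ref{lemma:bound_card1}/Theorem~\ref{theorem:bound_card2} for the ordinary greedy chain on $h$. At step $l$, \textsc{GreedMax} on $h^{\sharp}$ picks $s_l$ maximizing $f(v\mid S_{l-1})+g(v)$ over feasible $v$, so $f(s_l\mid S_{l-1})+g(s_l)\ge f(v\mid S_{l-1})+g(v)$ for every $v\in X^*\setminus S_{l-1}$; since $g$ is supermodular, $g(s_l\mid S_{l-1})\ge g(s_l)$, and therefore the \emph{true} marginal satisfies
\[
a_l:=h(s_l\mid S_{l-1})\ \ge\ \tfrac1k\!\!\sum_{v\in X^*\setminus S_{l-1}}\!\!\bigl(f(v\mid S_{l-1})+g(v)\bigr)\ \ge\ \tfrac{1-\curv^g}{k}\,h(X^*\mid S_{l-1}),
\]
where the last step uses subadditivity of submodular marginals for the $f$-part, and $g(v)\ge(1-\curv^g)\,g(v\mid V\setminus v)$ together with $\sum_{v\in X^*\setminus S_{l-1}}g(v\mid V\setminus v)\ge g(X^*\mid S_{l-1})$ for the $g$-part. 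This is precisely the inequality that, combined with Lemma~\ref{lemma:generalGreed} (which holds for \emph{any} chain, not only a greedy one), yields the recursion in the proof of Lemma~\ref{lemma:bound_card1}. Running that recursion verbatim on $S_0\subset\cdots\subset S_k$ gives $h(X_1)\ge\frac1{\curv_f}\bigl[1-e^{-(1-\curv^g)\curv_f}\bigr]h(X^*)$, i.e.\ \eqref{eqn:bound_card2} already holds for $X_1$.

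It then remains to observe that \textsc{SemiGrad}'s outer loop does not decrease $h$: with the inner \textsc{GreedMax} warm-started from the current iterate, $X_i$ stays feasible with value $f(X_i)+g_i(X_i)=h(X_i)$ (the chosen $g_i$ is a modular lower bound of $g$ tight at $X_i$), and greedy only appends non-negative-gain elements, so $h(X_{i+1})\ge h(X_i)$ and hence $h(\hat X)\ge h(X_1)$; the claimed bound follows. The same template (first iterate from $\emptyset$ runs \textsc{GreedMax} on $h^{\sharp}$, then the per-step inequality) should also give the matroid statement (Theorem~\ref{theo:SubPmatroid}).

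I expect the main obstacle to be conceptual rather than computational: the tempting crude route --- use $h^{\sharp}\le h$, apply the plain curvature bound to $h^{\sharp}$, and relate back via $h^{\sharp}\ge(1-\curv^g)h$ --- only reproduces the \emph{weaker} bound $\frac{1-\curv^g}{\curv_f}\bigl[1-e^{-\curv_f}\bigr]$ of Lemma~\ref{lemma:simplebound}. The real point is that the slack $g(s_l\mid S_{l-1})-g(s_l)\ge0$, which the surrogate's greedy rule discards, is recovered for free once the chosen chain is re-scored against the true $h$, so no generality is lost; making this rigorous requires re-reading the proof of Theorem~\ref{theorem:bound_card2} to confirm that the displayed per-step inequality is the \emph{only} place where greediness (as opposed to the chain-agnostic Lemma~\ref{lemma:generalGreed}) enters.
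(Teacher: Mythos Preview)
Your plan is the paper's proof: recognize that \textsc{SemiGrad} started at $\emptyset$ runs \textsc{GreedMax} on $h'(X)=f(X)+\sum_{v\in X}g(v)$, then re-score the resulting chain against the true $h$, combine with the chain-agnostic Lemma~\ref{lemma:generalGreed}, and invoke Lemma~\ref{lemma:carinality_inequ}. Your diagnosis of why the crude $h'\ge(1-\curv^g)h$ route only recovers Lemma~\ref{lemma:simplebound} is also exactly the right one.

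Two small repairs are needed before the argument actually closes. First, your displayed per-step inequality divides by $k$, but Lemma~\ref{lemma:carinality_inequ} requires the coefficient $\frac{k-|X^*\cap S_{l-1}|}{1-\curv^g}$ on $a_l$; with your weaker $\frac{k}{1-\curv^g}$ the LP reduction $T(B)\ge T(\emptyset)$ can fail when $S_{l-1}\cap X^*\neq\emptyset$ (try $k=2$, $\curv_f=\tfrac12$, $\curv^g=0$, $B=\{1\}$). The fix is immediate: since $a_l\ge h'(v\mid S_{l-1})$ for every $v\in X^*\setminus S_{l-1}$, average over those $|X^*\setminus S_{l-1}|=k-|X^*\cap S_{l-1}|$ elements rather than $k$, giving $h(X^*\setminus S_{l-1}\mid S_{l-1})\le\frac{k-|X^*\cap S_{l-1}|}{1-\curv^g}a_l$, which is precisely what the paper derives and what Lemma~\ref{lemma:carinality_inequ} consumes. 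Second, your monotonicity argument for later iterations assumes the inner \textsc{GreedMax} is warm-started at $X_i$, but Algorithm~\ref{alg:ds} restarts it from $\emptyset$; the paper sidesteps this entirely by noting the bound already holds for $X_1$ (and one may return the best iterate), which is all that is needed.
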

\vspace{\spacebeforeproofref}
\begin{proof}
	See Appendix~\ref{sec:SubCard}.
\end{proof}
\begin{restatable}{thm}{theoSubPmatroid}
	\label{theo:SubPmatroid}
	\textsc{SemiGrad} initialized with the empty set is guaranteed to obtain a solution $\hat{X}$, feasible for the $p$ matroid constraints,
	such that 
	\begin{align}
		h(\hat{X})\geq \frac{1-\curv^g}{(1-\curv^g)\curv_f + p}h(X^*)
	\end{align}
	where $X^*\in \argmax_{X\in \mathcal{M}_1\cap\ldots\cap\mathcal{M}_p} h(X)$, $h=f+g$,
        \& $\curv_f$ (resp.\ $\curv^g$) is the curvature of $f$ (resp.\ $g$).\looseness-1
\end{restatable}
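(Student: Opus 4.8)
The plan is to run \textsc{SemiGrad} from $X_0=\emptyset$ until it reaches a fixed point $\hat X$ (so $X_i=X_{i-1}=\hat X$) and to analyze $\hat X$ directly, mirroring the proof of Theorem~\ref{theo:greedyPmatroid}. At such a fixed point, $\hat X$ is exactly the set returned by \textsc{GreedMax} on the objective $f(X)+\hat g(X)$ over $\mathcal{C}=\mathcal{M}_1\cap\cdots\cap\mathcal{M}_p$, where $\hat g$ is one of the tight modular minorants of $g$ at $\hat X$ given by~\eqref{eq:supsubdifone}--\eqref{eq:supsubdiftwo}; recall $\hat g(Y)\le g(Y)$ for all $Y$ and $\hat g(\hat X)=g(\hat X)$, so $h(\hat X)=f(\hat X)+\hat g(\hat X)=(f+\hat g)(\hat X)$. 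Write the greedy chain of this inner call as $\emptyset=X_0\subset X_1\subset\cdots\subset X_k=\hat X$, $\{s_j\}=X_j\setminus X_{j-1}$, and $\tilde a_j=(f+\hat g)(s_j\mid X_{j-1})$.

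The crucial observation is that the modular minorant $\hat g$, being tight at $\hat X$, satisfies $\hat g(x\mid S)=g(x)$ (for minorant~\eqref{eq:supsubdifone}; $g(x\mid\hat X)$ for~\eqref{eq:supsubdiftwo}) for every $x\notin\hat X$, independently of $S$ --- i.e.\ on elements outside $\hat X$ it behaves like a curvature-$0$ supermodular term. I would therefore reuse the $p$-matroid exchange argument of Theorem~\ref{theo:greedyPmatroid}: it assigns each $x\in X^*\setminus\hat X$ to a step $\pi(x)$ with $X_{\pi(x)-1}\cup\{x\}\in\mathcal{C}$, each step used at most $p$ times, so $\tilde a_{\pi(x)}\ge (f+\hat g)(x\mid X_{\pi(x)-1})=f(x\mid X_{\pi(x)-1})+g(x)$. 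The ``tail'' term would then be bounded by
\[
h(X^*\setminus\hat X\mid\hat X)=f(X^*\setminus\hat X\mid\hat X)+g(X^*\setminus\hat X\mid\hat X)\le\sum_{x\in X^*\setminus\hat X}f(x\mid\hat X)+\frac{1}{1-\curv^g}\sum_{x\in X^*\setminus\hat X}g(x),
\]
using submodularity of $f$, supermodularity of $g$ (raising the conditioning set to $V\setminus(X^*\setminus\hat X)$), and the definition of $\curv^g$ via $g(v\mid V\setminus v)\le\frac{1}{1-\curv^g}g(v)$; this is at most $\frac{1}{1-\curv^g}\sum_x\tilde a_{\pi(x)}\le\frac{p}{1-\curv^g}\sum_j\tilde a_j$. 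Feeding this, together with Lemma~\ref{lemma:generalGreed} applied to $h$ along the chain $X_0\subset\cdots\subset X_k$ --- which supplies the $\curv_f$ factor on the ``interior'' gains $s_j\in\hat X\setminus X^*$ --- into one chain of inequalities should telescope to $h(X^*)\le\bigl(\curv_f+\frac{p}{1-\curv^g}\bigr)h(\hat X)$, i.e.\ the claim, since $\curv_f+\frac{p}{1-\curv^g}=\frac{(1-\curv^g)\curv_f+p}{1-\curv^g}$.

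The hard part is the bookkeeping forced by the fact that a genuinely supermodular $g$ admits no \emph{normalized} tight modular minorant: $\hat g(\emptyset)=g(\hat X)-\sum_{v\in\hat X}g(v\mid\hat X\setminus v)$ is in general strictly negative, so the telescoped greedy gains satisfy $\sum_j\tilde a_j=h(\hat X)+\delta$ with $\delta=\sum_{v\in\hat X}g(v\mid\hat X\setminus v)-g(\hat X)\ge 0$ an ``intra-$\hat X$'' supermodular slack, while simultaneously $\hat g(X^*)$ under-counts $g(X^*)$ on $X^*\setminus\hat X$. Both discrepancies have to be shown to interact with $\curv^g$ in a way that does \emph{not} degrade the bound: one controls $\delta$ using $\sum_{v\in\hat X}g(v)\le g(\hat X)$ (supermodularity) together with $g(v\mid\hat X\setminus v)\le\frac{1}{1-\curv^g}g(v)$, and one re-expresses $(f+\hat g)(X^*)$ via the identity $\hat g(X^*)-\hat g(\emptyset)=\sum_{j\in X^*\cap\hat X}g(j\mid\hat X\setminus j)+\sum_{j\in X^*\setminus\hat X}g(j)$, which is exactly the form that the inequality $g(X^*)\le\sum_{j\in X^*\cap\hat X}g(j\mid\hat X\setminus j)+\frac{1}{1-\curv^g}\sum_{j\in X^*\setminus\hat X}g(j)$ lets one compare against $h(X^*)$. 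Getting these error terms to cancel against the appropriate greedy gains, rather than being absorbed crudely, is the technical crux; everything else is the telescoping-plus-curvature computation already used for Theorems~\ref{theorem:bound_card2} and~\ref{theo:greedyPmatroid}. (An easier route to the \emph{same} bound is to initialize \textsc{SemiGrad} with the \textsc{GreedMax} solution, since \textsc{SemiGrad} never returns a set worse than its initializer.)
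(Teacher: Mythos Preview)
Your route is unnecessarily hard, and the paper avoids every difficulty you describe by a single observation you overlook: when \textsc{SemiGrad} is initialized at $X_0=\emptyset$, the very \emph{first} semigradient is
\[
m_{g,\emptyset,1}(Y)=m_{g,\emptyset,2}(Y)=\sum_{v\in Y}g(v),
\]
which is already a \emph{normalized}, nonnegative modular function. The paper therefore analyzes only the first inner call to \textsc{GreedMax} on $h'(X)=f(X)+\sum_{v\in X}g(v)$ (later iterations can only improve the objective), and reruns the proof of Theorem~\ref{theo:greedyPmatroid} verbatim with the single substitution that the greedy optimality inequality now reads $h'(v\mid S_i)\le h'(s_{i+1}\mid S_i)$ instead of $h(v\mid S_i)\le h(s_{i+1}\mid S_i)$. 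Since $h'(v\mid S_i)=f(v\mid S_i)+g(v)\ge (1-\curv^g)\bigl(f(v\mid S_i)+g(v\mid S_i)\bigr)$ and $h'(s_{i+1}\mid S_i)\le h(s_{i+1}\mid S_i)$, the $\frac{1}{1-\curv^g}$ factor on the tail term of Lemma~\ref{lemma:generalGreed} survives unchanged, and the rest of the $p$-matroid exchange bookkeeping is identical. No $\delta$-slack appears at all.

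By contrast, you analyze the \emph{fixed point} $\hat X$ and must confront the non-normalized minorant $\hat g$ at $\hat X$. This forces you to reconcile two mismatched telescopes: Lemma~\ref{lemma:generalGreed} gives you $a_j=h(s_j\mid X_{j-1})$ summing to $h(\hat X)$, while the greedy property you actually have is on $\tilde a_j=(f+\hat g)(s_j\mid X_{j-1})$ summing to $h(\hat X)+\delta$. Your sketch bounds the tail by $\frac{p}{1-\curv^g}\sum_j\tilde a_j=\frac{p}{1-\curv^g}(h(\hat X)+\delta)$, and your proposed control $\delta\le\frac{\curv^g}{1-\curv^g}g(\hat X)$ does not obviously cancel against anything --- it would degrade the bound by an additive $\frac{p\,\curv^g}{(1-\curv^g)^2}g(\hat X)$ unless you find a compensating negative term elsewhere. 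You correctly flag this as ``the technical crux'' but do not resolve it, so as written the argument has a gap. The paper's first-iteration trick sidesteps the issue entirely; I recommend adopting it.
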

\vspace{\spacebeforeproofref}
\begin{proof}
	See Appendix~\ref{sec:SubPmatroid}.
\end{proof}

All the above guarantees are plotted in
Figure~\ref{fig:theo_guarantee} (in the matroid case for $p=2$, $5$,
or $10$ matroids).

\section{Hardness}

We next show that the curvature $\curv^g$ limits the
polynomial time approximability of BP maximization.

\begin{restatable}{thm}{hardnessCard}
	\label{theo:hardnessCard}
	{\bf Hardness for cardinality constrained case.}  For all
	$0\leq\beta\leq 1$, there exists an instance of a BP function
	$h = f+ g$ with supermodular curvature $\curv^g=\beta$ such that no
	poly-time algorithm solving Problem 1 with a cardinality constraint
	can achieve an approximation factor better than
	$1-\curv^g+\epsilon$, for any $\epsilon >0$.
\end{restatable}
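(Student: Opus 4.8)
Since the target factor $1-\curvg$ involves only the supermodular curvature, the plan is to exhibit a hard instance in which the submodular part is irrelevant (take $f\equiv 0$, or any fixed normalized monotone submodular function) and $g$ is a normalized monotone nonnegative supermodular function of curvature \emph{exactly} $\beta$ for which no polynomial‑time algorithm beats $1-\beta+\epsilon$. I would assemble $g$ from two pieces: a ``maximally hard'' supermodular \emph{core} $g_0$, and a uniform modular \emph{floor} that dampens the curvature from $1$ down to $\beta$. Concretely, start from a normalized monotone nonnegative \emph{fully‑curved} supermodular $g_0$ on a ground set $V$, together with a cardinality bound $k$, coming from the reduction behind Lemma~\ref{lemma:bpNotslovable} (which already supplies, for any constant $\delta>0$, an instance in which any polynomial‑time algorithm returns a feasible set of $g_0$‑value at most $\delta W$ unless $\mathrm P=\mathrm{NP}$, where $W:=g_0(X^*)=\max_{|X|\le k}g_0(X)$ for a target $X^*$, $|X^*|=k$); amplify the construction (e.g.\ by a direct sum of many independent copies) so that $k\to\infty$. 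Put $D:=\max_{v\in V}g_0(v\mid V\setminus v)$.

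Now set $g(X):=g_0(X)+\mu|X|$ with $\mu:=\tfrac{1-\beta}{\beta}\,D$. Then $g$ is supermodular (supermodular plus modular), normalized, and monotone nonnegative; and — arranging the core so that its curvature witness stays the most‑curved element after the floor is added — the minimum over $v$ of $g(v)/g(v\mid V\setminus v)=\mu/\bigl(\mu+g_0(v\mid V\setminus v)\bigr)$ is attained at the element realizing $D$, giving $\curvg=1-\tfrac{\mu}{\mu+D}=\beta$, for every $k$. For the gap, $\mathrm{OPT}=\max_{|X|\le k}g(X)\ge g(X^*)=W+\mu k$, whereas any polynomial‑time algorithm returns a set whose modular part is at most $\mu k$ and whose $g_0$‑part is at most $\delta W$, hence value at most $\delta W+\mu k$; so no polynomial‑time algorithm can attain a ratio exceeding
\[
  \frac{\delta W+\mu k}{W+\mu k}
  \;=\;\frac{\delta+\mu k/W}{1+\mu k/W}
  \;=\;\frac{\delta+\tfrac{1-\beta}{\beta}\cdot\tfrac{kD}{W}}{1+\tfrac{1-\beta}{\beta}\cdot\tfrac{kD}{W}} .
\]
By supermodularity, $W=g_0(X^*)=\sum_i g_0(x^*_i\mid X^*_{i-1})\le kD$ always (each marginal is at most $g_0(x^*_i\mid V\setminus x^*_i)\le D$), so this quantity is at least $\delta\beta+(1-\beta)$ and tends to $1-\beta$ precisely when $\delta\to0$ and $kD/W\to1$. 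Choosing $\delta$ small and amplifying the core so that $W\to kD$ and $k\to\infty$ therefore forces the attainable ratio below $1-\beta+\epsilon$. The endpoints are immediate: $\beta=0$ gives the vacuous bound $1+\epsilon$, and $\beta=1$ recovers Lemma~\ref{lemma:bpNotslovable}.

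The crux — and the step I expect to be the real work — is producing a fully‑curved supermodular core $g_0$ that is simultaneously (i) inapproximable to within an arbitrarily small constant $\delta$ and (ii) \emph{extremal}, i.e.\ $W=g_0(X^*)=(1-o(1))\,kD$, meaning essentially every marginal $g_0(x^*_i\mid X^*_{i-1})$ along the target equals the global maximum marginal $D$. The obvious candidates miss (ii): e.g.\ the ``number of internal edges'' function, for which maximization is Densest‑$k$‑Subgraph, has marginals along a $k$‑clique that ramp up as $0,1,\dots,k-1$ rather than staying near $D\approx k$, so even in the idealized case $kD/W\to2$ and one only obtains the weaker bound $\tfrac{2(1-\beta)}{2-\beta}>1-\beta$. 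Overcoming this seems to require a layered/threshold‑type gadget in which the target set's value jumps to its maximal per‑element rate already after the first element (so $g_0(x^*_i\mid X^*_{i-1})=D$ for all $i\ge2$), composed with a gap‑amplifying product construction — and then one must re‑verify that the composite $g$ still has curvature \emph{exactly} $\beta$, namely that $\min_v g(v)/g(v\mid V\setminus v)$ is attained at the curvature witness and is not pushed lower by an element of $X^*$. Everything else is the short computation displayed above.
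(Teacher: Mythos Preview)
Your plan has a genuine gap, and you have correctly located it yourself. Everything hinges on producing a fully-curved supermodular core $g_0$ that is simultaneously (i) inapproximable and (ii) \emph{extremal} in the sense $W=g_0(X^*)=(1-o(1))kD$. You do not supply such a core, and the natural candidates fail (ii). In particular, the core from Lemma~\ref{lemma:bpNotslovable} has $W=\tfrac12$ while $kD=k$, so $kD/W\to\infty$ and your displayed ratio tends to $1$, not $1-\beta$; the densest-$k$-subgraph core gives $kD/W\to 2$ as you note. Without an extremal core the modular floor $\mu|X|$ swamps the hard part and the construction proves nothing. (A side remark: Lemma~\ref{lemma:bpNotslovable} is an oracle/information-theoretic lower bound, not an $\mathrm{NP}$-hardness reduction, so ``unless $\mathrm P=\mathrm{NP}$'' is the wrong qualifier there.)

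The paper sidesteps this difficulty by a different parametrization: rather than ``fully-curved supermodular core $+$ modular floor'', it builds $g$ as ``modular $-\;\beta\cdot(\text{hard submodular function})$''. Concretely, take the Svitkina--Fleischer indistinguishable pair $f_1(X)=\min(|X|,\alpha)$ and $f_2(X)=\min(\gamma+|X\cap\bar R|,|X|,\alpha)$ with $\gamma=o(\alpha)$, and set $g'(X)=|X|-\beta f_1(X)$ and $g(X)=|X|-\beta f_2(X)$. Both are monotone supermodular (modular minus submodular); for \emph{every} $v$ one has $g(v)=g'(v)=1-\beta$ and $g(v\mid V\setminus v)=g'(v\mid V\setminus v)=1$, so $\curvg=\curv^{g'}=\beta$ on the nose with no need to identify a curvature witness. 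No polynomial-query algorithm distinguishes $g$ from $g'$, hence under cardinality $k=\alpha$ it cannot exceed $\mathrm{OPT}'=\alpha(1-\beta)$, while $\mathrm{OPT}\ge g(R)=\alpha-\beta\gamma$; the attainable ratio is at most $(1-\beta)/(1-\beta\gamma/\alpha)\to 1-\beta$. Making $\beta$ the multiplicative coefficient on the hard \emph{submodular} component delivers the exact curvature and the correct gap simultaneously, with no extremality requirement on any auxiliary supermodular function --- which is precisely the obstacle your approach runs into.
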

\vspace{\spacebeforeproofref}
\begin{proof}
	See Appendix~\ref{sec:hardnessCard}.
\end{proof}
For the $p$ matroid constraints case, \citet{hazan2006complexity} studied the complexity of
approximating $p$-set packing which is defined as follows: given a
family of sets over a certain domain, find the maximum number of
disjoint sets, which is actually a special case of finding the maximum
intersection of $p$ matroids. They claim that this problem cannot be
efficiently approximated to a factor better than
$O(\nicefrac{\ln p}{p})$ unless P = NP. We generalize their result to
BP maximization.

\begin{restatable}{thm}{hardnessPmatroid}
	\label{theo:hardnessPmatroid}
	{\bf Hardness for $p$ matroids constraint case.}	
	For all $0\leq\beta\leq 1$, there exists an instance of a
	BP function $h = f+ g$ with supermodular curvature $\curv^g=\beta$ such that no poly-time algorithm can achieve an approximation factor better than $(1-\curv^g)O(\frac{\ln p}{p})$  unless P=NP. 
\end{restatable}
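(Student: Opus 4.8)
The plan is to combine the NP-hardness of $p$-set packing established by \citet{hazan2006complexity} with a supermodular ``gap-amplification'' gadget that simultaneously installs the prescribed curvature $\curvg = \beta$. Recall what \citet{hazan2006complexity} give, read through the reduction of $p$-set packing to maximization over an intersection of $p$ matroids: there is a constant $c>0$ and a polynomial-time constructible family of instances consisting of a ground set $V$, matroids $\mathcal{I}_1,\dots,\mathcal{I}_p$ on $V$, a nonnegative modular function $w$ (counting chosen sets), and an integer $N$, such that it is NP-hard to distinguish the \emph{YES} case, in which some feasible $X$ has $w(X) \ge N$, from the \emph{NO} case, in which every feasible $X$ has $w(X) \le \alpha N$, where $\alpha := c\ln p / p$. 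We may take $N$ large enough that $\alpha N \ge 1$ and $|V| \ge \alpha N + 1$.

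First I would fix $\beta \in [0,1)$ and build the BP instance on the \emph{same} $V$ and the \emph{same} matroids: set $f \equiv 0$ (a legitimate normalized monotone submodular function; note $\curvf$ does not appear in the statement) and $g(X) := \phi(w(X))$, where $\phi : \mathbb{R}_{\ge 0} \to \mathbb{R}_{\ge 0}$ is the piecewise-linear convex nondecreasing function with $\phi(0)=0$, slope $1$ on $[0,\alpha N]$, and slope $1/(1-\beta)$ on $[\alpha N, \infty)$. Since $\phi$ is convex nondecreasing and $w$ is modular, $g$ is a normalized monotone nondecreasing supermodular function, so $h := f + g$ is a BP function. Its supermodular curvature is immediate: by symmetry $g(v) = \phi(1) = 1$ for every $v \in V$ (using $1 \le \alpha N$), while $g(v \mid V \setminus v) = \phi(|V|) - \phi(|V|-1) = 1/(1-\beta)$ (using $|V|-1 \ge \alpha N$); hence $\curvg = 1 - \min_{v} \frac{g(v)}{g(v \mid V\setminus v)} = 1 - (1-\beta) = \beta$, exactly as required.

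Next comes the gap analysis. Since $\phi$ is increasing, the optimum of $h$ over feasible sets is $\phi$ of the optimum of $w$; so in the \emph{YES} case $\mathrm{OPT}_h \ge \phi(N)$ and in the \emph{NO} case $\mathrm{OPT}_h \le \phi(\alpha N) = \alpha N$. A direct computation, using $\phi(N) = \alpha N + \frac{1-\alpha}{1-\beta}N$, gives $\frac{\phi(\alpha N)}{\phi(N)} = \frac{\alpha(1-\beta)}{1 - \alpha\beta} \le \alpha(1-\beta) = (1-\beta)\,O(\ln p / p)$. Hence any polynomial-time algorithm always returning a feasible set of value strictly greater than $\frac{\phi(\alpha N)}{\phi(N)} \cdot \mathrm{OPT}_h$ would output a value $> \phi(\alpha N)$ exactly on \emph{YES} instances, thereby deciding the NP-hard problem; this is the claimed hardness for every $\beta \in [0,1)$. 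The endpoint $\beta = 1$, where the asserted factor degenerates to $0$ (inapproximability to any positive factor), follows instead from Lemma~\ref{lemma:bpNotslovable}: a cardinality constraint is a single matroid, hence also an intersection of $p$ matroids, and the fully curved supermodular instance of \cite{usul33967} used there already has $\curvg = 1$.

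The only genuine work is in the gadget, and it hinges on one coincidence: the second slope $1/(1-\beta)$ is \emph{forced} by the curvature requirement (through the ratio $g(v)/g(v\mid V\setminus v)$), and it is also precisely the factor by which $\phi$ stretches the interval $[\alpha N, N]$, converting the base $O(\ln p/p)$ gap of \citet{hazan2006complexity} into a $(1-\beta)\,O(\ln p/p)$ gap; verifying that these two roles are consistent, and that placing the kink exactly at the \emph{NO}-threshold $\alpha N$ (not lower) is what keeps $\mathrm{OPT}_h$ small on \emph{NO} instances, is the crux of the argument. A secondary point to pin down in the appendix is that the promise-gap of \citet{hazan2006complexity} indeed survives the reduction from $p$-set packing to intersection-of-$p$-matroids maximization (including which variant of the packing problem is used); this is asserted in the surrounding discussion.
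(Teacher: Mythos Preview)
Your construction is essentially the paper's: both place a piecewise-linear convex (hence supermodular) function of the cardinality on top of the Hazan--Krauthgamer--Nisan $p$-set-packing instance, with the kink located at the NO-threshold so that every polynomially attainable solution sits in the shallow segment while the optimum reaches the steep one. The paper uses slopes $1-\beta$ then $1$; you use slopes $1$ then $1/(1-\beta)$ --- the same gadget up to a global factor of $1-\beta$ --- and both yield curvature exactly $\beta$ and the same gap ratio $\frac{\alpha(1-\beta)}{1-\alpha\beta}$.

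One slip to fix: the inequality $\frac{\alpha(1-\beta)}{1-\alpha\beta} \le \alpha(1-\beta)$ is backwards whenever $\beta>0$, since $1-\alpha\beta<1$. The conclusion survives because $1-\alpha\beta \ge 1-\alpha \ge \tfrac12$ for $p$ large, so $\frac{\alpha(1-\beta)}{1-\alpha\beta} \le 2\alpha(1-\beta) = (1-\beta)\,O(\ln p/p)$; this is exactly how the paper handles the denominator. Just replace the faulty step with that bound.
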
	
\vspace{\spacebeforeproofref}
\begin{proof}
	See Appendix~\ref{sec:hardnessPmatroid}.
\end{proof}

\begin{corollary}
	No polynomial algorithm can beat \textsc{GreedMax} or
	\textsc{SemiGrad} by a factor of
	$\frac{1+\epsilon}{1-e^{-1}}$ for cardinality, or $O(\ln(p))$
	for $p$ matroid constraints, unless P=NP.
\end{corollary}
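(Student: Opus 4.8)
The plan is to obtain the corollary by dividing the algorithmic lower bounds (Theorems~\ref{theorem:bound_card2} and~\ref{theo:greedyPmatroid} for \textsc{GreedMax}, and Theorems~\ref{theo:SubCard} and~\ref{theo:SubPmatroid} for \textsc{SemiGrad}, which yield the same guarantees) by the hardness upper bounds (Theorems~\ref{theo:hardnessCard} and~\ref{theo:hardnessPmatroid}). The point is that the factor by which any polynomial-time algorithm $\mathcal{A}$ could beat \textsc{GreedMax} on an instance is at most the ratio of the largest value $\mathcal{A}$ can possibly attain (bounded by the hardness result, assuming $\mathrm{P}\neq\mathrm{NP}$) to the value \textsc{GreedMax} is guaranteed to attain. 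I would instantiate both quantities on the hard instance of Theorem~\ref{theo:hardnessCard} (resp.\ Theorem~\ref{theo:hardnessPmatroid}), whose supermodular curvature $\curv^g$ we are free to fix, and bound this ratio by $\frac{1+\epsilon}{1-e^{-1}}$ (resp.\ $O(\ln p)$). The same computation applies verbatim to \textsc{SemiGrad}.

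For the cardinality case, first rewrite the \textsc{GreedMax} guarantee as $\frac{1}{\curv_f}\left[1-e^{-(1-\curv^g)\curv_f}\right] = (1-\curv^g)\,\frac{1-e^{-t}}{t}$ with $t := (1-\curv^g)\curv_f \in [0,1]$. An elementary computation, $\frac{d}{dt}\frac{1-e^{-t}}{t} = \frac{(1+t)e^{-t}-1}{t^2}\le 0$ (since $e^{t}\ge 1+t$), shows $t\mapsto\frac{1-e^{-t}}{t}$ is nonincreasing on $(0,1]$, hence at least its value $1-e^{-1}$ at $t=1$; so \textsc{GreedMax} attains at least $(1-\curv^g)(1-e^{-1})\,h(X^*)$ on every instance. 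On the hard instance with $\curv^g=\beta<1$, Theorem~\ref{theo:hardnessCard} says no polynomial algorithm attains more than $(1-\beta+\delta)h(X^*)$ for any $\delta>0$ unless $\mathrm{P}=\mathrm{NP}$. Dividing, the beating factor is at most
\begin{align}
\frac{1-\beta+\delta}{(1-\beta)(1-e^{-1})} \;=\; \frac{1}{1-e^{-1}}\left(1+\frac{\delta}{1-\beta}\right),
\end{align}
and fixing $\beta$ to be any constant below $1$ and taking $\delta=\epsilon(1-\beta)$ makes this exactly $\frac{1+\epsilon}{1-e^{-1}}$. (The constant $\frac{1}{1-e^{-1}}$ is the supremum over $t\in(0,1]$, approached as $t\to1$, i.e.\ $\curv_f\to1,\ \curv^g\to0$.)

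For the $p$ matroid case the factors $(1-\curv^g)$ cancel outright. Theorems~\ref{theo:greedyPmatroid} and~\ref{theo:SubPmatroid} guarantee at least $\frac{1-\curv^g}{(1-\curv^g)\curv_f+p}\,h(X^*)$, while Theorem~\ref{theo:hardnessPmatroid} gives that no polynomial algorithm attains more than $(1-\curv^g)\,O\!\left(\frac{\ln p}{p}\right)h(X^*)$ on the corresponding hard instance unless $\mathrm{P}=\mathrm{NP}$. Hence the beating factor is at most
\begin{align}
O\!\left(\tfrac{\ln p}{p}\right)\cdot\bigl((1-\curv^g)\curv_f + p\bigr) \;\le\; O\!\left(\tfrac{\ln p}{p}\right)\cdot(p+1) \;=\; O(\ln p),
\end{align}
using $0\le(1-\curv^g)\curv_f\le1$. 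I do not expect a genuine obstacle: the argument is just ``divide the two bounds,'' so all the real content lives in the cited theorems. The only thing requiring a little care is the $\epsilon/\delta$ bookkeeping in the cardinality case — recognizing that the $\epsilon$ in the corollary absorbs the (rescaled) approximation slack of the hardness statement, and that one should pick the hard instance to have $\curv^g$ bounded away from $1$ so that the division by $(1-\curv^g)$ is harmless.
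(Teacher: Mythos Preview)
The paper states this corollary without proof, as an immediate consequence of combining the algorithmic guarantees (Theorems~\ref{theorem:bound_card2}, \ref{theo:greedyPmatroid}, \ref{theo:SubCard}, \ref{theo:SubPmatroid}) with the hardness results (Theorems~\ref{theo:hardnessCard}, \ref{theo:hardnessPmatroid}). Your proposal is exactly the natural way to flesh this out: bound the ``beating factor'' on the hard instance by the ratio of the hardness upper bound to the algorithm's lower bound.

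Your cardinality argument is correct. The rewrite $\frac{1}{\curv_f}\bigl[1-e^{-(1-\curv^g)\curv_f}\bigr]=(1-\curv^g)\,\frac{1-e^{-t}}{t}$ with $t=(1-\curv^g)\curv_f$ and the monotonicity of $t\mapsto\frac{1-e^{-t}}{t}$ give the uniform lower bound $(1-\curv^g)(1-e^{-1})$, and dividing the hardness bound $1-\curv^g+\delta$ by this yields $\frac{1+\epsilon}{1-e^{-1}}$ after the $\delta=\epsilon(1-\beta)$ rescaling. One small remark: the hard instance in Theorem~\ref{theo:hardnessCard} actually has $f\equiv 0$, hence $\curv_f=0$, so on that particular instance the \textsc{GreedMax} guarantee is the sharper $1-\curv^g$ rather than $(1-\curv^g)(1-e^{-1})$. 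Your uniform bound is still valid (it is only weaker), and it is what is needed to match the corollary as stated; but this explains why the $\frac{1}{1-e^{-1}}$ factor is an artifact of taking the worst case over $\curv_f$ rather than the $\curv_f$ of the specific hard instance.

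Your matroid argument is likewise correct: the $(1-\curv^g)$ factors cancel, and $(1-\curv^g)\curv_f+p\le p+1$ gives $O\bigl(\frac{\ln p}{p}\bigr)\cdot(p+1)=O(\ln p)$. This is precisely the computation the paper's table and abstract summarize.
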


\section{Computational Experiments}
\label{sec:experiments}

\begin{figure}	
\begin{center}
  \begin{tabular}{cc}
		\includegraphics[trim = 0 0.0cm 0 0cm,clip,width=0.4\textwidth]{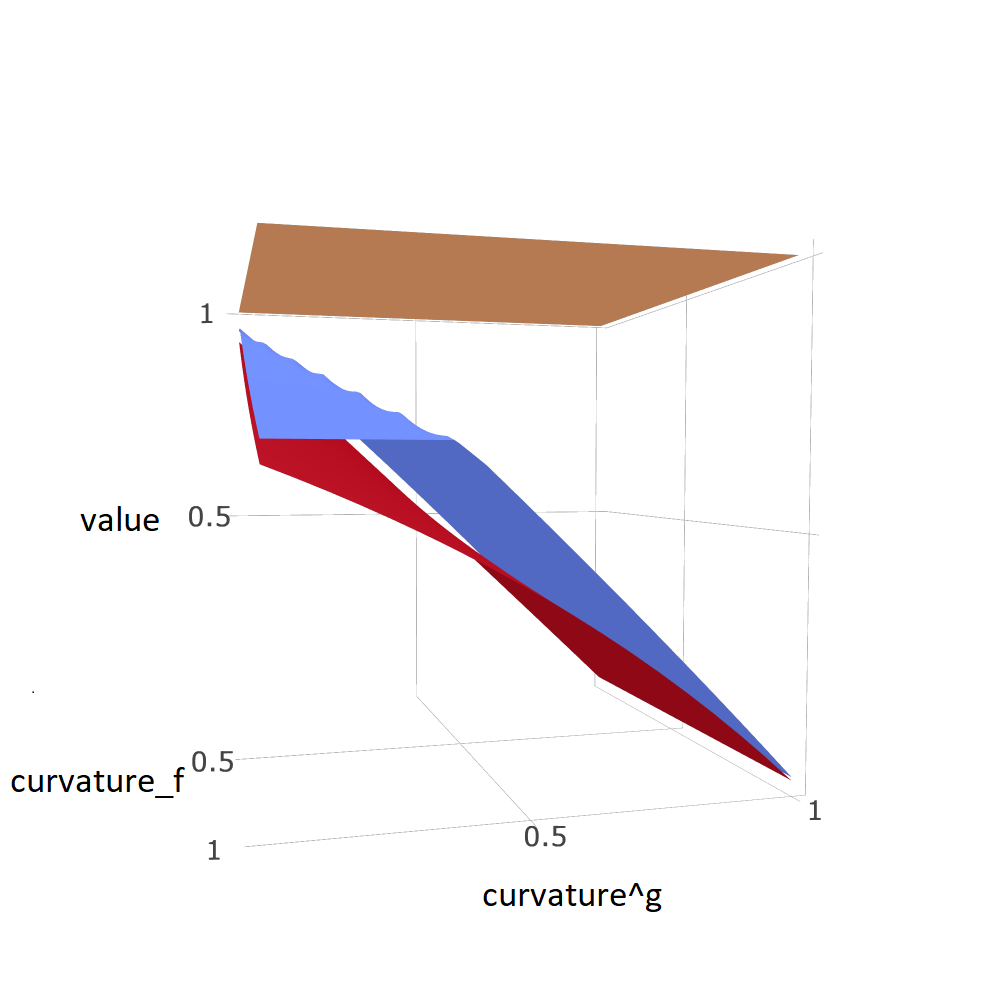} & 
		\includegraphics[trim = 0 0.0cm 0 0cm,clip,width=0.4\textwidth]{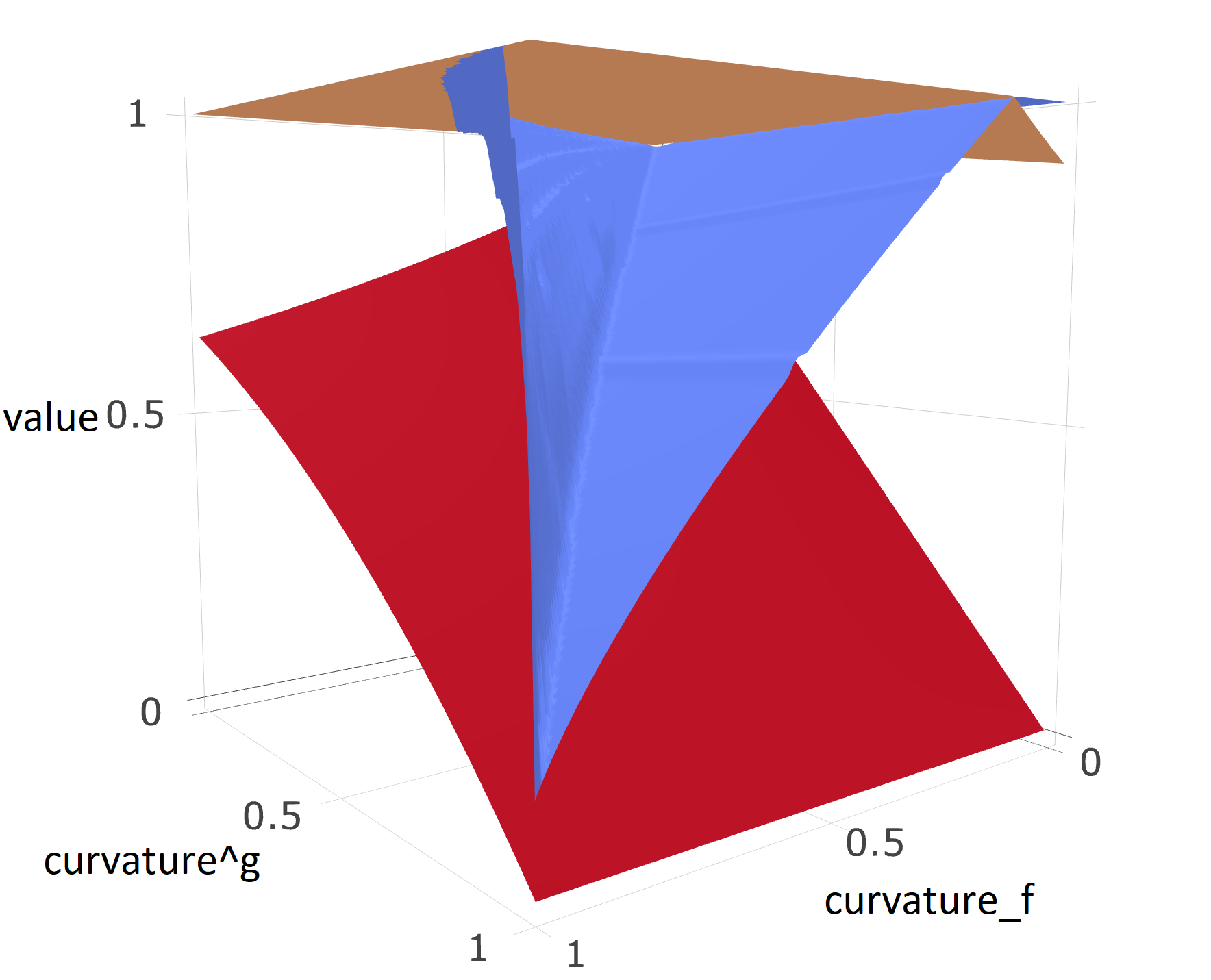} \\
		(a) & (b) 
  \end{tabular}
\end{center}
	\caption{Empirical test of our guarantee.
		The upper and middle surface
		indicate the performance of $\textsc{SemiGrad}$ and $\textsc{GreedMax}$ respectively, and the lower surface is the
		theoretical worst case guarantee. (a) and (b) are two sets of experiments.}
	\label{fig:experiment}
\end{figure} 

We empirically test our guarantees for BP maximization subject to a
cardinality constraint on contrived functions using
$\textsc{GreedMax}$ and $\text{SemiGrad}$. For the first experiment,
we let $|V| = 20$ set the cardinality constraint to $k = 10$, and
partition the ground set into $|V_1|=|V_2|=k$, $V_1\cup V_2 = V$ where
$V_1 = \set{v_1,v_2,\ldots,v_{k}}$. Let
$w_i =
\frac{1}{\alpha}\left[\left(1-\frac{\alpha}{k}\right)^i-\left(1-\frac{\alpha}{k}\right)^{i+1}\right]$
for $i = 1,2,\ldots,k$. Then we define the submodular and supermodular
functions as follows,
$f(X) = \left[\frac{k-\alpha|X\cap V_2|}{k}\right]\sum_{\{i:v_i\in
  X\}}w_i+\frac{|X\cap V_2|}{k}$,
$g(X) = |X| - \beta\min(1+|X\cap V_1|,|X|,k)+\epsilon\max(|X|,
|X|+\frac{\beta}{1-\beta}(|X\cap V_2|-k+1))$ and
$h(X) = \lambda f(X) + (1-\lambda) g(X)$ for
$0\leq \alpha,\beta,\lambda\leq 1$ and
$\epsilon=1\times10^{-5}$. Immediately, we notice that
$ \curv_f = \alpha$ and $\curv^g = \beta$. In particular, we choose
$\alpha,\beta,\lambda = 0,0.01,0.02,\ldots, 1$ and for all cases, we
normalize $h(X)$ using either exhaustive search so that $\text{OPT} = h(X^*)
=1$. Since we are doing a proof-of-concept experiment to verify the
guarantee, we are interested in the worst case performance at
curvatures $\curv_f$ and $\curv^g$. In Figure~\ref{fig:experiment}(a),
we see that both methods are always above the theoretical worst case
guarantee, as expected. Interestingly, $\textsc{SemiGrad}$ is doing
significantly better than $\textsc{GreedMax}$ demonstrating the
different behavior of the algorithms, despite their identical
guarantee.  Moreover, the gap between $\textsc{GreedMax}$ and the
bound layer is small (the maximum difference is 0.1852), which
suggests the guarantee for greedy may be almost tight in this
case.\looseness-1

The above example is designed to show the tightness of
$\textsc{GreedMax}$ and the better potential performance of
$\textsc{SemiGrad}$.  For a next experiment, we again let $|V| = 20$
and $k = 10$, partition the ground set into $|V_1|=|V_2|=k$,
$V_1\cup V_2 = V$. Let $f(X)=|X\cap V_1|^\alpha$ and
$g(X)=\max(0, \frac{|X\cap V_2|-\beta}{1-\beta})$
$0\leq \alpha,\beta \leq 1$, and normalize $h$ (by exhaustive search)
to ensure $\text{OPT} = h(X^*) = 1$.  Immediately, we notice that
the curvature of $f$ is $\curv_f= 1- k^\alpha + (k-1)^\alpha$ and the
curvature of $g$ is $\curv^g=\beta$. The objective BP function is
$h(X)=f(X)+g(X)$. We see that $\text{SemiGrad}$ is again doing better
than $\textsc{GreedMax}$ in most but not all cases
(Figure~\ref{fig:experiment}(b)) and both are above their bounds, as
they should be.

\newpage
\bibliography{submod}

\begin{thebibliography}{58}
\providecommand{\natexlab}[1]{#1}
\providecommand{\url}[1]{\texttt{#1}}
\expandafter\ifx\csname urlstyle\endcsname\relax
  \providecommand{\doi}[1]{doi: #1}\else
  \providecommand{\doi}{doi: \begingroup \urlstyle{rm}\Url}\fi

\bibitem[Bai et~al.(2016)Bai, Bilmes, and Noble]{bai-sgm-acm-bcb-2016}
Wenruo Bai, Jeffrey Bilmes, and William~S. Noble.
\newblock Bipartite matching generalizations for peptide identification in
  tandem mass spectrometry.
\newblock In \emph{7th ACM Conference on Bioinformatics, Computational Biology,
  and Health Informatics (ACM BCB)}, ACM SIGBio, Seattle, WA, October 2016.
  ACM, ACM SIGBio.

\bibitem[Bang-Jensen et~al.(2004)Bang-Jensen, Gutin, and
  Yeo]{BANGJENSEN2004121}
Jørgen Bang-Jensen, Gregory Gutin, and Anders Yeo.
\newblock When the greedy algorithm fails.
\newblock \emph{Discrete Optimization}, 1\penalty0 (2):\penalty0 121 -- 127,
  2004.
\newblock ISSN 1572-5286.
\newblock \doi{https://doi.org/10.1016/j.disopt.2004.03.007}.
\newblock URL
  \url{http://www.sciencedirect.com/science/article/pii/S1572528604000222}.

\bibitem[Bednorz(2008)]{bednorz2008advances}
Witold Bednorz, editor.
\newblock \emph{Advances in greedy algorithms}, volume~14.
\newblock Wienna: I-Tech Education and Publishing KG, 2008.

\bibitem[Bian et~al.(2017)Bian, Buhmann, Krause, and
  Tschiatschek]{2017arXiv170302100B}
Andrew~An Bian, Joachim~M Buhmann, Andreas Krause, and Sebastian Tschiatschek.
\newblock Guarantees for greedy maximization of non-submodular functions with
  applications.
\newblock \emph{arXiv preprint arXiv:1703.02100}, 2017.

\bibitem[Bilmes and Bai(2017)]{bilmes-dsf-arxiv-2017}
Jeffrey Bilmes and Wenruo Bai.
\newblock {Deep Submodular Functions}.
\newblock \emph{Arxiv}, abs/1701.08939, Jan 2017.
\newblock URL \url{http://arxiv.org/abs/1701.08939}.

\bibitem[Borodin et~al.(2014)Borodin, Le, and
  Ye]{DBLP:journals/corr/BorodinLY14}
Allan Borodin, Dai Le, and Yuli Ye.
\newblock Proportionally (formerly weakly) submodular functions.
\newblock \emph{CoRR}, abs/1401.6697, 2014.
\newblock URL \url{http://arxiv.org/abs/1401.6697}.
\newblock
  \url{http://www.cs.toronto.edu/~bor/Papers/proportional-talg-submit.pdf}.

\bibitem[Conforti and Cornuejols(1984)]{conforti1984submodular}
M.~Conforti and G.~Cornuejols.
\newblock Submodular set functions, matroids and the greedy algorithm: tight
  worst-case bounds and some generalizations of the {R}ado-{E}dmonds theorem.
\newblock \emph{Discrete Applied Mathematics}, 7\penalty0 (3):\penalty0
  251--274, 1984.

\bibitem[Cormen(2009)]{cormen2009introduction}
Thomas~H Cormen.
\newblock \emph{Introduction to algorithms}.
\newblock MIT press, 2009.

\bibitem[Cunningham(1985)]{cunningham1985submodular}
W.H. Cunningham.
\newblock On submodular function minimization.
\newblock \emph{Combinatorica}, 5\penalty0 (3):\penalty0 185--192, 1985.

\bibitem[Das and Kempe(2011)]{das2011submodular}
Abhimanyu Das and David Kempe.
\newblock Submodular meets spectral: greedy algorithms for subset selection,
  sparse approximation and dictionary selection.
\newblock In \emph{Proceedings of the 28th International Conference on
  International Conference on Machine Learning}, pages 1057--1064. Omnipress,
  2011.

\bibitem[Dietrich and Hoffman(2003)]{dietrich2003greedy}
Brenda~L Dietrich and Alan~J Hoffman.
\newblock On greedy algorithms, partially ordered sets, and submodular
  functions.
\newblock \emph{IBM Journal of Research and Development}, 47\penalty0
  (1):\penalty0 25--30, 2003.

\bibitem[Dress and Wenzel(1990)]{dress1990valuated}
Andreas~WM Dress and Walter Wenzel.
\newblock Valuated matroids: A new look at the greedy algorithm.
\newblock \emph{Applied Mathematics Letters}, 3\penalty0 (2):\penalty0 33--35,
  1990.

\bibitem[Dunstan and Welsh(1973)]{dunstan1973greedy}
FDJ Dunstan and DJA Welsh.
\newblock A greedy algorithm for solving a certain class of linear programmes.
\newblock \emph{Mathematical Programming}, 5\penalty0 (1):\penalty0 338--353,
  1973.

\bibitem[Edmonds(1970)]{edmondspolyhedra}
J.~Edmonds.
\newblock {Submodular functions, Matroids and Certain Polyhedra}.
\newblock \emph{Combinatorial structures and their Applications}, 1970.

\bibitem[Edmonds(1971)]{edmonds1971matroids}
Jack Edmonds.
\newblock Matroids and the greedy algorithm.
\newblock \emph{Mathematical programming}, 1\penalty0 (1):\penalty0 127--136,
  1971.

\bibitem[Faigle(1979)]{faigle1979greedy}
Ulrich Faigle.
\newblock The greedy algorithm for partially ordered sets.
\newblock \emph{Discrete Mathematics}, 28\penalty0 (2):\penalty0 153--159,
  1979.

\bibitem[Feige and Izsak(2013)]{feige2013welfare}
Uriel Feige and Rani Izsak.
\newblock Welfare maximization and the supermodular degree.
\newblock In \emph{Proceedings of the 4th conference on Innovations in
  Theoretical Computer Science}, pages 247--256. ACM, 2013.

\bibitem[Feldman and Izsak(2014{\natexlab{a}})]{DBLP:conf_approx_FeldmanI14}
Moran Feldman and Rani Izsak.
\newblock Constrained monotone function maximization and the supermodular
  degree.
\newblock In \emph{Approximation, Randomization, and Combinatorial
  Optimization. Algorithms and Techniques, {APPROX/RANDOM} 2014, September 4-6,
  2014, Barcelona, Spain}, pages 160--175, 2014{\natexlab{a}}.
\newblock \doi{10.4230/LIPIcs.APPROX-RANDOM.2014.160}.
\newblock URL \url{https://doi.org/10.4230/LIPIcs.APPROX-RANDOM.2014.160}.

\bibitem[Feldman and Izsak(2014{\natexlab{b}})]{feldman2014constrained}
Moran Feldman and Rani Izsak.
\newblock Constrained monotone function maximization and the supermodular
  degree.
\newblock \emph{arXiv preprint arXiv:1407.6328}, 2014{\natexlab{b}}.

\bibitem[Fisher et~al.(1978)Fisher, Nemhauser, and Wolsey]{fisher1978analysis}
M.L. Fisher, G.L. Nemhauser, and L.A. Wolsey.
\newblock An analysis of approximations for maximizing submodular set
  functions---ii.
\newblock \emph{Polyhedral combinatorics}, pages 73--87, 1978.

\bibitem[Fujishige(2005)]{fujishige2005submodular}
S.~Fujishige.
\newblock \emph{Submodular functions and optimization}, volume~58.
\newblock Elsevier Science, 2005.

\bibitem[Goemans et~al.(2009)Goemans, Harvey, Iwata, and
  Mirrokni]{goemans2009approximating}
Michel~X Goemans, Nicholas~JA Harvey, Satoru Iwata, and Vahab Mirrokni.
\newblock Approximating submodular functions everywhere.
\newblock In \emph{Proceedings of the twentieth annual ACM-SIAM symposium on
  Discrete algorithms}, pages 535--544. Society for Industrial and Applied
  Mathematics, 2009.

\bibitem[Hazan et~al.(2006)Hazan, Safra, and Schwartz]{hazan2006complexity}
Elad Hazan, Shmuel Safra, and Oded Schwartz.
\newblock On the complexity of approximating k-set packing.
\newblock \emph{computational complexity}, 15\penalty0 (1):\penalty0 20--39,
  2006.

\bibitem[Horel and Singer(2016)]{horel2016maximization}
Thibaut Horel and Yaron Singer.
\newblock Maximization of approximately submodular functions.
\newblock In \emph{Advances In Neural Information Processing Systems}, pages
  3045--3053, 2016.

\bibitem[Huffman(1952)]{huffman1952method}
David~A Huffman.
\newblock A method for the construction of minimum-redundancy codes.
\newblock \emph{Proceedings of the IRE}, 40\penalty0 (9):\penalty0 1098--1101,
  1952.

\bibitem[Il'ev(2001)]{il2001approximation}
Victor~P Il'ev.
\newblock An approximation guarantee of the greedy descent algorithm for
  minimizing a supermodular set function.
\newblock \emph{Discrete Applied Mathematics}, 114\penalty0 (1):\penalty0
  131--146, 2001.

\bibitem[Iyer and Bilmes(2012{\natexlab{a}})]{rkiyeruai2012}
R.~Iyer and J.~Bilmes.
\newblock Algorithms for approximate minimization of the difference between
  submodular functions, with applications.
\newblock \emph{In UAI}, 2012{\natexlab{a}}.

\bibitem[Iyer et~al.(2013{\natexlab{a}})Iyer, Jegelka, and
  Bilmes]{curvaturemin}
R.~Iyer, S.~Jegelka, and J.~Bilmes.
\newblock {Curvature and Optimal Algorithms for Learning and Minimizing
  Submodular Functions }.
\newblock In \emph{Neural Information Processing Society (NIPS)},
  2013{\natexlab{a}}.

\bibitem[Iyer and Bilmes(2012{\natexlab{b}})]{rkiyersubmodBregman2012}
Rishabh Iyer and Jeff~A Bilmes.
\newblock Submodular-bregman and the lov{\'a}sz-bregman divergences with
  applications.
\newblock In \emph{Advances in Neural Information Processing Systems}, pages
  2933--2941, 2012{\natexlab{b}}.

\bibitem[Iyer et~al.(2013{\natexlab{b}})Iyer, Jegelka, and
  Bilmes]{rkiyersemiframework2013}
Rishabh Iyer, Stefanie Jegelka, and Jeff Bilmes.
\newblock Fast semidifferential-based submodular function optimization.
\newblock In \emph{International Conference on Machine Learning}, pages
  855--863, 2013{\natexlab{b}}.

\bibitem[Karp and Kung(2000)]{karp2000gpsr}
Brad Karp and Hsiang-Tsung Kung.
\newblock Gpsr: Greedy perimeter stateless routing for wireless networks.
\newblock In \emph{Proceedings of the 6th annual international conference on
  Mobile computing and networking}, pages 243--254. ACM, 2000.

\bibitem[Kempner et~al.(2008)Kempner, Levit, and Muchnik]{kempner2008quasi}
Yulia Kempner, Vadim~E Levit, and Ilya Muchnik.
\newblock Quasi-concave functions and greedy algorithms.
\newblock In \emph{Greedy Algorithms}. InTech, 2008.

\bibitem[Kohli et~al.(2009)Kohli, Kumar, and Torr]{kohli2009p3}
Pushmeet Kohli, M~Pawan Kumar, and Philip~HS Torr.
\newblock P$^3$ \& beyond: Move making algorithms for solving higher order
  functions.
\newblock \emph{Pattern Analysis and Machine Intelligence, IEEE Transactions
  on}, 31\penalty0 (9):\penalty0 1645--1656, 2009.

\bibitem[Korte et~al.(2012)Korte, Lov{\'a}sz, and Schrader]{korte2012greedoids}
Bernhard Korte, L{\'a}szl{\'o} Lov{\'a}sz, and Rainer Schrader.
\newblock \emph{Greedoids}, volume~4.
\newblock Springer Science \& Business Media, 2012.

\bibitem[Krause et~al.(2006)Krause, Guestrin, Gupta, and
  Kleinberg]{krause2006near}
Andreas Krause, Carlos Guestrin, Anupam Gupta, and Jon Kleinberg.
\newblock Near-optimal sensor placements: Maximizing information while
  minimizing communication cost.
\newblock In \emph{Proceedings of the 5th international conference on
  Information processing in sensor networks}, pages 2--10. ACM, 2006.

\bibitem[Kruskal(1956)]{kruskal1956shortest}
Joseph~B Kruskal.
\newblock On the shortest spanning subtree of a graph and the traveling
  salesman problem.
\newblock \emph{Proceedings of the American Mathematical society}, 7\penalty0
  (1):\penalty0 48--50, 1956.

\bibitem[Lee et~al.(2010)Lee, Sviridenko, and Vondr{\'a}k]{lee2010submodular}
Jon Lee, Maxim Sviridenko, and Jan Vondr{\'a}k.
\newblock Submodular maximization over multiple matroids via generalized
  exchange properties.
\newblock \emph{Mathematics of Operations Research}, 35\penalty0 (4):\penalty0
  795--806, 2010.

\bibitem[Lin and Bilmes(2012)]{lin2012learning}
H.~Lin and J.~Bilmes.
\newblock Learning mixtures of submodular shells with application to document
  summarization.
\newblock In \emph{Uncertainty in Artificial Intelligence (UAI)}, 2012.

\bibitem[Lin and Bilmes(2011{\natexlab{a}})]{lin2011class}
Hui Lin and Jeff Bilmes.
\newblock A class of submodular functions for document summarization.
\newblock In \emph{Proceedings of the 49th Annual Meeting of the Association
  for Computational Linguistics: Human Language Technologies-Volume 1}, pages
  510--520. Association for Computational Linguistics, 2011{\natexlab{a}}.

\bibitem[Lin and Bilmes(2011{\natexlab{b}})]{lin2011word}
Hui Lin and Jeff Bilmes.
\newblock Word alignment via submodular maximization over matroids.
\newblock In \emph{Proceedings of the 49th Annual Meeting of the Association
  for Computational Linguistics: Human Language Technologies: short
  papers-Volume 2}, pages 170--175. Association for Computational Linguistics,
  2011{\natexlab{b}}.

\bibitem[Liu et~al.(2017)Liu, Chong, and Pezeshki]{liu2017improved}
Yajing Liu, Edwin~KP Chong, and Ali Pezeshki.
\newblock Improved bounds for the greedy strategy in optimization problems with
  curvatures.
\newblock \emph{arXiv preprint arXiv:1705.04195}, 2017.

\bibitem[Melamed(2000)]{melamed2000models}
I~Dan Melamed.
\newblock Models of translational equivalence among words.
\newblock \emph{Computational Linguistics}, 26\penalty0 (2):\penalty0 221--249,
  2000.

\bibitem[Narasimhan and Bilmes(2005)]{narasimhanbilmes}
Mukund Narasimhan and Jeff Bilmes.
\newblock A submodular-supermodular procedure with applications to
  discriminative structure learning.
\newblock In \emph{Uncertainty in Artificial Intelligence (UAI)}, Edinburgh,
  Scotland, July 2005. Morgan Kaufmann Publishers.

\bibitem[Nemhauser et~al.(1978{\natexlab{a}})Nemhauser, Wolsey, and
  Fisher]{nemhauser1978analysis}
George~L Nemhauser, Laurence~A Wolsey, and Marshall~L Fisher.
\newblock An analysis of approximations for maximizing submodular set
  functions—i.
\newblock \emph{Mathematical Programming}, 14\penalty0 (1):\penalty0 265--294,
  1978{\natexlab{a}}.

\bibitem[Nemhauser et~al.(1978{\natexlab{b}})Nemhauser, Wolsey, and
  Fisher]{nemhauser1978}
G.L. Nemhauser, L.A. Wolsey, and M.L. Fisher.
\newblock An analysis of approximations for maximizing submodular set
  functions---i.
\newblock \emph{Mathematical Programming}, 14\penalty0 (1):\penalty0 265--294,
  1978{\natexlab{b}}.

\bibitem[Prim(1957)]{prim1957shortest}
Robert~Clay Prim.
\newblock Shortest connection networks and some generalizations.
\newblock \emph{Bell Labs Technical Journal}, 36\penalty0 (6):\penalty0
  1389--1401, 1957.

\bibitem[Ruiz and St{\"u}tzle(2007)]{ruiz2007simple}
Rub{\'e}n Ruiz and Thomas St{\"u}tzle.
\newblock A simple and effective iterated greedy algorithm for the permutation
  flowshop scheduling problem.
\newblock \emph{European Journal of Operational Research}, 177\penalty0
  (3):\penalty0 2033--2049, 2007.

\bibitem[Sarpatwar et~al.(2017)Sarpatwar, Schieber, and
  Shachnai]{sarpatwar2017interleaved}
Kanthi~K Sarpatwar, Baruch Schieber, and Hadas Shachnai.
\newblock Interleaved algorithms for constrained submodular function
  maximization.
\newblock \emph{arXiv preprint arXiv:1705.06319}, 2017.

\bibitem[Smith and Thai(2017)]{smith2017breaking}
J~David Smith and My~T Thai.
\newblock Breaking the bonds of submodularity: Empirical estimation of
  approximation ratios for monotone non-submodular greedy maximization.
\newblock \emph{arXiv preprint arXiv:1702.07002}, 2017.

\bibitem[Sviridenko et~al.(2013)Sviridenko, Vondr{\'a}k, and
  Ward]{DBLP:journals/corr/SviridenkoW13}
Maxim Sviridenko, Jan Vondr{\'a}k, and Justin Ward.
\newblock Tight bounds for submodular and supermodular optimization with
  bounded curvature.
\newblock \emph{CoRR}, abs/1311.4728, 2013.
\newblock URL \url{http://arxiv.org/abs/1311.4728}.

\bibitem[Sviridenko et~al.(2015)Sviridenko, Vondr{\'a}k, and
  Ward]{sviridenko2015optimal}
Maxim Sviridenko, Jan Vondr{\'a}k, and Justin Ward.
\newblock Optimal approximation for submodular and supermodular optimization
  with bounded curvature.
\newblock In \emph{Proceedings of the Twenty-Sixth Annual ACM-SIAM Symposium on
  Discrete Algorithms}, pages 1134--1148. Society for Industrial and Applied
  Mathematics, 2015.

\bibitem[Svitkina and Fleischer(2011)]{svitkina2008submodular}
Zoya Svitkina and Lisa Fleischer.
\newblock Submodular approximation: Sampling-based algorithms and lower bounds.
\newblock \emph{SIAM Journal on Computing}, 40\penalty0 (6):\penalty0
  1715--1737, 2011.

\bibitem[Topkis(2011)]{topkis2011supermodularity}
Donald~M Topkis.
\newblock \emph{Supermodularity and complementarity}.
\newblock Princeton university press, 2011.

\bibitem[usul
  \url{https://cstheory.stackexchange.com/users/8243/usul}(2016)]{usul33967}
usul \url{https://cstheory.stackexchange.com/users/8243/usul}.
\newblock Maximizing a monotone supermodular function subject to a cardinality
  constraint.
\newblock Theoretical Computer Science Stack Exchange, 2016.
\newblock URL \url{https://cstheory.stackexchange.com/q/33967}.

\bibitem[Wang et~al.(2016)Wang, Moran, Wang, and Pan]{wang2016approximation}
Zengfu Wang, Bill Moran, Xuezhi Wang, and Quan Pan.
\newblock Approximation for maximizing monotone non-decreasing set functions
  with a greedy method.
\newblock \emph{Journal of Combinatorial Optimization}, 31\penalty0
  (1):\penalty0 29--43, 2016.

\bibitem[Wei et~al.(2015)Wei, Iyer, and Bilmes]{wei2015-submodular-data-active}
Kai Wei, Rishabh Iyer, and Jeff Bilmes.
\newblock Submodularity in data subset selection and active learning.
\newblock In \emph{International Conference on Machine Learning (ICML)}, Lille,
  France, 2015.

\bibitem[Wolsey(1982)]{wolsey1982analysis}
Laurence~A. Wolsey.
\newblock An analysis of the greedy algorithm for the submodular set covering
  problem.
\newblock \emph{Combinatorica}, 2\penalty0 (4):\penalty0 385--393, 1982.

\bibitem[Zhang et~al.(2000)Zhang, Schwartz, Wagner, and
  Miller]{zhang2000greedy}
Zheng Zhang, Scott Schwartz, Lukas Wagner, and Webb Miller.
\newblock A greedy algorithm for aligning dna sequences.
\newblock \emph{Journal of Computational biology}, 7\penalty0 (1-2):\penalty0
  203--214, 2000.

\end{thebibliography}
\bibliographystyle{plainnat}
\looseness-1
\newpage
\appendix
\section{Proof of Lemma~\ref{lemma:bpNotslovable}}
\label{sec:bpNotslovable}
\lembpNotslovable*
\begin{proof}

  We consider the BP problem with ground set $n$ and a cardinality
  constraint $|X|\leq k = n/2$. Let $R\subseteq V$ be an arbitrary set
  with $|R| = k$. Let $f=0$ and $g'(X) = \max(|X|-k,0)$ so that
  $g'(X) = 0 $ for all $|X|=k$.  $g'(X)$ is clearly supermodular.
	
  Let $g(X) = g'(X)$ for all $X\neq R$ but $g(R) = 0.5$.  We notice
  that for $X \subset V$ and $v\notin X$, $g(v|X) = 0$ if
  $|X|\leq k-2$, $g(v|X) = 0$ or $0.5$ if $|X|= k-1$, $g(v|X) = 0.5$
  or $1$ if $|X|=k$, and $g(v|X) = 1$ if $|X|\geq k+1$. Immediately, we
  have for all $X\subset Y\subset V$ and $v\notin Y$,
  $g(v|X)\leq g(v|Y)$. Therefore, $g(X)$ is also supermodular.
	
  Next, we use a proof technique similar
  to~\cite{svitkina2008submodular}. Note that $g'(X)=g(X)$ if and only
  if $X\neq R$. So for any algorithm maximizing $g(X)$, before it
  evaluates $g(R)$, all function evaluations are the same with
  maximizing $g'(X)$.  Additionally, since $g'(X) = \max(|X|-k,0)$, it
  is permutation symmetric. Therefore, the algorithm can only do
  random search to find $R$. If the algorithm acquires a polynomial
  number $O(n^m)$ of sets of size $k$, the probability of finding
  $R$ is
  $\frac{O(n^m)}{\binom{n}{k}}\leq \frac{O(n^m)}{(n/k)^k} =
  \frac{O(n^m)}{2^{n/2}}\leq O(2^{-n/2+\epsilon n})$ for all
  $\epsilon>0$. Therefore, no polynomial time algorithm can
  distinguish $g$ and $g'$ with probability greater than
  $1-O(2^{-n/2+\epsilon n})$ and will return $0$ in almost all cases.
	
        Hence, we have $\max_{|X|\leq k} f(X)+g(X) = 0.5>0$ so no
        polynomial algorithm can do better than
        $\max_{|X|\leq k} f(X)+g'(X) = 0$ with high probability, or
        has any positive guarantee.
\end{proof}

\section{Proof of Lemma~\ref{lemma:BPnotalways}}
\label{sec:BPnotalways}
\lembpnotalways*
\begin{proof}
	Let $h(X)=\min(\max(|X|,1),3)-1$. This function is monotonic, and we wish to show it is not BP decomposable.  Let $A\subset B$ be subsets of $V$ with $|A|=1$ and $|B|=3$. Let $v\in V\setminus B$. We calculate that $h(v|\emptyset)=0$, $h(v|A)=1$, $h(v|B)=0$. So $h(v|\emptyset)+h(v|B)
	<h(v|A)$. 
	
	Assume $h(X)=f(X)+g(X)$ where $f$ is submodular, $g$ is supermodular and both are monotonic non-decreasing. We have $f(v|\emptyset)+f(v|B)\geq f(v|\emptyset)\geq f(v|A)$ and $g(v|\emptyset)+g(v|B)\geq g(v|B)\geq g(v|A)$. Therefore $h(v|\emptyset)+h(v|B)
	\geq h(v|A)$ by summing the two inequalities, which is a contradiction. We thus have that $h$ is not BP decomposable.
	  
\end{proof}

\section{Proof of Lemma~\ref{lemma:generalGreed}}
We begin with the following four-part lemma,
\begin{lemma}
	\label{lemma:curvInequ}
	For a BP function $h(X) = f(X) + g(X)$, we have 
	\begin{enumerate}[label=(\roman*)]
		\item $h(v|Y)\geq (1-\curv_f)h(v|X)$ for all
		$X\subseteq Y\subset V$ and $v\notin Y$ \label{lemma:curvInequ1}
		\item $h(v|Y)\leq \frac{1}{1-\curvg}h(v|X)$ for all
		$X\subseteq Y\subset V$ and $v\notin Y$ \label{lemma:curvInequ2}
		\item $h(X|Y)\geq (1-\curvf)\sum_{v\in X\setminus Y}h(v|Y)$  for
		all $X,Y\subseteq V$ \label{lemma:curvInequ3}
		\item $h(X|Y)\leq \frac{1}{1-\curvg}\sum_{v\in X\setminus Y}h(v|Y)$  for
		all $X,Y\subseteq V$ \label{lemma:curvInequ4}
	\end{enumerate}
\end{lemma}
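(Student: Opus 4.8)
The plan is to reduce the lemma to its two single-element cases \ref{lemma:curvInequ1} and \ref{lemma:curvInequ2}, and then to obtain the set versions \ref{lemma:curvInequ3} and \ref{lemma:curvInequ4} by writing $h(X\mid Y)$ as a telescoping sum of marginal gains over the elements of $X\setminus Y$ and applying the single-element bounds term by term.

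First I would extract the pointwise content of each curvature. From $\curvf=1-\min_{v}f(v\mid V\setminus\{v\})/f(v)$ we get $f(v\mid V\setminus\{v\})\geq(1-\curvf)f(v)$, and chaining this with submodularity of $f$ (which gives $f(v\mid Y)\geq f(v\mid V\setminus\{v\})$ whenever $v\notin Y$, and $f(v\mid X)\leq f(v)$) yields $f(v\mid Y)\geq(1-\curvf)f(v\mid X)$ for all $X\subseteq Y\subset V$ with $v\notin Y$. Dually, $\curvg=1-\min_{v}g(v)/g(v\mid V\setminus\{v\})$ gives $g(v\mid V\setminus\{v\})\leq\tfrac{1}{1-\curvg}g(v)$, and supermodularity of $g$ (which gives $g(v\mid Y)\leq g(v\mid V\setminus\{v\})$ and $g(v\mid X)\geq g(v)$) yields $g(v\mid Y)\leq\tfrac{1}{1-\curvg}g(v\mid X)$ in the same range; when $\curvg=1$ the right-hand sides of \ref{lemma:curvInequ2} and \ref{lemma:curvInequ4} are read as $+\infty$ and there is nothing to prove.

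Next I would assemble \ref{lemma:curvInequ1} and \ref{lemma:curvInequ2}. The only point that needs a (very small) idea is that $\curvf$ governs $f$ and $\curvg$ governs $g$ but not conversely, so in the mixed function $h=f+g$ the ``other'' term must be handled by monotonicity alone: for \ref{lemma:curvInequ1}, supermodularity together with $0\leq 1-\curvf\leq 1$ and $g(v\mid X)\geq 0$ gives $g(v\mid Y)\geq g(v\mid X)\geq(1-\curvf)g(v\mid X)$, and adding this to the $f$-bound above gives $h(v\mid Y)\geq(1-\curvf)h(v\mid X)$; symmetrically, for \ref{lemma:curvInequ2}, submodularity together with $\tfrac{1}{1-\curvg}\geq 1$ gives $f(v\mid Y)\leq f(v\mid X)\leq\tfrac{1}{1-\curvg}f(v\mid X)$, and adding this to the $g$-bound gives \ref{lemma:curvInequ2}.

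Finally, for \ref{lemma:curvInequ3} and \ref{lemma:curvInequ4} I would enumerate $X\setminus Y=\{x_1,\dots,x_m\}$ and telescope $h(X\mid Y)=h(X\cup Y)-h(Y)=\sum_{i=1}^{m}h\!\left(x_i\mid Y\cup\{x_1,\dots,x_{i-1}\}\right)$; each conditioning set here contains $Y$ and omits $x_i$, so \ref{lemma:curvInequ1} bounds the $i$-th summand below by $(1-\curvf)h(x_i\mid Y)$ and \ref{lemma:curvInequ2} bounds it above by $\tfrac{1}{1-\curvg}h(x_i\mid Y)$; summing over $i$ yields the two claims, and the degenerate case $X\subseteq Y$ is trivial since both sides vanish. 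The ``main obstacle'' here is really just bookkeeping --- keeping straight that submodular marginals shrink while supermodular marginals grow as the conditioning set enlarges, so that each invocation of a single-element bound stays within its range of validity; I do not anticipate any deeper difficulty.
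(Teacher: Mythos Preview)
Your proposal is correct and follows essentially the same argument as the paper: extract the pointwise inequalities from the curvature definitions, combine them with sub-/supermodularity to get the single-element bounds \ref{lemma:curvInequ1} and \ref{lemma:curvInequ2} (handling the ``other'' term via monotonicity and the fact that $1-\curvf\le 1\le \tfrac{1}{1-\curvg}$), and then telescope over $X\setminus Y$ to obtain \ref{lemma:curvInequ3} and \ref{lemma:curvInequ4}. Your treatment of the degenerate cases $\curvg=1$ and $X\subseteq Y$ is slightly more explicit than the paper's, but the substance is identical.
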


\begin{proof}
	\begin{enumerate}[label=(\roman*)]
		\item $\curv_f = 1-\min_{v\in V}\frac{f(v|V\setminus\{v\})}{f(v)}$,
		therefore, $f(v|V\setminus\{v\})\geq (1-\curv_f)f(v)$ for all $v$.
		
		So we have $f(v|Y)\geq f(v|V\setminus\set{v})\geq (1-\curv_f)f(v)\geq  (1-\curv_f)f(v|X)$ and $g(v|Y)\geq g(v|X) \geq (1-\curv_f)g(v|X)$ for all $X\subseteq Y\subset V$ and $v\notin Y$. Therefore,  $h(v|Y)\geq (1-\curv_f)h(v|X)$ for all $X\subset Y\subseteq V$ and $v\notin Y$.
		\item  $\curvg = 1-\min_{v\in V}\frac{g(v)}{g(v|V\setminus\{v\})}$,
		therefore, $g(v|V\setminus\{v\})\leq \frac{1}{1-\curvg}g(v)$ for all $v$.
		
		So we have $g(v|Y)\leq g(v|V\setminus\set{v}) \leq  \frac{1}{1-\curvg}g(v)\leq \frac{1}{1-\curvg}g(v|X)$ and $f(v|Y)\leq f(v|X) \leq  \frac{1}{1-\curvg}f(v|X)$ for all $X\subseteq Y\subset V$ and $v\notin Y$. Therefore,  $h(v|Y)\leq \frac{1}{1-\curvg}h(v|X)$ for all $X\subset Y\subseteq V$ and $v\notin Y$.
		\item  Let $X\setminus Y$ be $\set{v_1,\ldots,v_m}$, $h(X|Y)=\sum_{i=1,2,\ldots,m}h(v_i|Y\cup\set{v_1}\cup\set{v_2}\cup\ldots\cup\set{v_{i-1}})\geq (1-\curvf)\sum_{i=1,2,\ldots,m}h(v_i|Y)=(1-\curvf)\sum_{v\in X\setminus Y}h(v|Y)$, according to \ref{lemma:curvInequ1}.
		\item  Let $X\setminus Y$ be $\set{v_1,\ldots,v_m}$,
		$h(X|Y)=\sum_{i=1,2,\ldots,m}h(v_i|Y\cup\set{v_1}\cup\set{v_2}\cup\ldots\cup\set{v_{i-1}})\leq \frac{1}{1-\curvg}\sum_{i=1,2,\ldots,m}h(v_i|Y)=\frac{1}{1-\curvg}\sum_{v\in X\setminus Y}h(v|Y)$, according to \ref{lemma:curvInequ2}.
	\end{enumerate}
\end{proof}
\label{sec:generalGreed}

\lemgeneralGreed*
\begin{proof}

  For any $i=0,\ldots,k-1$, we focus on the term $h(X^*\cup S_i)$.
	
  According to basic set operations,
  \begin{align}
    h(X^*\cup S_i)
    &=h(S_i)+h(X^*|S_i)= \\
    &= \sum_{j:s_j\in S_i\setminus X^*}
      a_j+\sum_{j:s_j\in S_i\cap X^*}a_j+h(X^*\setminus S_i|S_i).
  \end{align}

  We can also express $h(X^*\cup S_i)$ the other way around,
  $h(X^*\cup S_i)= h(X^*)+h(S_i\setminus X^*|X^*)$. Since we already
  have an order of element in $S_i$, we can expand
  $h(S_i\setminus X^*|X^*)$. When adding $s_j$ to the context 
  $S_{j-1}\cup X^*$ we do not need add elements that are not in
  $S_i\setminus X^*$
  since $h(s_j|X^* \cup S_{j-1}) = 0$ if $s_j \in X^*$. 
  Thus, using Lemma~\ref{lemma:curvInequ}~\ref{lemma:curvInequ1},
  we get
  $h(X^*\cup S_i)= h(X^*)+\sum_{j: s_j\in S_i\setminus X^*}
  h(s_j|X^*\cup S_{j-1} ) \geq h(X^*)+(1-\curv_f)\sum_{j:
    s_j\in S_i\setminus X^*} h(s_j|S_{j-1})$.
  
  Therefore, we have inequalities on both sides of $h(X^*\cup S_i)$ and
  we can join them together to get:
  \begin{align}
    h(X^*)+(1-\curv_f)\sum_{j: s_j\in S_i\setminus X^*} a_j
    & \leq \curv_f \sum_{j:s_j\in S_i\setminus X^*} a_j+\sum_{j:s_j\in S_i\cap X^*}a_j+h(X^*\setminus S_i|S_i), \\
    \intertext{ or }
    h(X^*)&\leq \curv_f\sum_{j: s_j\in S_i\setminus X^*} a_j+ \sum_{j: s_j\in S_i\cap X^*} a_j +h(X^*\setminus S_i|S_i).
  \end{align}
  
\end{proof}

\section{Proof of Lemma~\ref{lemma:bound_card1}}
\label{sec:bound_card1}

\lemboundcard*
\begin{proof}
  According to Lemma~\ref{lemma:generalGreed}, for all $i = 0, \dots, k-1$, 
  \begin{align}
    h(X^*)\leq \curv_f \sum_{j:s_j\in S_i\setminus X^*} a_j+\sum_{j:s_j\in S_i\cap X^*}a_j + h(X^*\setminus S_i|S_i)
  \end{align}
  
  Since \textsc{GreedMax} is choosing the feasible element with the
  largest gain, we have $h(v|S_i)\leq h(s_{i+1}|S_i)$ for all feasible
  $v\in X^*$. In fact, all elements in $X^*\setminus S_j$ are feasible
  since we are considering a cardinality constraint and
  $|S_j|\leq k-1$. Also, 
  $|X^*\setminus S_j| = |X^*| -|X^*\cap S_j| = k -|X^*\cap S_j|$, and
  therefore from Lemma~\ref{lemma:generalGreed}
  and Lemma~\ref{lemma:curvInequ}\ref{lemma:curvInequ4},
  we have that:
  \begin{align}
    h(X^*)
    \leq \curv_f \sum_{j: s_j \in S_i \setminus X^*}a_j
    + \sum_{j: s_j \in S_i\cap X^*}a_j
    +\frac{k-|X^*\cap S_i|}{1-\curv^g}a_{i+1}
    \label{eqn:constr_with_intersection_w_opt}
  \end{align}
  
  Next, we use a nested lemma, Lemma~\ref{lemma:carinality_inequ}, to get
  Equation~\eqref{eqn:bound_card1}.

  \begin{lemma}
    \label{lemma:carinality_inequ}
    Given any chain of solutions $\emptyset = S_0 \subset S_1 \subset \ldots \subset S_k$ such that $|S_i|=i$, if the following holds for all
    $i=0\dots k-1$:
    \begin{align}
      h(X^*)\leq \alpha \sum_{j:s_j\in S_i\setminus X^*} a_j+\sum_{j:s_j\in S_i\cap X^*}a_j
      +\frac{k-|X^*\cap S_i|}{1-\beta}a_{i+1}
      \label{eqn:carinality_inequ_ant}
    \end{align}
    where $0\leq \alpha,\beta\leq 1$ and $s_i=S_i\setminus S_{i-1}$,
    and $a_i=h(s_i|S_{i-1})$, then we have
    \begin{align}
      h(S_k)\geq\frac{1}{\alpha}\left[1-\left(1-\frac{(1-\beta)\alpha}{k}\right)^k\right]h(X^*).
      \label{eqn:carinality_inequ_conseq}
    \end{align}
    
  \end{lemma}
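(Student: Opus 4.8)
The plan is to reduce the claim to a one-inequality-per-step decay estimate for a suitable potential, following (and extending to the supermodular curvature) the curvature argument of \cite{conforti1984submodular}. Write $\gamma:=1-\beta$, let $b_i:=|X^*\cap S_i|$ be the number of ``optimal'' elements among $s_1,\dots,s_i$, and split $h(S_i)=\theta_i+\eta_i$ with $\theta_i:=\sum_{j:\,s_j\in S_i\setminus X^*}a_j$ and $\eta_i:=\sum_{j:\,s_j\in S_i\cap X^*}a_j$. Put $\Psi_i:=\alpha\theta_i+\eta_i$, which is precisely the sum of the first two terms on the right-hand side of \eqref{eqn:carinality_inequ_ant}; rearranging \eqref{eqn:carinality_inequ_ant} then gives, for every $i=0,\dots,k-1$, the per-step estimate $a_{i+1}\ge\gamma\bigl(h(X^*)-\Psi_i\bigr)/(k-b_i)$. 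I will also use that every $S_i$ has size $i\le k$, hence is feasible, so $h(S_i)\le h(X^*)$.

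I would then study the potential $\Phi_i:=h(X^*)-\alpha\,h(S_i)$: it satisfies $\Phi_0=h(X^*)$, $\Phi_{i+1}=\Phi_i-\alpha a_{i+1}$ (so it is non-increasing), and $\Phi_i\ge(1-\alpha)h(X^*)$ by feasibility. Since $h(X^*)-\Psi_i=\Phi_i-(1-\alpha)\eta_i$, the per-step estimate reads $\Phi_{i+1}\le\Phi_i-\tfrac{\alpha\gamma}{k-b_i}\bigl(\Phi_i-(1-\alpha)\eta_i\bigr)$, and the target \eqref{eqn:carinality_inequ_conseq} is exactly $\Phi_k\le(1-\tfrac{\alpha\gamma}{k})^k h(X^*)$. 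Whenever no element of $X^*$ has been chosen yet ($b_i=\eta_i=0$, and $k-b_i=k$) this collapses to the clean geometric recurrence $\Phi_{i+1}\le(1-\tfrac{\alpha\gamma}{k})\Phi_i$; hence in the special case $X^*\cap S_k=\emptyset$ one simply iterates this over all $k$ steps, and this case also exhibits tightness.

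The main obstacle is the steps at which an element of $X^*$ is selected ($b_i>0$, $\eta_i>0$): there the curvature factor $\alpha$ in $\Psi_i$ is lost, but this is offset because the denominator $k-b_i$ has shrunk, so $a_{i+1}$ is larger. To make the offset precise I would carry the auxiliary nonnegative quantity $\xi_i:=h(X^*)-\Psi_i=\Phi_i-(1-\alpha)\eta_i=\bigl(h(X^*)-h(S_i)\bigr)+(1-\alpha)\theta_i$, which obeys $\xi_{i+1}\le(1-\tfrac{\alpha\gamma}{k-b_i})\xi_i$ at a non-$X^*$ step and $\xi_{i+1}\le(1-\tfrac{\gamma}{k-b_i})\xi_i$ at an $X^*$ step; every such factor is at most $1-\tfrac{\alpha\gamma}{k}$ (for the $X^*$ case this uses $k\ge\alpha(k-b_i)$), and at the $\ell$-th $X^*$ step one has $b_i=\ell-1$ so the $X^*$-step factors telescope. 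Using the identity $\alpha\,h(S_k)=h(X^*)-\xi_k-(1-\alpha)\eta_k$ — which turns the goal into $\xi_k+(1-\alpha)\eta_k\le(1-\tfrac{\alpha\gamma}{k})^k h(X^*)$ — together with the fact that at $X^*$ steps $\eta_k$ is assembled exactly from the drops $\xi_i-\xi_{i+1}$, the statement reduces to a (somewhat delicate) numerical inequality that balances the telescoped $X^*$-step contribution against the correction $(1-\alpha)\eta_k$; I expect essentially all of the effort to be there, and organizing it as an induction on $m:=|X^*\cap S_k|$ with the $X^*$-disjoint case ($m=0$) of the previous paragraph as the base case looks cleanest. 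The degenerate case $\alpha=0$ is covered by the limiting convention already used in Lemma~\ref{lemma:simpleboundisworse}, under which the right-hand side of \eqref{eqn:carinality_inequ_conseq} is read as $\lim_{\alpha\to0^+}\tfrac1{\alpha}[1-(1-\tfrac{\alpha\gamma}{k})^k]=\gamma$.
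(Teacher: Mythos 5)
There is a genuine gap: your argument stops exactly where the lemma's real content begins. The setup is fine and matches the truth of the matter — the rearrangement $a_{i+1}\ge\gamma\bigl(h(X^*)-\Psi_i\bigr)/(k-b_i)$, the identity $h(X^*)-\Psi_i=\Phi_i-(1-\alpha)\eta_i$, the reformulation of the goal as $\Phi_k\le(1-\tfrac{\alpha\gamma}{k})^kh(X^*)$, and the clean geometric recursion in the case $X^*\cap S_k=\emptyset$ are all correct (that special case is precisely the paper's computation of $T(\emptyset)$). But the whole difficulty of the lemma is showing that the general case is no worse than this disjoint case, i.e., that the steps at which an element of $X^*$ is chosen — where the constraint only supplies the weaker coefficient $1$ instead of $\alpha$ and the correction $(1-\alpha)\eta_k$ appears in the target — cannot spoil the $(1-\tfrac{\alpha\gamma}{k})^k$ rate. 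You explicitly defer this (``I expect essentially all of the effort to be there'') and offer only a sketch: bound every factor by $1-\tfrac{\alpha\gamma}{k}$, ``telescope'' the $X^*$-step factors, and induct on $m=|X^*\cap S_k|$. That sketch does not close the argument: bounding $\xi_k$ by the product of factors says nothing about $\xi_k+(1-\alpha)\eta_k$, since $\eta_k$ is assembled from the (uncontrolled-from-above) drops of $\xi$ at $X^*$ steps; the factors $1-\tfrac{\gamma}{k-\ell+1}$ do not actually telescope for $\gamma<1$; and the proposed induction on $m$ is not obviously well-posed, because deleting an $X^*$ element changes the coefficients $\tfrac{k-b_i}{1-\beta}$ in all subsequent hypothesis inequalities, so the reduced instance is not an instance of the same lemma with smaller $m$ in any stated way. (A smaller issue: you also invoke $h(S_i)\le h(X^*)$, which is true in the greedy application but is not among the lemma's hypotheses; the paper's proof never needs it.)

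For comparison, the paper handles exactly this balancing step by recasting the problem as a family of linear programs $T(B)$ parameterized by the index set $B$ of $S_k\cap X^*$, and proving $T(B)\ge T(\emptyset)$ through three exchange arguments: a perturbation of the $a$-vector making it nondecreasing on $B$-indices without increasing the objective or violating any constraint, then shifting the indices of $B$ rightward, then peeling off the smallest index of a terminal block — each step verified constraint by constraint. That multi-step exchange argument is the substance you would need to supply in place of the ``delicate numerical inequality''; until you either prove that inequality or reproduce an equivalent worst-case-is-$B=\emptyset$ argument, the proposal is a plausible strategy rather than a proof.
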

  
  \begin{proof}
    Assume $\beta < 1$ as otherwise the bound is immediate.
    This lemma aims to show one inequality
    (Equation~\eqref{eqn:carinality_inequ_conseq}) based on $k$
    other inequalities
    (Equation~\eqref{eqn:carinality_inequ_ant}) with $k$
    variables $a_1,\ldots,a_k$. In the inequalities,
    $s_j\in S_k\cap X^*$ and $s_j\in S_k\setminus X^*$ are not
    treated identically. We will, in fact, correspondingly treat
    the indices of the elements in $S_k\cap X^*$ as parameters. Recall,
    $S_k = \{ s_1, s_2, \dots, s_k\}$ is an ordered set and
    $S_k$ has index set $\{1, 2, \dots, k\} = [k]$.  Let
    $B=\{b_1,\ldots,b_p\} \subseteq [k]$ be the set of indices
    of $S_k\cap X^*$ where $b_i$'s are in increasing order (so
    $b_i < b_{i+1}$) and $p=|S_k \cap X^*|$. Thus, $i \in B$ means
    $s_i \in S_k\cap X^*$, and $i \in [k] \setminus B$ means
    $s_i \in S_k\setminus X^*$.

    Our next step is to view this problem as a set of
    parameterized (by $B$) linear programming problems. Each
    linear programming problem is characterized as finding:
    \begin{align}
      T(B) = T(b_1,b_2,\dots,b_p)=\min_{a_1,a_2,\dots,a_k}\sum_{i=1}^k a_i
    \end{align}
    subject to
    \begin{align}
      h(X^*)\leq \alpha \sum_{j \in [i-1] \setminus B_{i-1}} a_j+\sum_{j\in B_{i-1}}a_j
      +\frac{k-|B_{i-1}|}{1-\beta}a_{i}, \text{ for } i =1,\dots,k.
      \label{eqn:rhs_cnstr}
    \end{align}
    where $B_i=\{b\in B|b\leq i\}$.
    In this LP problem, $a_1,\ldots, a_k$ are non-negative
    variables, and $k,\alpha, \beta $ and $h(X^*)$ are
    fixed values.  Different indices $B = \set{ b_1,b_2,\dots,b_p }$
    define different LP problems, and our immediate goal
    is to show that
    $T(\emptyset)\leq T(b_1,b_2,\dots,b_p)$ for all
    $b_1,b_2,\dots,b_p$ and $p\geq 0$. In the below, we
    will use $\cnstr(B,a,i)$ to refer to the right hand
    side of Equation~\eqref{eqn:rhs_cnstr} for a given set
    $B$, vector $a$, and index $i=1, \dots, k$, and hence
    Equation~\eqref{eqn:rhs_cnstr} becomes
    $h(X^*) \leq \cnstr(B,a,i)$ for $i=1, \dots, k$.
    Note that $\cnstr(B,a,i)$ is linear in $a$ with
    non-negative coefficients.
    
    {\bf First}, we show that there exists an optimal
    solution\footnote{Optimal in this case means for the LP,
      distinct from the optimal BP
      maximization solution $X^*$.} $a_1,a_2,\ldots,a_k$ s.t.\ for
    all $r\leq k-1 $ with $r\in B$, $a_r\leq a_{r+1}$. Let
    $r_a$ be the largest $r$ s.t.\ $r\leq k-1$, $r\in B$
    and $a_r>a_{r+1}$; if such an $r$ does not exist, let
    $r_a=0$. Our goal here is equivalent to showing, for any
    feasible solution $\set{a_i}_{i=1}^k$ with $r_a>0$, we can
    create another feasible solution $\set{a_i'}_{i=1}^k$ with
    $r_{a'}=0$ and the objective
    $\sum_{i=1}^k a'_i\leq \sum_{i=1}^k a_i$. We do
    this iteratively, by in each step showing that
    for any feasible solution $\set{a_i}_{i=1}^k$ with $r_a>0$, we can create another feasible solution $\set{a_i'}_{i=1}^k$ with $r_{a'}\leq r_a-1$ and with objective having $\sum_{i=1}^k a'_i\leq \sum_{i=1}^k a_i$. Repeating this argument leads ultimately
    to $r_{a'} = 0$.

    Let $r=r_a$ for notational simplicity. Consider the $r^{\text{th}}$ and
    $(r+1)^{\text{th}}$ inequalities:
    \begin{align}
      \label{eq:line:1}
      h(X^*)&\leq \alpha \sum_{j \leq [r-1] \setminus B_{r-1}} a_j+\sum_{j\in B_{r-1}}a_j
              +\frac{k-|B_{r-1}|}{1-\beta}a_{r} \\
      \intertext{ and }
      h(X^*)&\leq \alpha \sum_{j \leq [r-1] \setminus B_{r-1}} a_j+\sum_{j\in B_{r-1}}a_j
              +a_r 
              +\frac{k-|B_{r-1}|-1}{1-\beta}a_{r+1}.
              \label{eq:line:2}
    \end{align}
    Since $a_r>a_{r+1}$ and $\beta < 1$,
    $\frac{k-|B_{r-1}|}{1-\beta}a_r >
    \frac{k-|B_{r-1}|-1}{1-\beta} a_{r+1} + a_r$ and thus
    the r.h.s.\ of Eq.~\eqref{eq:line:1} is always
    strictly larger than the r.h.s.\ of
    Eq.~\eqref{eq:line:2}. 

    Therefore,
    Eq.~\eqref{eq:line:1} is not tight and it is possible to decrease $a_r$ a little bit. Let $\set{a'_i}$ be another set of solutions with $a'_i=a_i$ for all $i=1,2,\ldots,r-1$; $a'_r=a_r-\epsilon$; $a'_i=a_i+\epsilon/(k-|B_r|)$ for $i=r+1,r+2,\ldots,k$ and $\epsilon = \left[1-\frac{1-\beta}{k-|B_{r-1}|}\right]\left[a_r-a_{r+1}\right]$. It is easy to see that $\epsilon> 0$ since $|B_{r-1}|\leq r-1\leq k-2$.
    
    Below, we show that $a'_r \leq a'_{r+1}$.  First, we
    notice $\sum_{i=1}^k a'_i\leq \sum_{i=1}^k a_i$ since
    $|B_r|\leq r$ and
    $-\epsilon + \frac{k-r}{k-|B_r|}\epsilon \leq 0$.
    Next, we want to show that $a'_1, a'_2, \ldots, a'_k$
    is still feasible. As mentioned above, define
    $\cnstr(B,a,i) =\alpha \sum_{j \in [i-1] \setminus
      B_{i-1}} a_j+\sum_{j\in B_{i-1}}a_j
    +\frac{k-|B_{i-1}|}{1-\beta}a_{i}$.
    
    We examine if $h(X^*)\leq \cnstr(B,a',i)$ or not for each $i$.
    
    \begin{enumerate}
    \item For $i = 1,2,\ldots, r-1$, $\cnstr(B,a',i)=\cnstr(B,a,i)\geq h(X^*)$.
    \item For $i = r$,
      $\cnstr(B,a',r)-\cnstr(B,a,r+1)
      = \frac{k-|B_{r-1}|}{1-\beta}\left[a_r-\epsilon\right]-a_r-\frac{k-|B_{r-1}|-1}{1-\beta}a_{r+1}
      \geq \frac{k-|B_{r-1}|}{1-\beta}\left[a_r-a_{r+1}\right]+a_{r+1}-a_{r}-\frac{k-|B_{r-1}|}{1-\beta}\epsilon
      = \left[\frac{k-|B_{r-1}|}{1-\beta}-1\right]
           \left[a_r-a_{r+1}\right]
           -\frac{k-|B_{r-1}|}{1-\beta}\left[1-\frac{1-\beta}{k-|B_{r-1}|}\right]
           \left[a_r-a_{r+1}\right]
           =0$. So $\cnstr(B,a',r)\geq \cnstr(B,a,r+1)\geq h(X^*)$.
    \item For $i = r+1,r+2,\ldots, k$, we compare $\cnstr(B,a',i)$ with $\cnstr(B,a,i)$. Note that $\cnstr(B,a,i) =\alpha \sum_{j \in [i-1] \setminus B_{i-1}} a_j+\sum_{j\in B_{i-1}}a_j
      +\frac{k-|B_{i-1}|}{1-\beta}a_{i}$ and it has three terms, that we consider individually.
      \begin{enumerate}

      \item The first term is not decreasing since $a'_i<a_i$ only if
        $i=r$, but $r\notin [i-1] \setminus B_{i-1}$. The increment therefore is
        at least 0.

      \item $a_r$ appears in the second term once, and when changing to $a'_r$, will
        decreases the value by $\epsilon$. However,
        $a'_j=a_j+\epsilon/(k-|B_r|)$ for all
        $j = r+1, r+2, \ldots, k$. Immediately, we notice the number
        of such $a_j$ in the second term is
        $\sum_{j\in B_{i-1}, j\geq r+1}1 = \sum_{j\in B_{i-1}, j\notin
          B_{r}}1 = |B_{i-1}| - |B_r|$. So the increment of the second
        term is
        $\frac{|B_{i-1}| - |B_r|}{k-|B_r|}\epsilon -\epsilon $.
      \item The third term is increased by
        $\frac{k-|B_{i-1}|}{(1-\beta)(k-|B_r|)}\epsilon\geq
        \frac{k-|B_{i-1}|}{k-|B_r|}\epsilon$.
      \end{enumerate}
      So overall, the increment is greater than or equal to
      $\frac{|B_{i-1}| - |B_r|}{k-|B_r|}\epsilon -\epsilon +
      \frac{k-|B_{i-1}|}{k-|B_r|}\epsilon \geq 0$, which means $\cnstr(B,a',i)\geq \cnstr(B,a,i)= h(X^*)$.
    \end{enumerate}
    Therefore, $\set{a'_i}_{i=1}^k$ still satisfies all the
    constraints but $\sum_{i=1}^{k}a'_k\leq \sum_{i=1}^{k}a_k$. Note that $r=r_a=\max(\set{r'\in B|r'\leq k-1,a_{r'}>a_{r'+1}})$ by definition. And we have $a'_i =a_i+\frac{\epsilon}{k-|B_r|}$ for $i = r+1,r+2,\ldots,k$. Therefore, $a'_{r'}\leq a'_{r'+1}$ for any $r'\in B\cap [r+1, k-1]$. Next we calculate $a'_{r}-a'_{r+1} = a_{r}-a_{r+1}-\epsilon-\frac{\epsilon}{k-|B_r|} = \left[1 -\left(1+\frac{1}{k-|B_r|}\right)\left(1-\frac{1-\beta}{k-|B_{r-1}|}\right)\right]\left(a_r-a_{r+1}\right)\leq 0$. Therefore, $a'_{r'}\leq a'_{r'+1}$ for all $r'\in B\cap [r, k-1]$ which implies $r_{a'}\leq r_{a}-1$. 
    
    By repeating the above steps, we can get a feasible solution
    $\set{a''}$ s.t.\ $r_{a''}=0$ and
    $\sum_{i=1}^{k}a''_k\leq \sum_{i=1}^{k}a_k$. Therefore, from any
    optimal solution $\set{a_i}_{i=1}^k$, we can also create another optimal
    solution $\set{a''_i }$ s.t.\ for all $r\in B$ and $r\leq k-1$, we
    have $a''_r\leq a''_{r+1}$. W.l.o.g, we henceforth consider only the optimal
    solutions $\set{a_i}_{i=1}^k$ with $r_a = 0$.
    
    {\bf Second}, we assume $r\in B$ but $r+1\notin B$ for some $r\leq
    k-1$. We can create $B'=B\cup\set{r+1}\setminus\set{r}$ and show
    for all $\set{a_i}_{i=1}^k$ that satisfies the constraints of $B$,
    $\set{a_i}_{i=1}^k$  will also still satisfy the constraints of $B'$ by
    showing that $\cnstr(B',a,i) \geq \cnstr(B,a,i)$ for $i = 1, \dots, k$.
    We consider each $i$ in turn.
    \begin{enumerate}
    \item For $i = 1,2,\ldots, r$, $\cnstr(B,a,i) = \cnstr(B',a,i)$.
    \item If $i = r+1$, we notice $a_r$ moves from the second term to the
      first, and the third term is changed from
      $\frac{k-|B_r|}{1-\beta}a_{r+1}$ to
      $\frac{k-|B'_r|}{1-\beta}a_{r+1}$ and $|B'_r| = |B_r|-1$. So the
      overall value is increased by
      $\cnstr(B',a,i) - \cnstr(B,a,i)
       = \frac{1}{1-\beta}a_{r+1} - (1-\alpha)a_{r}\geq 0$ since
      $a_r\leq a_{r+1}$.

    \item For $i = r+2,r+3,\ldots, k$, we notice that the third term does
      not change but $a_r$ moves from the second term to the first and
      $a_{r+1}$ moves from the first term to the second. Thus, the value
      is increased by $\cnstr(B',a,i) - \cnstr(B,a,i) = (1-\alpha)(a_{r+1}-a_r)\geq 0 $ since
      $a_r\leq a_{r+1}$.
      
    \end{enumerate}
    Since $\cnstr(B',a,i) \geq \cnstr(B,a,i)$ for $i = 1, \dots, k$,
    we have that $T(B')\leq T(B)$. Therefore, if we see two indexes in
    $B$ differ by at least 2, we can increase the first index by 1. 
    Repeating this process, we get 
    \begin{align}
      T(B)\geq T(k-p+1,k-p+2,\ldots,k).
    \end{align} 

    {\bf Third}, if $\set{a_i}_{i=1}^k$ satisfies the constraints
    for $B=\{k-p+1,k-p+2,\ldots,k\}$and $a_{k-p+1}\leq \ldots\leq a_k$, then 
    $\set{a_i}_{i=1}^k$ also must satisfy
    the constraints for $B'=\{k-p+2,k-p+3,\ldots,k\}$. We 
    show that  $\cnstr(B',a,i) \geq \cnstr(B,a,i)$ for $i = 1, \dots, k$
    and again consider each $i$ in turn.
    \begin{enumerate}
    \item For $i = 1,2,\ldots, k-p+1$, $\cnstr(B',a,i) = \cnstr(B,a,i)$.
    \item For $i = k-p+2,k-p+3,\ldots,k$, the change of the value is
      $\cnstr(B',a,i) - \cnstr(B,a,i) = (\alpha-1)a_{k-p+1}+\frac{1}{1-\beta}a_i$. We notice that
      $a_i\geq a_{k-p+1}$ since $k-p+1,k-p+2,\ldots,i-1\in B$. Thus,
      we have $\cnstr(B',a,i) - \cnstr(B,a,i) \geq 0$ and
      correspondingly $T(B)\geq T(B')$.
    \end{enumerate}
    
    Repeating this process, therefore, we have that
    \begin{align}
      T(B)\geq T(\emptyset)
    \end{align}

    Next, we calculate $T(\emptyset)$. For $B=\emptyset$ and any
    feasible
    (for Equation~\eqref{eqn:rhs_cnstr})
    $a_1,a_2,\ldots,a_k$, let $T_i$ be the partial sum 
    $T_i = \sum_{j= 1}^{i}a_j$ for $i=0, \dots, k$ with $T_0 = 0$. We get, for $i = 1, \dots, k$ that
    $h(X^*) \leq \cnstr(\emptyset,a,i)$ which takes the form
    \begin{align}
      h(X^*)&\leq \alpha \sum_{j\in [i-1]} a_j+\frac{k}{1-\beta} a_{i}, \\
      \intertext{which is the same as}
      h(X^*)&\leq \alpha T_{i-1}+\frac{k}{1-\beta} (T_i -T_{i-1}), \\
      \intertext{and also, after multiplying both sides by $(1-\beta)/k$ and then
      adding $(1/\alpha)h(X^*)$ to both sides,  the same as}
      \frac{1}{\alpha}h(X^*)-T_i&\leq \left(1-\frac{(1-\beta)\alpha}{k}\right)\left(\frac{1}{\alpha}h(X^*)-T_{i-1}\right). \\
    \end{align}
    We then repeatedly apply all $k$ inequalities from $i=k, \dots, 1$, to get
    \begin{align}
      \frac{1}{\alpha}h(X^*)-T_k&\leq \left(1-\frac{(1-\beta)\alpha}{k}\right)^k\left(\frac{1}{\alpha}h(X^*)-T_0)\right) \\
      \intertext{yielding}
      T_k&\geq\frac{1}{\alpha}\left[1-\left(1-\frac{(1-\beta)\alpha}{k}\right)^k\right]h(X^*).
    \end{align}
    
    Let
    $\gamma
    =\frac{1}{\alpha}\left[1-\left(1-\frac{(1-\beta)\alpha}{k}\right)^k\right]
    $.  So, for $B=\emptyset$ and any feasible $a_1,a_2,\ldots,a_k$,
    we have $\sum_{j=1}^{k}a_j = T_k \geq \gamma h(X^*)$. Therefore
    $T(\emptyset) = \min_{a_1,a_2,\ldots,a_k}\sum_{i=1}^k a_i \geq
    \gamma h(X^*)$.
    
    Recall that $T(B)\geq T(\emptyset)$ for all $B$. We thus have,
    with $a_i = h(s_i | \{ s_1, \dots, s_{i-1} \})$ (which are also
    feasible for Equation~\eqref{eqn:rhs_cnstr} with
    $B$ again the indices of $S_k \cap X^*$, which follows
    from Equation~\ref{eqn:constr_with_intersection_w_opt}),
    $h(S_k) = \sum_{i}^{k} a_i \geq T(B) \geq T(\emptyset) \geq \gamma
    h(X^*)$.
  \end{proof}
  Lemma~\ref{lemma:carinality_inequ} yields 
  Equation~\eqref{eqn:bound_card1} which shows the
  result for Lemma~\ref{lemma:bound_card1}.
\end{proof}

\section{Proof of Theorem~\ref{theo:greedyPmatroid}}
\label{sec:greedyPmatroid}
\theogreedyPmatroid*
\begin{proof}
  The greedy procedure produces a chain of solutions
  $S_0,S_1,\ldots,S_k$ such that $|S_i|=i$, $S_i\subset S_{i+1}$,
  where $k$ is the iteration after which any addition to $S_k$ is
  infeasible in at least one matroid, and hence\footnote{There should
    be no confusion here that the $k$ we refer to in this section is not
    any cardinality constraint, but rather the size of the greedy solution.}
  $|\hat X| = k$. Immediately, we notice all $S_i$ and $X^*$ are
  independent sets for all $p$ matroids.

  For $j = 0, \dots, k$ and $l = 1, \dots, p$, there exist at least
  $\max(|X^*|-j,0)$ elements $v \in X^* \setminus S_j$ s.t.\
  $v \notin S_j$ and $S_j + v \in \mathcal I(M_l)$, which follows from
  the third property in the matroid definition.  Therefore, for
  $j = 0, \dots, k-1$, $l = 1, \dots, p$, there are at most $j$
  elements of $X^*$ that can not be added to $S_j$.

  We next consider the intersection of all $p$ matroids.  For
  $j = 0, \dots, k$, since in each matroid, there are at most $j$
  elements of $X^*$ that cannot be added to $S_j$, the total possible
  number of elements for which there exists at least one matroid
  preventing us from adding to $S_j$ is $jp$ (the case that the $p$
  sets of at most $j$ elements are disjoint). In other words, there
  are at least $\max(|X^*|-pj,0)$ different $v\in |X^*|$ s.t.\
  $v\notin S_j$,
  $S_j\cup{\set{v}} \in \mathcal{M}_1\cap\ldots\cap\mathcal{M}_p$.

  We claim $|X^*|\leq pk$ as otherwise, by setting $j=k$ above, there
  are still feasible elements in $X^*\setminus S_k$ in the context of
  $S_k$, which indicates that $\textsc{GreedMax}$ has not ended at
  iteration $k$. Therefore, we are at liberty to create $pk-|X^*|$
  dummy elements, that are always feasible (i.e., independent in all
  matroids) and that have $h(v|X) = 0$ for all $X\subset V$ for each
  dummy $v$. We add these dummy elements to $X^*$ and henceforth
  assume, w.l.o.g., that $|X^*| = pk$.

  We next form an ordered $k$-partition of
  $X^*=X_0\cup X_1\cup \ldots\cup X_{k-1}$. We show below that it is possible to form this partition
  so that it has the following properties for $j=0,\dots, k-1$:
  \begin{enumerate}
  \item $|X_j|=p$;
  \item for all $v\in X_j$, we have $v\notin S_j$ and $S_j\cup{\set{v}} \in \mathcal{M}_1\cap\ldots\cap\mathcal{M}_p$ (i.e., $v$ can be added to $S_j$);
  \item and for all $j$ s.t.\ $s_{j+1}\in X^*\cap S_k$, we have $s_{j+1}\in X_j$.
  \end{enumerate}
  Immediately, we notice that property 3 is compatible with property 2.
  
  We construct this partition in an order reverse from that of the
  greedy procedure, that is we create $X_j$ from $j=k-1$ to $0$.
  Recall that, at each step with index $j=k-1,k-2,\ldots,0$, there are
  at least $|X^*| - pj = p(k-j)$ elements in $X^*$ can be added to $S_j$.

  When $j=k-1$, there are at least $p$ candidate
  elements\footnote{Elements that can be added at the given step.} in
  $X^*$ and we choose $p$ of them to form $X_{k-1}$. The element $s_k$
  can be added to $S_{k-1}$ because the greedy algorithm only adds
  feasible elements and hence, if also $s_k \in X^*$, then $s_k$ can
  be one of the elements in $X_{k-1}$. Thus, abiding property 3 above,
  we place $s_k\in X_{k-1}$.

  Continuing, for $j=k-2,k-3,\ldots,0$, there are at least $p$
  candidate elements in
  $X^*\setminus\left[X_{k-1}\cup X_{k-2}\cup\ldots\cup X_{j+1}\right]$
  since $|X_{k-1}\cup X_{k-2}\cup\ldots\cup
  X_{j+1}|=p(k-j-1)$ and we choose $p$ of them for $X_j$. 
  Moreover, if $s_{j+1}\in X^*$, we notice
  $s_{j+1}$ may be one of those candidate elements because of the greedy
  properties and since 
  $s_{j+1}\notin \left[X_{k-1}\cup X_{k-2}\cup\ldots\cup
    X_{j+1}\right]$ (this follows because $s_{j+1}\in S_{j'}$ for
    any $j'\geq j+1$, so $s_{j+1}$ is not a candidate element at step
    $j'= k-2,\ldots,j+1$). Similar to what was done in step $k-1$, we again choose $p$
    candidate elements to form $X_j$, and, if $s_{j+1}\in X^*$, we
    place $s_{j+1}\in X_{j}$.
	
    We then arrive at partition
    $X^*=X_0\cup X_1\cup \ldots\cup X_{k-1}$ with the aforementioned
    three properties.
	
    Next, we order the elements in $X^*=\set{x_1,\ldots,x_{pk}}$ where
    $\set{x_{jp+1},x_{jp+2},\ldots,x_{(j+1)p}}=X_j$ for
    $j=0,1,\ldots,k-1$.  According to greedy, we have
    $h(x_{jp+t}|S_j)\leq h(s_{j+1}|S_j) = a_{j+1}$ for
    $t = 1,\ldots, p$. Recall that $a_i$ is defined to be
    $ h(s_{i}|S_{i-1})$. Moreover, if $x_{jp+t} \in X^*\cap S_k$, we
    have $x_{jp+t} = s_{j+1}$.
	
	According to Lemma~\ref{lemma:generalGreed} above,
	\begin{align}
		h(X^*)&\leq \curv_f \sum_{j:s_j\in S_k\setminus X^*} a_j+\sum_{j:s_j\in S_k\cap X^*}a_j+h(X^*\setminus S_k|S_k)\\ \label{line:3}
		&= \curv_f \sum_{j:s_j\in S_k\setminus X^*} a_j+\sum_{j:s_j\in S_k\cap X^*}h(s_j|S_{j-1}) + \sum_{i = 1}^{pk}h(x_i|S_k\cup\set{x_1}\ldots\cup\set{x_{i-1}}){\bf 1}_{\set{x_i\in X^*\setminus S_k}}\\ \label{line:4}
		&\leq \curv_f \sum_{j:s_j\in S_k\setminus X^*} a_j+\frac{1}{1-\curv^g}\sum_{j:s_j\in S_k\cap X^*}h(s_j|S_{j-1}) + \frac{1}{1-\curv^g}\sum_{j = 0}^{k-1}\sum_{t=1}^p h(x_{jp+t}|S_j){\bf 1}_{\set{x_{jp+t}\in X^*\setminus S_k}}\\
		&= \curv_f \sum_{j:s_j\in S_k\setminus X^*} a_j\nonumber\\ \label{line:5} &\quad+\frac{1}{1-\curv^g}\left[\sum_{j:s_j\in S_k\cap X^*}h(s_j|S_{j-1}) + \sum_{j = 0}^{k-1}\sum_{t=1}^p h(x_{jp+t}|S_j)-\sum_{j = 0}^{k-1}\sum_{t=1}^p h(x_{jp+t}|S_j){\bf 1}_{\set{x_{jp+t}\in X^*\cap S_k}}\right]\\
		&= \curv_f \sum_{j:s_j\in S_k\setminus X^*} a_j\nonumber\\ \label{line:6}
		&\quad+\frac{1}{1-\curv^g}\left[\sum_{j:s_j\in S_k\cap X^*}h(s_j|S_{j-1}) + \sum_{j = 0}^{k-1}\sum_{t=1}^p h(x_{jp+t}|S_j)-\sum_{j:s_j\in S_k\cap X^*}h(s_j|S_{j-1})\right]\\\label{line:7}
		&\leq \curv_f \sum_{j:s_j\in S_k\setminus X^*} a_j+ \frac{1}{1-\curv^g}\sum_{j = 0}^{k-1}\sum_{t=1}^p a_{j+1}\\
                      &\leq \left[\curv_f + \frac{p}{1-\curv^g}\right]\sum_{j = 0}^{k-1} a_{j+1}
                        =  \left[\curv_f + \frac{p}{1-\curv^g}\right]h(\hat X)
	\end{align}
	where ${\bf 1}_{\set{\text{condition}}}$ equals 1 if the condition is met and is 0 otherwise.  Line \ref{line:3} to \ref{line:4} hold because of Lemma~\ref{lemma:curvInequ}~\ref{lemma:curvInequ2}. As for Line \ref{line:5} to \ref{line:6}, we notice  $x_{jp+t} = s_{j+1}$ if $x_{jp+t}\in X^*\cap S_k $. Line \ref{line:6} to line \ref{line:7} follows via the greedy procedure. 
	
	Therefore, we have our result which is
	\begin{align}
		h(\hat{X})&\geq \frac{1-\curv^g}{(1-\curv^g)\curv_f + p}h(X^*).
	\end{align}
\end{proof}
\section{Proof of Theorem~\ref{theo:SubCard}}
\label{sec:SubCard}
\theoSubCard*

\begin{proof}
	
	If \textsc{SemiGrad} is initialized by empty set, we need to calculate the semigradient of $g$ at $\emptyset$. By definition, we have
	
	\begin{align}
		m_{g,\emptyset, 1}(Y) = m_{g,\emptyset, 2}(Y) = \sum_{v \in Y} g(j)
	\end{align}
	
	So in the first step of \textsc{SemiGrad}, we are optimizing $h'(X)=f(X)+m_g(X) = f(X) + \sum_{v \in X} g(v)	$ by \textsc{GreedMax}. We will focus elusively on this step as later iterations can only improve the objective value.
	
	According to Lemma~\ref{lemma:generalGreed}, we have 
	\begin{align}
		h(X^*)\leq \curv_f \sum_{j:s_j\in S_i\setminus X^*} h(s_{j}|S_{j-1})+\sum_{j:s_j\in S_i\cap X^*}h(s_{j}|S_{j-1}) + h(X^*\setminus S_j|S_j)
	\end{align}
	
	Since \textsc{GreedMax} is choosing the feasible element with the largest gain, in the semigradient approximation we have $h'(v|S_i)\leq h'(s_{i+1}|S_i)$ instead of $h(v|S_i)\leq h(s_{i+1}|S_i)$. We get:
	\begin{align}
		h(X^*\setminus S_j|S_j) &= f(X^*\setminus S_j|S_j) + g(X^*\setminus S_j|S_j)\\
		&\leq \sum_{v\in X^*\setminus S_j}f(v|S_j) + \frac{1}{1-\curv^g}\sum_{v\in X^*\setminus S_j} g(v)\\
		& \leq \frac{1}{1-\curv^g}\sum_{v\in X^*\setminus S_j} h'(v|S_j) \\
		& \leq \frac{1}{1-\curv^g}\sum_{v\in X^*\setminus S_j} h'(s_{j+1}|S_j) \\
		& = \frac{1}{1-\curv^g}\sum_{v\in X^*\setminus S_j} f(s_{j+1}|S_j) + g(s_{j+1}) \\
		& \leq \frac{1}{1-\curv^g}\sum_{v\in X^*\setminus S_j} f(s_{j+1}|S_j) + g(s_{j+1}|S_j) \\
		& = \frac{|X^*\setminus S_j|}{1-\curv^g} h(s_{j+1}|S_j)
	\end{align}
And hence,
	\begin{align}
		h(X^*)\leq \curv_f \sum_{j : s_j \in S_i\setminus X^*}a_i+\sum_{j: s_j \in S_i\cap X^*}a_i
		+\frac{k-|X^*\cap S_i|}{1-\curv^g}s_{i+1}.
	\end{align}
We can then use Lemma~\ref{lemma:carinality_inequ} to $h$ to finish the proof.
\end{proof}
\section{Proof of Theorem~\ref{theo:SubPmatroid}}
\label{sec:SubPmatroid}
\theoSubPmatroid*

\begin{proof}
	If \textsc{SemiGrad} is initialized by empty set, we need to calculate the semigradient of $g$ at $\emptyset$. By definition, we have
	
	\begin{align}
		m_{g,\emptyset, 1}(Y) = m_{g,\emptyset, 2}(Y) = \sum_{v \in Y} g(j)
	\end{align}
	
	So in the first step of \textsc{SemiGrad}, we are optimizing
	$h'(X)=f(X)+m_g(X) = f(X) + \sum_{v \in X} g(v) $ by
	\textsc{GreedMax}. We will focus on this step.
	
	According to Lemma~\ref{lemma:generalGreed}, we have 
	\begin{align}
		h(X^*)\leq \curv_f \sum_{j:s_j\in S_i\setminus X^*} h(s_{j}|S_{j-1})+\sum_{j:s_j\in S_i\cap X^*}h(s_{j}|S_{j-1}) + h(X^*\setminus S_j|S_j)
	\end{align}
	
	We then follow the proofs of
        Theorems~\ref{theo:greedyPmatroid} and~\ref{theo:SubCard}.
        The only difference is that
        in Theorem~\ref{theo:greedyPmatroid} we have
        $h(v|S_i)\leq h(s_{i+1}|S_i)$ for all feasible $v$, but in
        this proof, we have $h'(v|S_i)\leq h'(s_{i+1}|S_i)$, which
        does not affect the proof as shown in the proof of
        Theorem~\ref{theo:SubCard}.
\end{proof}

\section{Proof of Theorem~\ref{theo:hardnessCard}}
\label{sec:hardnessCard}
\begin{lemma}
	\label{lemma:hardnessexample}
	(lemma 4.1 from \cite{svitkina2008submodular}) Let $R$ be a random subset of $V$ of size $\alpha = \frac{x\sqrt{n}}{5}$, let $\beta=\frac{x^2}{5}$, and let $x$ be any parameter satisfying $x^2=\omega(\ln n)$ and such that $\alpha$ and $\beta$ are integer. Let $f_1(X)=\min(|X|,\alpha)$ and $f_2(X)=\min(\beta+|X\cap \bar{R}|,|X|,\alpha)$. Any algorithm that makes a polynomial number of oracle queries has probability $n^{-\omega(1)}$ of distinguishing the functions $f_1$ and $f_2$.
\end{lemma}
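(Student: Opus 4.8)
The plan is to reproduce the symmetrization (indistinguishability) argument of \cite{svitkina2008submodular}. First I would pin down exactly when the two functions disagree. Since $f_2$ is obtained from $f_1$ by inserting one extra term into the minimum, $f_2(X)\le f_1(X)$ for every $X$, with equality unless that extra term strictly binds. Writing $|X\cap\bar R|=|X|-|X\cap R|$, the binding condition $\beta+|X\cap\bar R|<\min(|X|,\alpha)$ rearranges to $|X\cap R|>\tau(X)$, where $\tau(X):=\beta+\max(|X|-\alpha,0)$. Thus a query $X$ can ever expose the difference between $f_1$ and $f_2$ only when $|X\cap R|$ exceeds $\tau(X)$; I call such a query \emph{revealing}, and the whole proof reduces to showing that a polynomial-query algorithm issues a revealing query only with probability $n^{-\omega(1)}$ over the random choice of $R$.

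Next I would remove adaptivity by a coupling. Run the algorithm against $f_1$ and against $f_2$ on the same internal coins. As long as no revealing query has been issued, every returned answer is identical in the two runs, so the two transcripts — and in particular the sequences of queried sets — coincide up to the first revealing query. Because $f_1$ does not depend on $R$ at all, the query sequence produced in the $f_1$-run is a collection $X_1,\dots,X_q$ with $q=\mathrm{poly}(n)$ that is \emph{independent of $R$}. Hence the event that the algorithm ever distinguishes $f_1$ from $f_2$ is contained in the event $\{\exists t:\ |X_t\cap R|>\tau(X_t)\}$ for this $R$-independent query set, which a union bound reduces to a per-query estimate for a fixed $X$.

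The per-query estimate is a hypergeometric tail bound. For fixed $X$, the variable $|X\cap R|$ is hypergeometric with mean $\mu=|X|\alpha/n$. A query can be revealing only if $\tau(X)<\min(|X|,\alpha)\le\alpha$, since otherwise $|X\cap R|\le\min(|X|,\alpha)\le\tau(X)$ forbids revelation; as $\tau(X)\ge\beta+|X|-\alpha$, this gives $|X|<2\alpha-\beta<2\alpha$ when $|X|>\alpha$, while trivially $|X|\le\alpha<2\alpha$ otherwise. Either way $|X|<2\alpha$, so $\mu<2\alpha^2/n=2x^2/25$, whereas $\tau(X)\ge\beta=x^2/5$; thus the threshold exceeds the mean by an additive margin $\tau(X)-\mu=\Omega(x^2)$ with $\mu=O(x^2)$. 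A Chernoff/Hoeffding bound for the hypergeometric distribution then yields $\Pr_R[|X\cap R|>\tau(X)]\le e^{-\Omega(x^2)}$. Unioning over the $q=\mathrm{poly}(n)$ queries gives total distinguishing probability at most $\mathrm{poly}(n)\cdot e^{-\Omega(x^2)}$, and since $x^2=\omega(\ln n)$ this is $n^{-\omega(1)}$, as claimed.

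The main obstacle is the coupling step: I must argue carefully that, conditioned on no revealing query having occurred so far, the algorithm's next query is independent of $R$, so that the concentration bound (stated for a \emph{fixed} $X$) legitimately applies to the adaptively chosen $X_t$. The secondary delicate point is verifying that the revealing threshold sits a constant multiplicative factor above the mean in \emph{both} branches of the minimum — the step that forces the exponent to be $\Omega(x^2)$ rather than merely $\Omega(1)$ — which is exactly where the choices $\alpha=x\sqrt n/5$ and $\beta=x^2/5$ must be used quantitatively.
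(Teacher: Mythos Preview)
The paper does not prove this lemma itself; it is quoted directly from \cite{svitkina2008submodular}, with only an informal sketch appearing later inside the proof of Theorem~\ref{theo:hardnessCard}. Your proposal correctly reconstructs that argument and is essentially the same: identify the distinguishing event as $|X\cap R|>\tau(X)$, couple to the $R$-independent $f_1$-transcript to remove adaptivity, then apply hypergeometric concentration and a union bound over the polynomially many queries.

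The one difference worth flagging is in the concentration step. The paper's sketch (following \cite{svitkina2008submodular}) first observes that $\Pr_R[f_1(X)\neq f_2(X)]$ is monotone in $|X|$ on each side of $\alpha$ and hence maximized at $|X|=\alpha$, so it suffices to bound the single case $\mu=\alpha^2/n=\beta/5$, where multiplicative Chernoff immediately gives $e^{-\Omega(\beta)}=e^{-\Omega(x^2)}$. Your uniform-in-$|X|$ route also works, but be careful which inequality you invoke: from ``$\tau(X)-\mu=\Omega(x^2)$ and $\mu=O(x^2)$'' the additive Hoeffding bound $\exp(-2t^2/|X|)$ only yields exponent $\Omega(x^4/\alpha)=\Omega(x^3/\sqrt n)$, which is \emph{not} $\omega(\ln n)$ for small $x$. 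You need a Bernstein/multiplicative form that exploits $\sigma^2\le\mu=O(x^2)$ to extract the $\Omega(x^2)$ exponent; with that choice your argument goes through.
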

\hardnessCard*
\begin{proof}
	$\curv^g=\alpha=0$ is trivial since no algorithm can do better than 1.
	
	The case when $\curv^g=1$ can be proven using the example in
	Lemma~\ref{lemma:bpNotslovable}. $g(X)=\max\{|X|-k,0\}$, except for
	a special set $R$ where $g(R)=0.5$ and $|R|=k$.
	
	For the other case, we prove this result using the hardness
        construction from~\cite{goemans2009approximating,
          svitkina2008submodular}. The intuition is to construct two
        supermodular functions, $g$ and $g'$ both with curvature $\curv^g$
        which are indistinguishable\footnote{Indistinguishable means
          for all sets $X$ that the algorithm evaluates, $g(X)=g'(X)$.}
        with high probability in polynomially many function
        queries. Therefore, any polynomial time algorithm to
        maximize $g(X)$ can not find $\hat{X}\subseteq V$ with
        $|\hat{X}|\leq k$ s.t.\ $g(\hat{X})>\max_{X\leq k}g'(X)$;
        otherwise we will have
        $g(\hat{X})>\max_{X\leq k}g'(X)\geq g'(\hat{X})$ which
        contradicts the indistinguishability. In this case, the
        approximate ratio
        $\frac{g(\hat{X})}{OPT}\leq \frac{\text{OPT}'}{\text{OPT}}$
        where $\text{OPT}=\max_{X\leq k}g(X)$ and
        $\text{OPT}'=\max_{X\leq k}g'(X)$. The guarantee, by
        definition, is the best case approximate ratio and, thus no
        greater than $\frac{\text{OPT}'}{\text{OPT}}$. If any
        polynomial algorithm has a guarantee greater than
        $\frac{\text{OPT}'}{\text{OPT}}$, then it contradicts the
        information theoretic hardness. This is meaningful
        if $\text{OPT}' < \text{OPT}$.
	
	Let $g(X)=|X|-\beta\min\{\gamma + |X \cap \bar{R}|, |X|,
        \alpha\}$ and $g'(X)=|X|-\beta\min\{|X|,\alpha\}$ , where
        $R \subseteq V$ is a random set of cardinality $\alpha$. Let
        $\alpha = x\sqrt{n}/5$ and $\gamma = x^2/5$ and let $x$ be any
        parameter satisfying $x^2=\omega(\ln n)$ s.t.\ $\gamma<\alpha$
        are positive integers and
        $\alpha\leq \frac{n}{2}-1$.\footnote{These examples and the
          specific parameters like 5 are adopted from
          \cite{svitkina2008submodular}.} $g$ and $g'$ are modular
        minus submodular functions, which implies
        supermodularity. Monotonicity follows from
        $g(v|X),g'(v|X)\geq 0$. Also, $\text{OPT} = \alpha - \beta \gamma
        > \text{OPT}' = \alpha(1-\beta)$.
	
	Next, we calculate the supermodular curvature. $g(\emptyset)=g'(\emptyset)=0$. $g(v)=g'(v)=1-\beta$ for all $v\in V$ since $\alpha,\gamma\geq 1$. $g(V\setminus \set{v})=g'(V\setminus \set{v})=n-1-\beta\alpha$ and $g(V)=g'(V)=n-\beta\alpha$ for all $v\in V$ since $\alpha\leq\frac{n}{2}-1$.	Therefore, $\curv^g=1-\min_{v\in V}\frac{g(v)}{g(v|V-{v})}=\beta$. $\curv^{g'}=1-\min_{v\in
		V}\frac{g'(v)}{g'(v|V-{v})}=\beta$. So $g$ and $g'$ are monotone
	non-decreasing supermodular functions with curvature $\beta$. Let $f(X)=0$ for all $X$ and $h(X)=f(X)+g(X)=g(X)$ is the objective BP function.
	
	Any algorithm that uses a polynomial number of queries can
        distinguish $g$ and $g'$ with probability only
        $n^{-\omega(1)}$ according to
        lemma~\ref{lemma:hardnessexample}~\cite{svitkina2008submodular}.
        More precisely, $g(X)> g'(X)$\footnote{Note that
          $g(X)\geq g'(X)$ for all $X\subseteq V$ for any $\alpha$ and
          $\gamma$.} if and only if $\gamma+|X\cap \bar{R}|< |X|$ and
        $\gamma+|X\cap \bar{R}| < \alpha$. It is equivalent with
        asking $|X\cap R|>\gamma$ and $|X\cap
        \bar{R}|<\alpha-\gamma$. Moreover,
        $\text{Pr}(g(X)\neq g'(X))$, where randomness is over random
        subsets $R \subseteq V$ of size $\alpha$, is maximized when
        $|X|=\alpha$~\cite{svitkina2008submodular}.  In this case, the
        two conditions become identical, and since
        $|X| = |X \cap \bar R| + |X \cap R|$, the condition 
        $g(X)> g'(X)$ happens when only $|X\cap R|>\gamma$. Intuitively,
        $E|X\cap R|= \frac{\alpha^2}{n}=\frac{\gamma}{5}$ where
        $R$ is a random set (of arbitrary size) and $X$ is an arbitrary
        but fixed set of size $\alpha$. So $|X\cap R|$ is
        located in small interval around $\frac{\gamma}{5}$
        and is hardly ever be larger than $\gamma$ for large $n$
        according to the law of large numbers. While this is only the
        intuition, a similar reasoning in \cite{svitkina2008submodular} offers
        more details.
	
        Therefore, the output $\hat{X}$ of any polynomial algorithm
        must satisfies $g(\hat{X})\leq \max_{X\leq k}g'(X)$ since,
        otherwise the algorithm actually distinguishes the two
        function at $\hat{X}$,
        $g(\hat{X})>\max_{X\leq k}g'(X)\geq g'(\hat{X})$. The
        approximate ratio
        $\frac{g(\hat{X})}{\text{OPT}}\leq
        \frac{\text{OPT}'}{\text{OPT}}=\frac{\alpha-\curv^g
          \alpha}{\alpha-\curv^g \gamma}=(1-\curv^g)\frac{1}{1-\curv^g
          \sqrt{\frac{\omega(\log n)}{n}}}\leq
        1-\curv^g+\epsilon$. Therefore, the guarantee of any
        polynomial algorithm, that, by definition, the best case
        approximate ratio, is no greater than $1-\curv^g+\epsilon$ for
        any $\epsilon>0$ since, otherwise contradicts the information
        theoretic hardness.

\end{proof}

\section{Proof of Theorem~\ref{theo:hardnessPmatroid}}
\label{sec:hardnessPmatroid}
\hardnessPmatroid*
\begin{proof}
	Consider the $p$-set problem~\cite{hazan2006complexity}, let $R$ be the maximum disjoint sets of
	these $p$ sets.  No polynomial algorithm can find a larger number of disjoint sets
	than $O(\frac{\ln p}{p})|R|$~\cite{hazan2006complexity}. Let $k =O(\frac{\ln p}{p})|R|$.  So
	no polynomial algorithm can find a feasible set with size larger than $k$
	unless P=NP.
	
	Let $h(X)=(1-\beta)|X|+\beta\max\{|X|-k,0\}$. It is easy to check
	that $h$ is a BP function with $f=0$ and $g = h$ with
	$\curv^g=\beta$.
	
	Therefore, the output $\hat{X}$ of any polynomial algorithm
        that maximizes $h$ under the $p$-set constraint (expressible via the intersection
        of $p$ matroids) must satisfy that $|X|\leq k$ and, therefore,
        $h(\hat{X})\leq (1-\beta)k$ unless P=NP. But
        $h(X^*)\geq h(R)=(1-\beta)|R|+\beta (|R|-k)= |R| -\beta k$.
	
	Thus, the approximate ratio
        \begin{align}
          \frac{h(\hat{X})}{h(X^*)}
          &\leq \frac{(1-\beta)k}{|R|-\beta k}
          \leq 	\frac{(1-\beta)O(\frac{\ln p}{p})}{1-\beta O(\frac{\ln p}{p})}
          \leq \frac{(1-\beta)O(\frac{\ln p}{p})}{\frac{1}{2}}=(1-\curv^g)O(\frac{\ln p}{p}).
        \end{align}
	since the denominator $1-\beta O(\frac{\ln p}{p})\geq \frac{1}{2}$ asymptotically and $2O(\frac{\ln p}{p})=O(\frac{\ln p}{p})$.
\end{proof}

\section{Submodularity Ratio and Generalized Curvature}
\label{sec:comp-subm-ratio}

In this section, we compare the pair $\curvf,\curvg$ of curvatures
with the submodularity ratio
\cite{das2011submodular,2017arXiv170302100B}.  We also show that both
the generalized curvature introduced in \cite{2017arXiv170302100B} and
the submodularity ratio \cite{das2011submodular} appears to be hard to
compute in general under the oracle model.  Lastly, we compare the
pair $\curvf,\curvg$ with another notion of curvature introduced in
\cite{DBLP:journals/corr/SviridenkoW13}, showing a simple inequality
relationship in general and a correspondence when $h=g$.

\subsection{Submodularity ratio}

The submodularity ratio is defined as
\begin{align}
\gamma_{U,k}(h) = \min_{L\subseteq U, S:|S|\leq k, S\cap
  L=\emptyset}\frac{\sum_{x\in S}h(x|L)}{h(S|L)}
\label{eqn:submod_ratio_expanded}
\end{align}
with $U \subseteq V$ and $1 \leq k \leq |V|=n$, and typically we
consider $\gamma_{V,n}$.  We can establish a simple lower bound of the
submodularity ratio based on the supermodular curvature as follows.
\begin{lemma}
$\gamma_{V,n}(h)\geq 1-\curvg$ when $h=f+g$.
\end{lemma}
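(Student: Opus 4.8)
The plan is to bound the submodularity ratio $\gamma_{V,n}(h)$ directly by unpacking its definition and using the fact that $h = f+g$ with $f$ submodular and $g$ supermodular. Fix any disjoint $L, S \subseteq V$ with $S \cap L = \emptyset$. We need to show $\sum_{x \in S} h(x|L) \geq (1-\curvg)\, h(S|L)$. Since $h = f + g$, we have $h(S|L) = f(S|L) + g(S|L)$ and likewise $\sum_{x\in S} h(x|L) = \sum_{x\in S} f(x|L) + \sum_{x \in S} g(x|L)$, so it suffices to handle the submodular and supermodular parts separately.

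First I would handle the submodular term. Because $f$ is submodular, $f(S|L) \le \sum_{x \in S} f(x|L)$ (superadditivity of marginals for submodular functions), and since $0 \le 1-\curvg \le 1$ this gives $\sum_{x\in S} f(x|L) \ge f(S|L) \ge (1-\curvg) f(S|L)$. Next I would handle the supermodular term $g$. This is exactly where Lemma~\ref{lemma:curvInequ}, part~\ref{lemma:curvInequ4} is designed to be applied — but note that lemma is stated for the BP function $h$; the same argument specialized to a pure supermodular function (i.e.\ with $\curvf = 0$, or simply re-running the one-line chain-rule argument) gives $g(S|L) \le \frac{1}{1-\curvg} \sum_{x \in S} g(x|L)$, equivalently $\sum_{x \in S} g(x|L) \ge (1-\curvg)\, g(S|L)$. (The key ingredient there is $g(x|L) \ge g(x|V\setminus\{x\}) \ge (1-\curvg) g(x)$ followed by a telescoping expansion of $g(S|L)$ along an arbitrary ordering of $S$.)

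Adding the two inequalities yields $\sum_{x\in S} h(x|L) \ge (1-\curvg)\big(f(S|L) + g(S|L)\big) = (1-\curvg)\, h(S|L)$. Since this holds for every admissible pair $(L,S)$, taking the minimum gives $\gamma_{V,n}(h) \ge 1-\curvg$, as claimed. There is no real obstacle here: the only thing to be slightly careful about is that the submodularity-ratio constraint also restricts $|S| \le n$, which is vacuous, and that we are using Lemma~\ref{lemma:curvInequ}\ref{lemma:curvInequ4} — which is literally the statement $h(X|Y)\le \frac{1}{1-\curvg}\sum_{v\in X\setminus Y} h(v|Y)$ — so one should remark that the same bound holds with $g$ in place of $h$ (it is the supermodular half of that lemma's proof, the submodular half only making the constant smaller). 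Everything else is a one-line combination.
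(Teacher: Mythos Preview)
Your argument is correct, but you are working harder than necessary. The paper's proof is a single application of Lemma~\ref{lemma:curvInequ}\ref{lemma:curvInequ4} \emph{directly to $h$}: that lemma already states $h(S|L)\le \frac{1}{1-\curvg}\sum_{x\in S}h(x|L)$ for the BP function $h=f+g$, which is exactly the inequality $\frac{\sum_{x\in S}h(x|L)}{h(S|L)}\ge 1-\curvg$ needed. You instead split $h=f+g$, bound the $f$-part via submodularity and the $g$-part via the specialization of Lemma~\ref{lemma:curvInequ}\ref{lemma:curvInequ4} to pure $g$, and then recombine. This is not wrong, just redundant: the lemma you cite was proved precisely by doing that split internally, so invoking it on $h$ saves you the decomposition.

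One genuine slip to fix: your parenthetical ``key ingredient'' has the inequalities backwards. For supermodular $g$ and $L\subseteq Y$ one has $g(x|L)\le g(x|Y)\le g(x|V\setminus\{x\})$, not $\ge$; and the curvature definition gives $g(x)\ge (1-\curvg)\,g(x|V\setminus\{x\})$, not the reverse. The correct chain (as in the proof of Lemma~\ref{lemma:curvInequ}\ref{lemma:curvInequ2}) is $g(x|Y)\le g(x|V\setminus\{x\})\le \frac{1}{1-\curvg}g(x)\le \frac{1}{1-\curvg}g(x|L)$, which then telescopes to the bound you state. Your conclusion $g(S|L)\le\frac{1}{1-\curvg}\sum_{x\in S}g(x|L)$ is right; only the sketched justification needs its signs flipped.
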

\begin{proof}
  For all $L\subseteq V$ and $S\cap L=\emptyset$, we have $\frac{\sum_{x\in S}h(x|L)}{h(S|L)}\geq 1-\curvg$
  which follows from
  Lemma~\ref{lemma:curvInequ}\ref{lemma:curvInequ4}
  Thus, $\gamma_{V,n}(h)\geq 1-\curvg$.
\end{proof}
The function $h$ is submodular if and only if $\gamma_{V,n} = 1$ so
one might hope that given a BP function $h=f+g$, that as
$\gamma_{V,n}(h) \to 1$, correspondingly $\curvg \to 0$. This is not the
case, however, as can be seen by considering the following example.

Let $a$ be an element of $V$ and define the function
$g(A)=|A\cap(V\setminus\set{a})|+\epsilon|A\cap(V\setminus\set{a})||A\cap\set{a}|$,
where $\epsilon>0$ is a very small number. Immediately, we have that
$g$ being supermodular and monotone. Also note, if $a\notin A$ then
$g(A)=|A|$; if $a\in A$ then $g(A)=(|A|-1)(1+\epsilon)$.

First, we calculate the supermodular curvature $\curvg$. We have that
$g(a)=0$ and also $g(a|V\setminus\set{a})=\epsilon(n-1)$. Therefore,
the function is fully curved, $\curvg=1$.

Next, we calculate the submodularity ratio
$\gamma_{V,n}= \min_{L,S\subset V, S\cap L=\emptyset}\frac{\sum_{v\in
    S}g(v|L)}{g(S|L)}$.  When $|S| = 1$,
$\frac{\sum_{v\in S}g(v|L)}{g(S|L)}=1$.  When $|S|\geq 2$, we have 
the following 3 cases (recall that $S \cap L = \emptyset$ so there is no forth case):
\begin{itemize}
	\item $a\in S$. $g(S|L)=g(S\cup L)-g(L)= (|S|+|L|-1)(1+\epsilon) -|L|$ is very close to $|S|-1$ for very small $\epsilon$. $\sum_{v\in S}g(v|L)=\epsilon|L|+|S|-1$, which is also very close to $|S|-1$ for small $\epsilon$. So $\frac{\sum_{v\in S}g(v|L)}{g(S|L)}\approx 1$ for small $\epsilon$.
	\item $a\in L$. $g(S|L)=g(S\cup L)-g(L) = |S|(1+\epsilon)$. $\sum_{v\in S}g(v|L)=|S|(1+\epsilon)$. So $\frac{\sum_{v\in S}g(v|L)}{g(S|L)}= 1$
	\item $a\notin S\cup L$. $g(S|L) = |S|$ and $\sum_{v\in S}g(v|L) = |S|$. Therefore, $\frac{\sum_{v\in S}g(v|L)}{g(S|L)}= 1$.
\end{itemize}

In all cases, $\frac{\sum_{v\in S}g(v|L)}{g(S|L)}$ is either 1 or very
close to 1 for small $\epsilon$, so $\gamma_{V,n}$ has only 1 as an
upper bound. That is, we have an example function that is purely
supermodular and fully curved ($\curvg=1$) for all non-zero values of
$\epsilon$, but the submodularity ratio can be arbitrarily close to 1.
If we consider a weighted sum of a submodular function and this
supermodular function, the submodularity ratio is again arbitrarily
close to 1. Therefore, there does not seem to be an immediately
accessible strong relationship between the supermodular curvature and the
submodularity ratio.

\subsection{Hardness of Generalized Curvature and Submodularity Ratio}
\label{sec:hardn-gener-curv}

The generalized curvature \citet{2017arXiv170302100B} of a
non-negative function $h$ is the smallest scalar $\alpha$ s.t.\ 
\begin{align}
h(v|S\setminus\set{v}\cup\Omega) \geq (1-\alpha) h(v|S\setminus\set{v})
\end{align}
for all $S,\Omega\subseteq V$ and $v\in S\setminus \Omega$ and this is
used, in concert with the submodularity ratio, to produce bounds such
as $\frac{1}{\alpha}(1-e^{-\alpha \gamma})$ for the greedy algorithm.
Unfortunately, the generalized
curvature is hard to compute under the oracle model.
We have the following.

\begin{lemma}
  There exists an instance of a non-negative function $h$ whose generalized curvature can not be calculated
  in polynomial time, when we have only oracle access to the function.
\end{lemma}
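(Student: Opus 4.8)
The plan is to prove hardness via an information-theoretic indistinguishability argument in the oracle model, of exactly the same flavor as the proof of Lemma~\ref{lemma:bpNotslovable} (the ``hidden random set'' technique of \cite{svitkina2008submodular}). The idea is to construct two non-negative monotone functions $h$ and $h'$ that agree on all sets an algorithm is likely to query, yet have \emph{different} generalized curvatures, so that no polynomial-query algorithm can determine $\alpha$ correctly.

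\textbf{Construction.} First I would take a baseline function $h'$ for which the generalized curvature is easy to read off — e.g.\ something modular-like or a small polynomial such as $h'(X) = |X|$ (generalized curvature $0$) or a simple supermodular/submodular function with a known value of $\alpha$. Then, for a random set $R \subseteq V$ with $|R| = k = n/2$ (or $|R|=\alpha = x\sqrt n/5$ as in Lemma~\ref{lemma:hardnessexample}), I would perturb $h'$ on a measure-zero (in the query sense) collection of sets ``near'' $R$ so that the perturbed function $h$ exhibits a large marginal drop $h(v\mid S\setminus\{v\}\cup\Omega) \ll h(v\mid S\setminus\{v\})$ for some witnessing triple $(v,S,\Omega)$ tied to $R$, forcing $\alpha(h)$ to be large (say close to $1$), while $\alpha(h') $ stays small. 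The perturbation must be done carefully so that (i) $h$ remains non-negative and monotone, and (ii) $h$ and $h'$ differ only on sets that meet $R$ in an atypical way, so that a random $R$ is essentially undetectable — this is where I would lean directly on the permutation-symmetry / concentration argument already used in the proof of Lemma~\ref{lemma:bpNotslovable} and in Appendix~\ref{sec:hardnessCard}.

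\textbf{Indistinguishability and conclusion.} The key step is: because $h' $ is symmetric under permutations of $V$ and the only sets on which $h\neq h'$ are those ``revealing'' $R$, any algorithm making $O(n^m)$ queries hits such a set with probability at most $\binom{n}{k}^{-1}\cdot O(n^m) = O(2^{-n/2+\epsilon n})$, exactly as in the proof of Lemma~\ref{lemma:bpNotslovable}. Hence with high probability every query returns $h'$'s value, so the algorithm's output is a function of $h'$ alone and cannot distinguish the case ``generalized curvature is $\alpha(h')$'' from ``generalized curvature is $\alpha(h)$''. Since these two values differ by a constant, no polynomial-query algorithm computes the generalized curvature (even approximately), which is the claim. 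I would then remark that a close variant of the same construction, choosing the perturbation to create an atypical ratio $\sum_{x\in S} h(x\mid L)$ versus $h(S\mid L)$ rather than a marginal drop, gives the analogous hardness for the submodularity ratio $\gamma_{V,n}$, matching the section heading.

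\textbf{Main obstacle.} The delicate part is engineering the local perturbation of $h'$ so that it simultaneously (a) changes the generalized curvature by a constant, (b) keeps the function monotone and non-negative, and (c) leaves $h$ and $h'$ indistinguishable — i.e.\ confines the disagreement to a family of sets of negligible probability under a random $R$ while still having the curvature-witnessing triple $(v,S,\Omega)$ be ``reachable'' by the definition of $\alpha$ only through those sets. Getting all three at once is the crux; once the construction is in hand, the counting/concentration argument is essentially identical to that already carried out in Appendix~\ref{sec:bpNotslovable}.
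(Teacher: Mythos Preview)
Your proposal is correct and follows essentially the same route as the paper: take $h'(X)=|X|$ (generalized curvature $0$), hide a random set $R$ of size $n/2$, perturb $h'$ at $R$ to force generalized curvature $1$, and invoke the permutation-symmetry counting argument from Lemma~\ref{lemma:bpNotslovable} to conclude indistinguishability. The only thing to add is that your ``main obstacle'' is much milder than you anticipate: the paper's perturbation changes $h'$ at the \emph{single} set $R$ (setting $h(R)=\tfrac{n}{2}-1$), and this already yields a witnessing triple with $S\cup\Omega=R$, $S\cap\Omega=\emptyset$, $v\in S$ giving $h(v\mid S\setminus\{v\}\cup\Omega)=0$ versus $h(v\mid S\setminus\{v\})=1$, so $\alpha_h=1$; no delicate multi-set engineering is needed.
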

\begin{proof}
	We consider a non-negative function $h':2^V\rightarrow R$ with ground set size equals $n$ ($n$ is even number). Let $h'(X)=|X|$ for all $X\subseteq V$. Let $R\subseteq V$ be an arbitrary set
	with $|R| = \frac{n}{2}$. Define another set function $h:2^V\rightarrow R$, $h(X)=h'(X)$ for all $X\subseteq V$ and $X\neq R$; $h(R)=\frac{n}{2}-1$.
	
	First, we can easily calculate the generalized curvature of
        $h'$ and $h$. We have that $\alpha_{h'}=0$ since $h'$ is a
        non-decreasing modular function. For $h$, let
        $S\cup \Omega = R$, $S\cap \Omega = \emptyset$,
        $|S|,|\Omega|\geq 1$ and $v\in S$, we have
        $h(v|S\setminus\set{v}\cup\Omega) = 0$ and
        $h(v|S\setminus\set{v}) = 1$. Therefore $\alpha=1$ is the
        smallest scalar s.t.
        $h(v|S\setminus\set{v}\cup\Omega) \geq (1-\alpha)
        h(v|S\setminus\set{v})$. So, as a conclusion of this part, the
        generalized curvature of the two functions are not the same.
	
	 Next we use a proof technique similar
	 to~\cite{svitkina2008submodular}. Note that $h'(X)= h(X)$ if and only
	 if $X\neq R$. So for any algorithm trying to calculate $\alpha_h$, before it
	 evaluates $h(R)$, all function evaluations are the same with
	 calculating $\alpha_{h'}$.  Additionally, since $h(X)=|X|$, it
	 is permutation symmetric. Therefore, the algorithm can only do
	 random search to find $R$. If the algorithm acquires a polynomial
	 number $O(n^m)$ of sets of size $\frac{n}{2}$, the probability of finding
	 $R$ is
	 $\frac{O(n^m)}{\binom{n}{\frac{n}{2}}}\leq \frac{O(n^m)}{(n/{\frac{n}{2}})^{\frac{n}{2}}} =
	 \frac{O(n^m)}{2^{n/2}}\leq O(2^{-n/2+\epsilon n})$ for all
	 $\epsilon>0$. 
	 
	 Therefore, no algorithm can be guaranteed to distinguish $h$ and $h'$ in polynomial time. Since the generalized curvature of $h$ and $h'$ are different, neither of them can be calculated in polynomial time.
	
	\end{proof}

        Likewise, the submodularity ratio is unfortunately also hard
        to compute exactly, in the oracle model.
        
\begin{lemma}
  There exists an instance of a non-negative function $h$ whose
  submodularity ratio (Equation~\eqref{eqn:submod_ratio_expanded}) can
  not be calculated in polynomial time under only oracle access 
  to that function.
\end{lemma}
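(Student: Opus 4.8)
The plan is to mirror the proof of the preceding lemma (the one for generalized curvature) together with the indistinguishability technique of~\cite{svitkina2008submodular}: exhibit two non-negative set functions that agree on all sets except one hidden, randomly placed set, yet have different submodularity ratios, so that no algorithm making polynomially many oracle queries can tell them apart, and hence none can compute $\gamma_{V,n}$ exactly.

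Concretely, I would take a ground set $V$ with $|V| = n$ even, and let $h'(X) = |X|$, which is modular and therefore has $\gamma_{V,n}(h') = 1$. Then pick a uniformly random $R \subseteq V$ with $|R| = n/2$ and define $h(X) = h'(X)$ for all $X \ne R$ and $h(R) = |R| + 1 = n/2 + 1$. One checks $h$ is non-negative, and in fact monotone non-decreasing: the only changed marginals are $h(R) - h(R\setminus\{v\}) = 2 \ge 0$ for $v\in R$ and $h(R\cup\{v\}) - h(R) = 0 \ge 0$ for $v \notin R$. Evaluating the submodularity-ratio quotient at $S = R$, $L = \emptyset$ gives $\frac{\sum_{x\in R} h(\{x\})}{h(R)} = \frac{n/2}{n/2+1} < 1$, since each singleton is distinct from $R$; hence $\gamma_{V,n}(h) \le \frac{n/2}{n/2+1} < 1 = \gamma_{V,n}(h')$, so the two submodularity ratios differ.

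For indistinguishability, $h$ and $h'$ coincide except on the single set $R$, and $h'$ (being $|X|$) is invariant under permutations of $V$, so any algorithm that has not yet queried $R$ sees exactly the transcript it would see when run on $h'$. Since $R$ is a uniformly random size-$n/2$ set, the probability that a run issuing $O(n^m)$ queries ever hits $R$ is at most $O(n^m)/\binom{n}{n/2} \le O(n^m)/2^{n/2} = 2^{-\Omega(n)}$ (queries to sets of size $\ne n/2$ reveal nothing since $h = h'$ there). Thus no polynomial-time algorithm can distinguish $h$ from $h'$ except with negligible probability; were there a polynomial-time algorithm computing the submodularity ratio exactly, running it on $h$ and on $h'$ would, with overwhelming probability, return the same value, contradicting $\gamma_{V,n}(h) \ne \gamma_{V,n}(h')$.

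The one point that needs care — and the only real difference from the generalized-curvature proof — is the \emph{direction} of the perturbation. The submodularity ratio only detects ``supermodular-type'' violations $\sum_{x\in S} h(x|L) < h(S|L)$, so a downward dip at $R$ (as used for generalized curvature) would leave $\gamma_{V,n} = 1$ unchanged and the argument would collapse; perturbing $h(R)$ \emph{upward} is what opens the gap, and one must then re-verify that monotonicity and non-negativity still hold, which they do as noted. Everything else (the random-search/symmetry argument) is identical to the previous lemma, so I expect the write-up to be short.
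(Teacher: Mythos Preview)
Your proof is correct and uses the same indistinguishability template as the paper (two functions agreeing everywhere except at one hidden random set of size $n/2$, plus the permutation-symmetry argument). The one substantive difference is the direction of the perturbation, and your justification for changing it is mistaken: the paper in fact reuses the \emph{same} downward dip $h(R)=n/2-1$ from the generalized-curvature lemma, and it does drive $\gamma_{V,n}$ strictly below $1$. Taking $v_1\in R$, $v_2\notin R$, $L=R\setminus\{v_1\}$, $S=\{v_1,v_2\}$ gives $h(v_1|L)=0$, $h(v_2|L)=1$, $h(S|L)=2$, so the quotient is $1/2$. The mechanism you overlooked is that a downward dip at $R$ forces the marginal of $v_1$ to \emph{increase} when passing from $R\setminus\{v_1\}$ to the larger set $R\setminus\{v_1\}\cup\{v_2\}$ (from $0$ back up to $1$), and that increasing marginal is precisely the supermodular-type violation the submodularity ratio detects. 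Your upward bump is also valid and yields the cleaner witness $(S,L)=(R,\emptyset)$, but the assertion that the downward dip ``would leave $\gamma_{V,n}=1$ unchanged and the argument would collapse'' is false and should be removed.
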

\begin{proof}
  We consider a non-negative function $h':2^V\rightarrow R$ with
  ground set size $n$ (where $n$ is an even number). Let $h'(X)=|X|$ for
  all $X\subseteq V$. Let $R\subseteq V$ be an arbitrary set with
  $|R| = \frac{n}{2}$. Define another set function
  $h:2^V\rightarrow R$, $h(X)=h'(X)$ for all $X\subseteq V$ and
  $X\neq R$ and $h(R)=\frac{n}{2}-1$.
	
  We can easily calculate the submodularity ratio of both $h'$ and $h$
  as follows. We have that $\gamma_{V,n}(h')=1$ since $h'$ is a
  non-decreasing modular (and thus submodular) function. For $h$,
  choose an element $v_1\in R$ and another element
  $v_2 \in V\setminus R$, and let $L=R\setminus \set{v_1}$ and
  $S=\set{v_1,v_2}$. We have
  $\frac{\sum_{v\in
      S}h(v|L)}{h(S|L)}=\frac{h(R)+h(R\setminus\set{v_1}\cup\set{v_2})-2h(R\setminus\set{v_1})}{h(R\cup\set{v_2})-h(R\setminus\set{v_1})}=\frac{1}{2}$ and thus
  $\gamma_{V,n}(h) = \min_{L,S\subseteq V, S\cap
    L=\emptyset}\frac{\sum_{v\in S}h(v|L)}{h(S|L)}\leq \frac{1}{2}$.
  Therefore, the submodularity ratio of the two functions are not the
  same. Given the submodularity ratio of the two functions, we would
  be able to tell them apart.
	
  Next we use a proof technique similar
  to~\cite{svitkina2008submodular}. We have that $h'(X)= h(X)$ if and
  only if $X\neq R$. So for any algorithm trying to calculate
  $\gamma_{V,n}(h)$, before it evaluates $h(R)$, all function evaluations are
  the same with calculating $\gamma_{V,n}(h')$.  Additionally, since
  $h(X)=|X|$ is permutation symmetric, the algorithm
  can only do a random search to find $R$. If the algorithm queries a
  polynomial number $O(n^m)$ of sets of size $\frac{n}{2}$, the
  probability of finding $R$ is
  $\frac{O(n^m)}{\binom{n}{\frac{n}{2}}}\leq
  \frac{O(n^m)}{(n/{\frac{n}{2}})^{\frac{n}{2}}} =
  \frac{O(n^m)}{2^{n/2}}\leq O(2^{-n/2+\epsilon n})$ for all
  $\epsilon>0$.

  Therefore, no algorithm can guarantee to distinguish $h$ and $h'$ in
  polynomial time. Since the submodularity ratio of $h$ and $h'$ are
  different, this means that neither of them can be calculated in
  polynomial time.
	
\end{proof}

\subsection{Comparison to \citet{DBLP:journals/corr/SviridenkoW13}'s curvature}
\label{sec:comp-citetdblp:j-cur}

\citet{DBLP:journals/corr/SviridenkoW13} (in their Section 8) define a notion of curvature 
as follows:
\begin{align}
1-c=\min_j \min_{A,B\subseteq V\setminus j} \frac{h(j|A)}{h(j|B)}
\label{eqn:sviridenkow_curv}
\end{align}
We can establish a simple upper bound on $c$ based on submodular and
supermodular curvature as follows. We calculate
$\frac{h(j|A)}{h(j|B)}$ given $h=f+g$ and $\kappa_f$ and $\kappa^g$ as
follows.  First, $f(j|B)\leq f(j) \leq \frac{1}{1-\kappa_f}f(j|A)$
which follows from Lemma~\ref{lemma:curvInequ}~\ref{lemma:curvInequ1}.
Thus $\frac{f(j|A)}{f(j|B)} \geq 1-\kappa_f$.  Next,
$g(j|A)\geq g(j) \geq (1-\kappa^g)g(j|B)$ which follows from
Lemma~\ref{lemma:curvInequ}~\ref{lemma:curvInequ2}.  Thus,
$\frac{g(j|A)}{g(j|B)} \geq 1-\kappa^g$.  Therefore,
\begin{align}
	\frac{h(j|A)}{h(j|B)}&=\frac{f(j|A)+g(j|A)}{f(j|B)+g(j|B)}\geq \frac{(1-\kappa_f)f(j|B)+(1-\kappa^g)g(j|B)}{f(j|B)+g(j|B)} \\ &\geq \frac{\min(1-\kappa_f,1-\kappa^g)(f(j|B)+g(j|B))}{f(j|B)+g(j|B)} \geq \min(1-\kappa_f,1-\kappa^g)
\end{align}
Thus we have $1-c\geq \min(1-\kappa_f,1-\kappa^g)$, or
$c\leq \max(\kappa_f,\kappa^g)$.

Note that for purely supermodular functions, $\kappa_f = 0$ and,
considering Equation~\eqref{eqn:sviridenkow_curv}, we have
$c = \kappa^g$. This coincides with the $1-\curvg$ bound and hardness
for monotone supermodular functions --- compare Theorem 8.1 of
\citet{DBLP:journals/corr/SviridenkoW13} with the present paper's
item~\ref{item:one_minus_curvg_bound_card} in
Section~\ref{sec:card-constr} and Theorem~\ref{theo:hardnessCard}.

\end{document}